\newcommand{\real}{\mathbb R}
\def\st{\mathrm{\quad s.t.\quad}}
\def\Cov{\mathrm{Cov}}
\def\E{\mathrm{E}}
\def\Var{\mathrm{Var}}
\def\Prob{\mathrm{Pr}}
\def\diag{\operatorname{diag}}
\def\supp{\operatorname{supp}}
\def\iid{\operatorname{i.i.d.}}
\def\BigO{\mathcal O}
\def\onevec{\mathbf{1}}
\def\iden{\mathbf{I}}
\def\Event{\mathcal{E}}
\def\0{\mathbf{0}}
\def\r{\mathbf{r}}
\def\y{\mathbf{y}}
\def\z{\mathbf{z}}
\def\X{\mathbf{X}}
\def\W{\mathbf{W}}
\def\Z{\mathbf{Z}}
\def\C{\mathbf{C}}
\def\A{\mathcal A}
\def\T{\mathcal T}
\def\I{\mathcal I}
\def\x{\mathbf{x}}
\def\1{\mathbb{1}}
\def\tb{\beta^\ast}
\def\tg{\gamma^\ast}
\def\bvarepsilon{\boldsymbol{\varepsilon}}
\def\indi{\mathbbm{1}}
\def\top{\mathrm{top}}
\def\subw{\text{sub}\text{-Weibull}}
\newtheorem{theorem}{Theorem}
\newtheorem{lemma}[theorem]{Lemma}
\newtheorem{remark}[theorem]{Remark}
\newtheorem{corollary}[theorem]{Corollary}
\newtheorem{definition}[theorem]{Definition}
\newcommand{\vertiii}[1]
{{\left\vert\kern-0.25ex\left\vert\kern-0.25ex\left\vert #1
\right\vert\kern-0.25ex\right\vert\kern-0.25ex\right\vert}}
\newcommand{\norm}[1]{\left\lVert#1\right\rVert}
\newcommand{\snorm}[1]{\|#1\|}
\DeclareMathOperator*{\argmax}{arg\,max}
\DeclareMathOperator*{\argmin}{arg\,min}
\date{}
\title{Reluctant Interaction Modeling}
\author{Guo Yu\thanks{Department of Statistics and Applied Probability, University of California Santa Barbara, Santa Barbara, CA 93110, \href{mailto:guoyu@ucsb.edu}{guoyu@ucsb.edu}} \and Jacob Bien\thanks{Data Sciences and Operations, Marshall School of Business, University of Southern California, Los Angeles, CA 90089, \href{mailto:jbien@usc.edu}{jbien@usc.edu}} \and Ryan Tibshirani\thanks{Department of Statistics, University of California Berkeley, Berkeley, CA 94720, \href{mailto:ryantibs@berkeley.edu}{ryantibs@berkeley.edu}} }
\begin{document}
\maketitle
\abstract{Including pairwise interactions between the predictors of a regression model can produce better predicting models. However, to fit such interaction models on typical data sets in biology and other fields can often require solving enormous variable selection problems with billions of interactions. The scale of such problems demands methods that are computationally cheap (both in time and memory) yet still have sound statistical properties. Motivated by these large-scale problem sizes, we adopt a very simple guiding principle: One should prefer main effects over interactions if all else is equal.
This ``reluctance'' to fit interactions, while reminiscent of the hierarchy principle for interactions, is much less restrictive.  We design a computationally efficient method built upon this principle and provide theoretical results indicating favorable statistical properties.  Empirical results show dramatic computational improvement without sacrificing predictive accuracy.  For example, the proposed method can solve a problem with 10 billion potential interactions with 5-fold cross-validation in under 7 hours on a single CPU.}

{\bf Keywords:} large-scale interaction modeling, variable screening, sub-Weibull distribution.

\section{Introduction}

In many prediction problems, 
modeling the response as a linear or even additive function of the features (i.e., as \textit{main effects}) is not sufficient for fully capturing the complexity of the relationship.
For example, many biological phenomena involve interactions among various behaviors, exposures, and genetic factors. 
Much recent work has therefore focused on predictive modeling and variable selection within the context of a two-way interaction model,
\begin{align}
  Y = X^T \beta^\ast + Z^T \gamma^\ast + \varepsilon,
  \label{model:true}
\end{align}
where $X \in \real^p$ is a $p$-dimensional random vector of main effects, $Z = (X_1^2, X_1 X_2, \dots, X_p^2) \in \real^{(p^2 + p)/2}$ is the random vector of all pairwise interactions of $X$, and $\varepsilon$ is a zero-mean noise random variable independent of $X$. Model \eqref{model:true} extends a typical linear model (in main effects $X$), and $\gamma^\ast$ characterizes how the pairwise interactions among features relate to the response.

For both good predictive performance and interpretability, it is common to search for a model with 
only a small number of nonzero components in $\tb$ and $\tg$.  For example, one might consider solving a lasso \citep{tibshirani1996regression} using all the main effects and the interactions (the so-called \textit{all pairs lasso}, APL).
In practice, APL quickly becomes infeasible to compute as $p$ gets large. Fitting APL with standard lasso solvers requires passing the whole augmented design matrix of main effects and interactions, which takes $\BigO(np^2)$ space. Moreover, even if we compute the interactions on the fly when solving APL,  state-of-the-art coordinate descent algorithms require multiple passes over all $\BigO(p^2)$ variables until convergence. 

However, beyond its computational problems, APL also poses certain conceptual problems.  In particular, it makes no fundamental distinction between main effects and interactions.  In practice, main effects are much simpler to interpret than interactions.  Also, there are $\BigO(p)$ times as many interactions than main effects, which means that APL has many more chances to choose an interaction than a main effect.

A standard approach to addressing both of these concerns is to adopt a hierarchy principle \citep{nelder1977reformulation, peixoto1987hierarchical, hamada1992analysis} that stipulates that an interaction can only be included in the model if one or more of its constituent main effects is also included.  
Many recent methods have incorporated the hierarchy assumption into a single optimization problem \citep{efron2004least, turlach2004discussion, zhao2009composite, yuan2009structured, choi2010variable, radchenko2010variable, schmidt2010convex, bien2013lasso, lim2015learning, haris2016convex, she2018group, hazimeh2020learning}. These methods become computationally challenging for larger problem sizes.  Other methods operate in multiple stages, exploiting the hierarchy assumption to attain greater computational efficiency \citep{wu2009genome, wu2010screen, hao2014interaction, shah2016modelling, hao2018model}. However, these multi-stage hierarchy methods often require that all nonzero elements of $\beta^\ast$ be detected in an early stage, which requires strong assumptions on the design and size of main effect coefficients.  

The hierarchy assumption, on which the aforementioned methods depend, might be too limiting in some problems \citep{culverhouse2002perspective}.
\citet{bien2013lasso} and \citet{hao2017note} provided justifications for the hierarchy assumption; however, these arguments are most convincing when the features are continuous. When features are categorical, the hierarchy assumption becomes more difficult to defend. 

The goal of this paper is to design a method for both continuous and categorical features that (a) is computationally scalable to large interaction models, (b) explicitly builds in the fact that main effects are to be preferred over interactions, and (c) does not assume hierarchy. 
To this end, we introduce in this paper a new method, which is extremely simple. At the same time, we show that the method allows one to fit sparse interaction models on scales not otherwise possible, while still preserving strong statistical properties.  More specifically, our contributions are as follows:

\begin{itemize}
  \item In large-scale interaction modeling, we propose to prioritize main effects over interactions given similar prediction performance. We emphasize that this ``reluctance'' to interactions is distinct from (although reminiscent of) the common hierarchy principle.
  \item 
    Motivated by this reluctance to interactions, we introduce a method called \textit{sprinter} (for sparse reluctant interaction modeling) that allows for interaction modeling without the hierarchy principle on unprecedented problem sizes without compromising statistical performance.  In particular, sprinter fits an interaction model with $2000$ main effects about 100 times faster than APL, and it fits a problem with about 10 billion interactions with 5-fold cross-validation in under 7 hours on a single CPU.
  \item We derive finite-sample theoretical properties of sprinter, including the computational complexity analysis and prediction error rate of sprinter.
\end{itemize}

We introduce the motivation and method in Sections \ref{sec:principle} and \ref{sec:method}, respectively.  In Section \ref{sec:theory}, 
we provide a theoretical analysis of the method, and in Section \ref{sec:num} we study the method's empirical performance.

\section{Motivation} \label{sec:principle}
One main reason for the computational and conceptual limitations of APL is that it treats the $p$ main effects and the $\BigO(p^2)$ interactions equivalently. The basic premise of our method is that we give specific preference to the main effects over interactions. Specifically, we propose to fit the response as well as possible using only a set of $\BigO(p)$ features based on main effects, i.e., univariate functions of the main effects, and then only include interaction terms for what cannot be captured by main effects in an additive manner. This motivates us to adopt the following simple guiding principle for large-scale interaction modeling, which we refer to as interaction reluctance:
\begin{quote} 
  {\em One should prefer main effects over interactions if all else is equal.}
\end{quote}
In particular, on the population level we seek univariate functions $f^\ast_j \in \mathcal{F}_j$, for some function spaces $\mathcal{F}_j$ and $j \in [p]$, that jointly minimize
\begin{align}
  \E [Y - \sum_{j = 1}^p f_j(X_j)]^2
  \nonumber
\end{align}
over $f_j \in \mathcal{F}_j$ for $j \in [p]$,
and then we seek a small number of interactions that capture the remaining signal in $Y - \sum_{j = 1}^p f^\ast_j(X_j)$ for $Y$ in \eqref{model:true}. In doing so, we give explicit preference to using main effects in capturing the signal in the response.
Specific examples of $\mathcal{F}_j$ include:
\begin{itemize}
  \item  Linear basis expansion in $X_j$: take $\bm{\theta}_j \in \real^{m_j}$, $\bm{g}_j(X_j) \in \real^{m_j}$ is a set of simple basis functions, and
    \begin{align}
      f_{j}\left(X_j\right) =  \bm{\theta}_{j}^T \bm{g}_{j}(X_j).
      \label{eq:basisexpansion}
    \end{align}
 For computational consideration, $m_j = O(1)$ is needed so that in total only $O(p)$ features derived from main effects are used.
 For simplicity of presentation, we will focus throughout this paper on the case where $\bm{g}_j = \{ X_j\}$ or $\bm{g}_j = \{X_j, X_j^2\}$. 
  \item Regression trees: $f_j(X_j)$ could be a regression tree on any set of derived features from $X_j$.
  \item Neural networks: $f_j(X_j)$ could be a neural network with any set of derived features from $X_j$ that are used for the first layer nodes.
\end{itemize}


In the interaction model \eqref{model:true} that we assumed throughout this paper, the additive model introduced above is used as a computationally-efficient approximation that encodes the reluctance principle.
Why would the proposed reluctance principle be effective in interaction modeling? In many cases, main effects (and) or simple univariate functions of main effects $f_j(X_j)$ are able to act as useful handles in approximating interactions.
\begin{figure}[ht]
  \centering
  \captionsetup{width=.8\textwidth}
  \includegraphics[width = .6\textwidth]{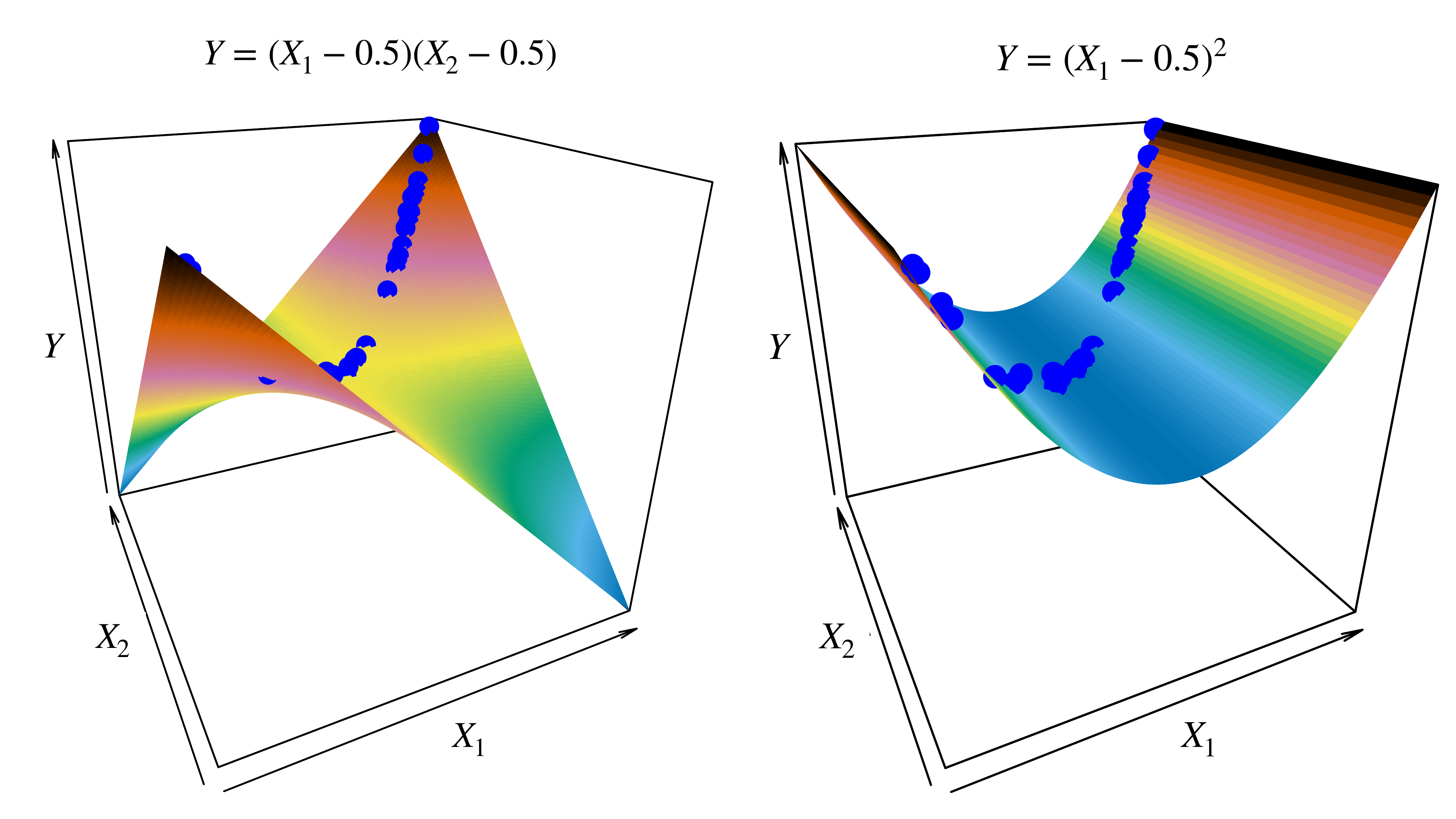}
  \caption{An example of an interaction that can be well approximated by the squared terms of either main effects when the two main effects are highly correlated. While the two surfaces are quite different, the perfect collinearity between $X_1$ and $X_2$ mean that all the observations (blue dots) fall along a line in which the interaction surface (on the left panel) can be perfectly explained by the squared-term surface involving only $X_1$ (on the right panel).}
  \label{fig:example}
\end{figure}
Figure~\ref{fig:example} shows an example where the interaction between two main effects can be very well approximated by either squared effect when these two constituent main effects are highly correlated. 
Given the widespread occurrence of feature correlation, especially in high-dimensional settings, our focus on $\bm{g}_j = \{X_j, X_j^2\}$ in \eqref{eq:basisexpansion} is particularly relevant in practice.
In a more general setting where specific interactions may be strongly correlated with linear combinations of the univariate functions of main effects $f^\ast_1(X_1), \ldots, f^\ast_p(X_p)$, we could get equivalently predictive models without using that interaction.

Such logic is not restricted to continuous predictors. Consider a simple model $Y = X_1 + X_2 + X_1 X_2$, where $X_1 = \indi_A$ and $X_2 = \indi_B$ are the indicator variables of events $A$ and $B$, respectively. Suppose further that with high probability $A \subseteq B$, so that $X_1 X_2 = \indi_{A} \indi_{B} = \indi_{A \cap B} \approx \indi_{A}$; hence, the main effect $X_1 = \indi_{A}$ can be used in place of the interaction $X_1 X_2$, i.e., $Y \approx 2 X_1 + X_2$. This main-effects-only explanation of $Y$ is simpler to understand and yet explains $Y$ nearly as well as the original model with the interaction term.

Leaning on main effects more heavily than interactions is advantageous for at least two reasons.
First, main effects are easier to interpret than interactions. 
Thus when presented with two models that predict the response equivalently, we should favor the one that relies on fewer interactions.
When putting forward a regression model with interactions, there is
an implication that the included interactions were needed. Second, we will show in this paper that prioritizing main effects (or simple functions of main effects such as squared terms) can lead to great computational savings both in terms of time and memory.  The key reason for these savings is that when $p$ is large, the total number of main effects is far smaller than the number of interactions.

We emphasize that the proposed reluctance to interactions is different from the well-known hierarchy assumption. While both simplify the search for interactions by focusing on certain main effects, our basic premise does not explicitly tie an interaction to its corresponding main effects. For example, an interaction $X_3 X_4$ could be highly correlated with a linear combination of $f^\ast_1(X_1)$ and $f^\ast_2(X_2)$, which may lead us to exclude $X_3 X_4$. On the other hand, the interaction $X_3 X_4$ will be kept if it cannot be explained by any additive functions of main effects. This logic is very different from the logic used in the hierarchy assumption.

\section{The sprinter algorithm} \label{sec:method}
In this section, we describe a new method, called sprinter, that is motivated by the idea of being reluctant to include interactions. As mentioned in Section \ref{sec:principle}, for the rest of this paper we focus on the special example where $f_j = \theta_j X_j$ or $f_j = \theta_{j1} X_j + \theta_{j2} X_j^2$.
The proposed method, for the case where $f_j(X_j) = \theta_j X_j$ in \eqref{eq:basisexpansion}, is given in Algorithm \ref{alg:first} and has three steps:
\begin{itemize}
  \item In Step 1, we fit a lasso (or any other regression method) of the response as well as possible using only main effects. This step purposely gives preference to main effects, motivated by the idea of prioritizing main effects over interactions described in Section \ref{sec:principle}.
  \item In Step 2, we perform a single pass over all interactions to identify interaction signal that was not captured in Step 1. Because each of the $\BigO(p^2)$ interactions is only computed and used once, this step requires far less time and memory than APL, which requires repeated passes over all interactions.
  \item In Step 3, we fit a lasso (or any other user-specified regression method) of the residual from Step 1 on all main effects and the interactions that were selected in Step 2. Depending on the screening criterion in Step 2, the total number of variables in Step 3 can be far smaller than $\BigO(p^2)$, leading to large computational gains over APL.
\end{itemize}

The lasso in Step 1 and Step 3 could be substituted by other regression methods. We choose the lasso as an example for subsequent analysis.
Throughout the paper, we let $q=(p^2 + p)/2$ be the total number of pairwise interactions between $p$ main effects.
On a sample level, we denote $\X = \left( \X_1,\dots, \X_p \right) \in \real^{n\times p}$ as the design matrix with 
each column $\X_j \in \real^n$ consisting of all observations of variable $X_j$ (for $j = 1,\dots,p$).
Similarly $\Z \in \real^{n \times q}$ is the sample matrix of $Z = (X_1^2, X_1 X_2, \dots, X_{p-1} X_p, X_p^2)$, and $\y \in \real^n$ is the response vector. 
We let $\overline{\mathrm{cor}}(\cdot, \cdot)$ and $\overline{\mathrm{sd}}(\cdot)$ stand for the sample correlation and the sample standard deviation respectively.

{\centering
  \begin{minipage}{\linewidth}
    \begin{algorithm}[H]
      \caption{sprinter (a lasso example)}
      \begin{algorithmic}[100]
        \Require Main effects $\X \in \real^{n \times p}$, interactions $\Z \in \real^{n \times q}$, response $\y \in \real^n$, $\eta > 0$
        \State \textbf{Step 1}:
        \State Fit a lasso of the response $\y$ on $\X$.
        \State Compute the residual $\r = \y - \X \hat{\theta}$.
        \State \textbf{Step 2}:
        \State For a tuning parameter $\eta$, screen based on the residual:
        \begin{align}
          \hat \I_\eta = \left\{\ell \in [q]: \overline{\mathrm{sd}}(\r)| \overline{\mathrm{cor}}\left( \Z_\ell,  \r \right)| > \eta \right\},
          \label{eq:Ihat}
        \end{align}
        \State \textbf{Step 3}:
        \State Fit a lasso of the residual $\r$ on $\X$ and $\Z_{\hat{\I}_\eta}$:
        \begin{align}
          (\hat{\xi}, \hat{\varphi}) \in \argmin_{\xi \in \real^p, \varphi \in \real^{|\hat{\I}_\eta|}}
          \left\{ \frac{1}{2n} \norm{\r - \X\xi - \Z_{\hat{\I}_\eta} \varphi}_2^2 + \lambda \left( \norm{\xi}_1 + \norm{\varphi}_1 \right)  \right\}.
          \label{eq:step3}
        \end{align}
        \Return $(\hat{\theta} + \hat{\xi}, \hat{\varphi})$
        \State For a new observation $\x$, the predicted response would be $\x^T (\hat{\theta} + \hat{\xi}) + \z_{\hat{\I}_\eta}^T \hat{\varphi}$.
      \end{algorithmic}
      \label{alg:first}
    \end{algorithm}
  \end{minipage}
}

For the case where $f_j = \theta_{j1} X_j + \theta_{j2} X_j^2$, we simply move the squared effects from $\Z$ to $\X$ in Algorithm \ref{alg:first}.
For more general choices of $f_j$ in \eqref{eq:basisexpansion}, $\X$ could be replaced by any design matrix of $\BigO(p)$ variables derived from main effects (in such a case, Step 2 would still only consider interactions between the original $p$ main effects).
For choices of $f_j$ other than \eqref{eq:basisexpansion}, one can simply change the fitting procedures in Step 1 and Step 3. 
Step 2 can be considered as a \textit{sure independence screening} \citep[SIS;][]{fan2008sure, barut2015conditional} of all interactions using the residual from Step 1.

These first two steps are built around the idea of being reluctant to include interactions. Given a set of highly correlated variables, the lasso tends to select just one of them. Thus, if an interaction is highly correlated with one or more main effects, APL may very well select the interaction. By contrast, sprinter explicitly prioritizes the main effects (in Step 1). An interaction will only be selected (in Step 2) if it can capture something in the signal that the main effects cannot. Finally, Step 3 jointly models the residual using main effects and the selected interactions, and the final model output is the combination of model fits in Step 1 and Step 3.

In practice, each step in Algorithm \ref{alg:first} involves a tuning parameter. In particular, there is a tuning parameter value triplet $(\lambda_1, \eta, \lambda_3)$, where $\lambda_1$ and $\lambda_3$ are tuning parameters for the lasso in Step 1 and Step 3, respectively, and $\eta$ is for the screening in Step 2.
The optimal value of these tuning parameters is usually unknown.
One approach would be to select the tuning parameter triplet using a three-dimensional cross-validation, which could be computationally prohibitive. 
As an alternative, we consider screening in Step 2 using
\begin{align}
  \hat \I_m^{\top} = \left\{\ell \in [q]: |\overline{\mathrm{cor}} \left( \Z_\ell, \r \right)| \text{ is among the } m \text{ largest} \right\}.
  \label{eq:Ihat_m}
\end{align}
This top-$m$ approach is standard in screening based variable selection methods \citep{fan2008sure, barut2015conditional} and large-scale interaction modeling approaches \citep{2016arXiv160508933F, niu2018interaction}. Popular choices of values of $m$ include $n$ and $\lceil n / \log (n) \rceil$. In Section \ref{sec:num}, we find that using $m = \lceil n / \log (n) \rceil$ consistently yields good performance, and cross-validating the value of $m$ has a rather small effect.
The scaling of $\overline{\mathrm{sd}}(\r)$ is needed to facilitate the theoretical analysis of the scaling of $\eta$ (see, e.g., Theorem \ref{thm:main} and Theorem \ref{thm:screening}). Clearly, when the top-$m$ approach is used, the scaling of $\overline{\mathrm{sd}}(\r)$ is unnecessary.

With a fixed value of $m$ (and equivalently, $\eta$), we only need a two-dimensional cross-validation of the lasso tuning parameter pair $(\lambda_1, \lambda_3)$, which is computationally viable due to the efficient path algorithms \citep{friedman2010regularization} for solving lasso problems in Step 1 and Step 3.

\subsection{A computationally efficient variant} \label{subsec:sprinter1cv}
As we will see in Section \ref{sec:theory}, Step 1 only needs to yield a good residual to guarantee the performance of Algorithm \ref{alg:first}. Therefore, we also consider the following computationally efficient variant of sprinter:

\textbf{sprinter(1cv)}: in Algorithm \ref{alg:first}, we first select the tuning parameter in Step 1 before proceeding to subsequent steps.

In Section \ref{sec:num}, we find that in most settings, \texttt{sprinter(1cv)} is doing only slightly worse than sprinter with two-dimensional cross-validation of $(\lambda_1, \lambda_3)$, with the benefit of faster computation.
However, there are situations in which \texttt{sprinter(1cv)} would substantially fall short. For example, when there is no main effects signal, \texttt{sprinter(1cv)}'s Step 1 tends to under-penalize whereas \texttt{sprinter}, which simultaneously cross-validates $(\lambda_1, \lambda_3)$, tends to pick a very large value of $\lambda_1$ and thus avoids this problem.

\subsection{Computation} \label{subsec:computation}
With a value of $m \leq n$, the required computation in both Step 1 and Step 3 are no worse than fitting a lasso with $p + n$ features. However, the major computational burden lies in Step 2, where $\BigO(p^2)$ sample correlations are computed. It is thus essential for this step to be implemented as efficiently as possible, both in terms of computational time and storage.

We compute the sample correlation between each interaction and the residual from Step 1 on the fly. In the meantime, a min-heap is used to keep the index pairs of the interactions that attain the $m$ largest sample correlations. This ensures that we do not have to store $\BigO(p^2)$ elements. The time complexity of Step 2 is thus $\BigO(p^2 (n + \log m))$, where $\BigO(p^2 \log m)$ is for maintaining the $m$ largest elements in a data stream of size $\BigO(p^2)$ by using a min-heap of size $m$ \citep{cormen2009introduction}. 
Note that the whole algorithm only requires $\BigO(n(p + m))$ storage. Various other data structures could be used to achieve similar computational and storage complexity. Step 2 could also be computed approximately using an equivalence with the correlation screening and closest-pair problems \citep{shamos1975closest, agarwal1991euclidean}, which could be solved approximately using locality sensitive hashing \citep{rajaraman2011mining, 2016arXiv161005108T} with a much improved computational complexity that is subquadratic in $p$.

\subsection{Related methods}
Our method is not alone in dropping the hierarchy assumption \citep[see, e.g.,][]{2016arXiv161005108T, 2016arXiv160508933F, niu2018interaction, 2018arXiv180107785R, wang2019penalized, zhou2019bolt}.
\textit{Interaction pursuit} (IP) operates in two stages, first seeking a subset of the original $p$ variables that are involved in the nonzero interactions and then restricting attention to interactions between these selected variables \citep{2016arXiv160508933F}. This method is efficient and can be quite effective.  Like multi-stage hierarchy methods and unlike our method, IP's success hinges on successful screening in the first step. Screening is easiest when the interactions are concentrated among a small set of original variables.  The most challenging situation for this method is when there is no such concentration of interactions over a small set of original variables.
Other screening-based methods exist. \citet{niu2018interaction} select interactions based on the partial correlation between the response and each interaction, with the corresponding two main effects adjusted. 
\citet{2018arXiv180107785R} screens interactions based on the three-way joint cumulant between the response and two main effects that make up an interaction. While these two methods account for the exact two main effects when selecting an interaction, our proposal is more general in that it will only select an interaction that cannot be explained by any linear combinations of all main effects.
Furthermore, our method is accompanied by finite sample theoretical guarantees, while such guarantees are not currently available for the methodology in \citet{niu2018interaction} and \citet{2018arXiv180107785R}.
\citet{2016arXiv161005108T} consider a randomized algorithm that solves each step of APL approximately by solving a closest-pair problem. By doing so, they show that the computational complexity of their method is sub-quadratic in $p$. 
Our method, while still having the same worst-case time complexity as that of APL, appears to be as fast as \citet{2016arXiv161005108T} in practice, and we find in our experiments it gives better predictive performance.

\section{Theoretical analysis} \label{sec:theory}
Model \eqref{model:true} expresses the signal in terms of a main effects signal term, $X^T \beta^\ast$, and an interactions signal term, $Z^T\gamma^\ast$. If $X$ and $Z$ were uncorrelated, this would be a unique decomposition. However, as demonstrated in Section \ref{sec:principle}, there can be ``overlap'' between these two signal terms. Let $X^T \vartheta^\ast$ be the part of $Z^T \gamma^\ast$ that can be explained by a linear combination of $X$, i.e., 
\begin{align}
  \vartheta^\ast := \argmin_{\vartheta\in\real^p} \Var\left( Z^T \gamma^\ast - X^T \vartheta \right)=\Cov \left( X \right)^{-1} \Cov \left( X, Z \right) \gamma^\ast= \Sigma^{-1}\Phi\gamma^\ast,
  \label{eq:vartheta}
\end{align}
where we denote $\Sigma = \Cov(X)\in\real^{p\times p}$ and $\Phi = \Cov(X, Z)\in\real^{p\times q}$. 
We can then write \eqref{model:true} as
\begin{align}
  Y = X^T \theta^\ast  + W^T \gamma^\ast + \varepsilon,
  \label{eq:true_repa}
\end{align}
where $\theta^\ast = \beta^\ast + \vartheta^\ast$, and
\begin{align}
  W := Z - \Phi^T \Sigma^{-1} X
  \nonumber
\end{align}
is the ``pure'' interaction effects that cannot be captured by linear combinations of $X$, with $\Cov(X, W) = 0$. We denote the covariance of the pure interactions as
\begin{align}
  \Omega := \Cov(W) = \Cov\left( Z, W \right) = \Cov\left( Z \right)- \Cov \left( Z, X \right) \Sigma^{-1} \Phi = \Psi - \Phi^T \Sigma^{-1} \Phi,
  \label{eq:Omega}
\end{align}
where $\Psi = \Cov(Z)\in\real^{q\times q}$.
We note that $X^T \theta^\ast$ is aligned with \textit{the best population linear approximation} to $Y$ in \citet{buja2019models}.
By fitting $Y$ using only linear combinations of $X$, we fit a misspecified model because the pure interaction $W$ is ignored. We will see in the following analysis that the zero covariance structure between $X$ and $W$ is helpful in reducing the effects of this model misspecification.  Actually, the benefit from the ``orthogonality'' between main effects and interactions is also observed in \citet{hao2014interaction}, where $X$ is assumed to follow a zero-mean symmetric distribution. In such a case, we have that $\Phi = \Cov(X, Z) = 0$, which implies that $\vartheta^\ast = 0$ and $W = Z$. Our method does not require the symmetry of the distribution of main effects and thus allows for more general covariance structure between main effects $X$ and interactions $Z$. 

We also note that $X$ in \eqref{eq:true_repa} can be generalized to be a random vector containing main effects and simple functions of main effects, e.g., the squared effects, or general univariate nonlinear functions of main effects $f_j(X_j)$'s (as in Section \ref{sec:principle}).

To study sprinter's performance, we will need to make assumptions about the tail behavior of the features. If main effects are sub-Gaussian, then their interactions are known to be sub-exponential. However, the analysis of interaction modeling usually involves the product of more than two main effects (e.g., the product of a main effect and an interaction), which has heavier tails than sub-exponential random variables.
The following definition \citep[][Definition 2.2 and 2.4]{kuchibhotla2018moving} will therefore be useful.
\begin{definition}[$\subw(\nu)$ random variable/vector] \label{def:sb}
  A random variable $U$ is a sub-Weibull random variable of order $\nu > 0$, i.e., $\subw(\nu)$, if 
  \begin{align}
    \snorm{U}_{\psi_\nu} = \inf \left\{ \zeta > 0: \E \left[ \exp \left( \frac{|U|^\nu}{\zeta^\nu}  \right) \right] \leq 2 \right\} < \infty,
    \label{eq:sb}
  \end{align}
  where $\snorm{U}_{\psi_\nu}$ is the Orlicz norm of $U$. A random vector $V \in \real^p$ is a $\subw(\nu)$ random vector if $c^T V$ is a $\subw(\nu)$ random variable for any constant vector $c \in \real^p$.  Furthermore, we define $\snorm{V}_{\psi_\nu} = \sup_{\snorm{c}_2 = 1} \snorm{c^T V}_{\psi_\nu}$.
\end{definition}
The notion of $\subw(\nu)$ generalizes the definition of sub-Gaussian, which is $\subw(2)$, and sub-exponential, which is $\subw(1)$. In this paper, we are primarily interested in the cases where $\nu < 1$. In particular, the product of three and four sub-Gaussian main effects, which as shown in Appendix \ref{app:subweibull}, are $\subw(2/3)$ and $\subw(1/2)$ respectively.
In Appendix \ref{app:subweibull}, we also give a set of concentration inequalities for these heavy tailed random variables.

In the theoretical analysis of our method, we make the following assumptions:
\begin{enumerate}[label=\textbf{A\arabic*}]
  \item \label{a:distn} We have $n$ independent samples from \eqref{model:true}, where $X = (X_1, \dots, X_p)$ follows a zero-mean sub-Gaussian distribution with covariance matrix $\Sigma$ and sub-Gaussian norm $\snorm{X}_{\psi_2}$, and $\varepsilon$ is zero-mean sub-Gaussian noise independent of $X$ with $\Var(\varepsilon) = \sigma^2$.
  \item \label{a:dim} We assume that $\kappa \log p \leq \sqrt{n}$ with some constant $\kappa > 1$.
\end{enumerate}

Assumption \ref{a:distn} is a very general distributional assumption on the random design; unlike other methods in high-dimensional interaction modeling, we do not require the distribution of $X$ to be symmetric. 
Assumption \ref{a:dim} is a standard assumption for pairwise interaction screening consistency \citep[see, e.g., Assumption (C4) of][]{hao2014interaction}. 
This is more stringent than the standard sample size requirement in main effects screening \citep[see, e.g., ][]{fan2008sure}, which requires $\log p = \BigO(n)$.
This is because interactions concentrate around their population means more slowly due to their heavier tails. 

Ultimately we want to characterize the prediction performance and computational complexity of sprinter. In particular, the prediction error is characterized by 
\begin{align}
  \frac{1}{2n} \norm{\X \theta^\ast + \W \gamma^\ast - \X \hat{\theta} - \X \hat{\xi} - \Z_{\hat{\I}_\eta} \hat{\varphi}}_2^2,
  \nonumber
\end{align}
where $\hat{\theta}$ is the Step 1 coefficient estimate of $\theta^\ast$ in \eqref{eq:true_repa}, $(\hat{\xi}, \hat{\varphi})$ is the coefficient estimate from \eqref{eq:step3} given in Step 3 of Algorithm \ref{alg:first}, and
$\hat{\I}_{\eta}$ is the output of Step 2 in \eqref{eq:Ihat} with tuning parameter $\eta$.

Clearly, by taking $\eta = 0$, we have $\hat{\I}_\eta = [ q ]$. In this case, there is no computational gain over APL because all interactions will be considered in Step 3.
On the other extreme, if $\eta = \infty$, then $\hat{\I}_\eta = \emptyset$ and the whole procedure ignores all of the pure interaction signal $W^T\gamma^\ast$.
Therefore, the success of sprinter hinges on capturing a small set $\hat{\I}_\eta$ that still captures enough of the pure interaction signal $W^T \gamma^\ast$.

We define the target set of interactions to recover as
\begin{align}
  \I(\alpha) \in \argmax_{\A \subseteq [q]}\left\{\min_{\ell \in \A} \left|\Psi_{\ell \ell}^{-1/2} \Cov\left(Z_\ell, W^T\gamma^\ast\right)\right| \st \snorm{(W_{\A^C}^T \gamma_{\A^C}^\ast)^2}_{\psi_{1/2}} \leq \alpha \right\}.
  \label{eq:I_alpha}
\end{align}
We first explain the constraint in \eqref{eq:I_alpha}.
Recall that $W^T\gamma^\ast$ is the part of the interaction signal that cannot be explained by linear combinations of main effects. 
In Appendix \ref{app:subweibull} we show that $(W^T \gamma^\ast)^2$ is a $\subw(1/2)$ random variable.
For any value $\alpha \geq 0$, we call a set $\A \subseteq [q]$ of interactions {\em $\alpha$-important} if $\snorm{(W^T_{\A^C} \gamma^\ast_{\A^C})^2}_{\psi_{1/2}} \leq \alpha$.
\begin{remark}
  Lemma \ref{lem:exp} (in Appendix A) shows that $\E[(W^T_{\A^C} \gamma^\ast_{\A^C})^2] \leq 4 \snorm{(W^T_{\A^C} \gamma^\ast_{\A^C})^2}_{\psi_{1/2}} \leq 4\alpha$. Therefore, an $\alpha$-important set of interactions captures all but $4 \alpha$ of the pure interaction signal.
\end{remark}
Here, $\alpha$ is a theoretical tuning parameter controlling the size of the target (population-level) screening set $\I(\alpha)$ in Step 2, thereby determining the trade-off between computation and prediction error. In typical interaction modeling, the goal would be to recover $\supp(\gamma^\ast)$, which is a $0$-important set; however, by taking larger $\alpha$, we can reduce the size of the target interaction screening set, improving computation with controlled cost to prediction error.
In our reluctant interaction selection framework, we do not care about recovering a set of interactions $\mathcal{B} \subseteq \supp(\gamma^\ast)$ if $\snorm{(W_{\mathcal{B}}^T \gamma^\ast_{\mathcal{B}})^2}_{\psi_{1/2}}$ (and thus $\E[(W_{\mathcal{B}}^T \gamma^\ast_{\mathcal{B}})^2]$) is small.  For example, if $Z_\ell$ can be perfectly explained by a linear combination of main effects, then we do not wish to select interaction $\ell$ even if $\gamma_\ell^\ast \neq 0$.  A strength of our theoretical results is that they are in terms of general $\alpha$, thus making explicit the trade-off between computational efficiency and prediction accuracy.

From \eqref{eq:I_alpha} we know that $\I(\alpha)$ is an $\alpha$-important set.
Yet for any $\alpha \geq 0$, the $\alpha$-important set is not necessarily unique.
If Step 1 does a good job of capturing all the signal from the main effects, i.e., $\X \theta^\ast \approx \X \hat{\theta}$, where $\X \hat{\theta}$ is the fitted response from Step 1, then $\r = \y - \X \hat{\theta}$ should essentially be the pure interaction signal $\W \gamma^\ast$ (with noise).
Step 2 obtains $\hat{\I}_\eta$ by including all the interactions whose (scaled) sample correlation with the residual is large enough, i.e., $\omega_\ell = \overline{\mathrm{sd}}(\r) |\overline{\mathrm{cor}}\left( \Z_\ell,  \r \right)| \geq \eta$ for some $\eta \geq 0$, where $\omega_\ell$ is a noisy proxy of the population signal strength 
$$
\omega^\ast_\ell = \Psi_{\ell \ell}^{-1/2} \Cov(Z_\ell, W^T\gamma^\ast).
$$
For any $\alpha$-important set $\A$ to be detectable, we require that the minimum signal strength $\min_{\ell \in \A} |\omega_\ell^\ast|$ in $\A$ is large enough to be differentiated from the noise. 
The target set of interactions $\I(\alpha)$ in \eqref{eq:I_alpha} is thus defined as the $\alpha$-important set that is most easily detected since it has the largest minimum signal strength, which we define as
\begin{align}
  \eta (\alpha) := \frac{1}{2} \min_{\ell \in \I(\alpha)} |\omega_\ell^\ast|.
  \label{eq:eta_alpha}
\end{align}

The following theorem shows the main theoretical properties of sprinter: it attains good prediction accuracy while being computationally efficient when the minimum signal strength is greater than a certain noise level.
We provide a roadmap to its proof in Section \ref{subsec:roadmap}, deferring the full proof of this and all other results to the appendix.
\begin{theorem} \label{thm:main}
  Define
  \begin{align}
    \eta^\ast =& K \left[  \left(\snorm{\diag(\Psi)^{-1/2}Z}_{\psi_1}\snorm{W^T \gamma^\ast}_{\psi_1} + \max_{\ell} |\omega_\ell^\ast| \right)  \frac{(\log p)^{3/4}}{n^{1/2}} \right. \nonumber\\
               & \left.+ \left(\sigma + \snorm{W^T\gamma^\ast}_{\psi_{1}} \right)^{1/2} \snorm{X}_{\psi_2}^{1/2} \frac{(\log p)^{1/4}}{n^{1/4}}+  \sigma \frac{(\log p)^{1/2}}{n^{1/2}} \right],
               \label{eq:eta_star}
  \end{align}
  to be the noise level, where $K > 0$ is a constant.
  Under Assumption \ref{a:distn} and \ref{a:dim}, for any $\alpha \geq 0$, under the signal strength condition that $\eta(\alpha) \geq \eta^\ast$, and taking
  \begin{align}
    \lambda_0 = C \left(\sigma + \snorm{W^T\gamma^\ast}_{\psi_{1}} \right) \snorm{X}_{\psi_2} \sqrt{\frac{\log p}{n}}, \qquad
    \lambda = C_1 \sigma \max \left( \snorm{X}_{\psi_2}, \snorm{Z}_{\psi_1} \right)\sqrt{\frac{ \log p}{n}}
    \label{eq:lambda_cor1}
  \end{align}
  as the tuning parameter value in Step 1 and Step 3 respectively, then for any $\eta \in [\eta^\ast, \eta(\alpha)]$, sprinter achieves:
  \begin{enumerate}
    \item (Screening property, implying computational efficiency)
      \begin{align}
        \I(\alpha) \subseteq \hat{\I}_\eta \quad \text{and} \quad |\hat{\I}_{\eta}| \leq 4{\eta}^{-2}\lambda_{\max}\left( \diag(\Psi)^{-1/2}\Omega\diag(\Psi)^{-1/2} \right) \Var(W^T\gamma^\ast) 
        \label{eq:efficiency}
      \end{align}
    \item (Prediction error rate)
      \begin{align}
&\frac{1}{2n} \norm{\X \theta^\ast - \X \hat{\theta} + \W \gamma^\ast - \X \hat{\xi} - \Z_{\hat{\I}_\eta} \hat{\varphi}}_2^2 \nonumber\\
        \leq & C_2 ( \snorm{W^T \gamma^\ast}_{\psi_1} \snorm{X}_{\psi_2}\sqrt{\frac{\log p}{n}} + \lambda )\snorm{\theta^\ast}_1 \nonumber\\
             &+ \inf_{\bar{\alpha} \geq \alpha} \left \{C_3 \bar{\alpha} + 4\lambda \left( \snorm{\Sigma^{-1} \Phi_{\I(\bar{\alpha})} \gamma^\ast_{\I(\bar{\alpha})}}_1 + \snorm{\gamma^\ast_{\I(\bar{\alpha})}}_1 \right) \right\}
        \label{eq:bound_pred}
      \end{align}
  \end{enumerate}
  with probability greater than $1 - 2 \exp \left( -n^{3/5} \right) - 2 p^{-1} - 16 p^{-2(\kappa^{1/3} - 1)}$, where $C_1, C_2$, and $C_3$ are positive constants.
\end{theorem}

The value of $\eta^\ast$ in \eqref{eq:eta_star} is the noise level in Step 2, which is the sum of three terms: 
the first term depends on the strength of the pure interaction signal.
The second term stems from the prediction error bound from Step 1, which fits a misspecified model since it ignores the pure interaction signal $W^T \gamma^\ast$ in \eqref{eq:true_repa}.
The last term depends on the error standard deviation $\sigma$. 
Actually from Theorem \ref{thm:lasso_step1} we see that the second term is a slow rate prediction error bound for Step 1. The results in Theorem \ref{thm:main} can thus be generalized by replacing the second term with a faster rate under stronger assumptions if lasso is still used in Step 1, or any other prediction error bound available to a generic method used in Step 1.
In such a generalization, only the second term in $\eta^\ast$ and the probability with which \eqref{eq:efficiency} and \eqref{eq:bound_pred} hold will be changed.

The result in \eqref{eq:efficiency} characterizes the size of retained interactions used in Step 3. In particular, it implies that if $\Var(W^T\gamma^\ast) = 0$, i.e., if all interaction signal can be explained by main effects, then $\hat{\I}_\eta = \emptyset$, and Step 3 is not needed at all.
Finally, the results in Theorem \eqref{thm:main} hold with probability tending to $1$ as $p \rightarrow \infty$ and $n \rightarrow \infty$.

Computationally, recall that Step 3 of sprinter is solving a lasso with $O(p + |\hat{\I}_{\eta}|)$ variables. If $|\hat{\I}_\eta| = o(p^2)$, then Step 3 of sprinter is computationally more efficient than APL because it solves a problem with a smaller number of variables. By the orthogonality between $X$ and $W$, we have $\Var(Y) = \Var(X^T \theta^\ast) + \Var(W^T\gamma^\ast) + \sigma^2$. Under a standard assumption that $\Var(Y) = \BigO(1)$ \citep[see, e.g.,][]{fan2008sure}, if the maximum singular value $ \lambda_{\max}\left( \diag(\Psi)^{-1/2}\Omega\diag(\Psi)^{-1/2} \right) = o({\eta^\ast}^{2}p^2)$, then \eqref{eq:size_Ihat} implies that $|\hat{\I}_{\eta}| = o(p^2)$ and thus sprinter is computationally more efficient than APL.

In Section \ref{subsec:example} we consider an example, where we explicitly write out the condition under which the condition $\eta(\alpha) \geq \eta^\ast$ holds for different values of $\alpha$.

\subsection{A roadmap to the proof of Theorem \ref{thm:main}} \label{subsec:roadmap}
In this section, we give a series of theoretical results derived in order to prove Theorem \ref{thm:main}.
As discussed in the previous section, the success of Step 3 depends on Step 2 achieving a type of screening property, i.e., it retaining all the important interactions in $\I(\alpha)$, and also that $|\hat{\I}_\eta|$ is not too large. The following theorem shows that if the signal strength condition that $\eta(\alpha) \geq \eta^\ast$ holds, then Step 2 yields a small screening set that contains $\I(\alpha)$.
\begin{theorem}[Screening property in Step 2] \label{thm:screening}
  Consider the event $$\Event_R = \left\{ n^{-1/2} \snorm{\X\theta^\ast - \X \hat{\theta}}_2 \leq R \right\}$$ for some prediction error rate $R$ of Step 1.
  Let
  \begin{align}
    \bar{\eta}^\ast = K \left[\left(\snorm{\diag(\Psi)^{-1/2}Z}_{\psi_1} \snorm{W^T \gamma^\ast}_{\psi_1} + \max_{\ell} |\omega_\ell^\ast| \right)  \frac{(\log p)^{3/4}}{n^{1/2}} + R +  \sigma \frac{(\log p)^{1/2}}{n^{1/2}} \right],
    \label{eq:eta_star_bar}
  \end{align}
  where $K > 0$ is a constant.
  Under Assumption \ref{a:distn} and \ref{a:dim}, for any $\alpha \geq 0$, if $\eta (\alpha) \geq \bar{\eta}^\ast$,
  then for any $\eta \in [\bar{\eta}^\ast, \eta(\alpha)]$,
  \begin{align}
    \I(\alpha) \subseteq \hat{\I}_{\eta} \quad \text{and} \quad |\hat{\I}_{\eta}| \leq 4{\eta}^{-2}\lambda_{\max}\left( \diag(\Psi)^{-1/2}\Omega\diag(\Psi)^{-1/2} \right) \Var(W^T\gamma^\ast) 
    \label{eq:size_Ihat}
  \end{align}
  holds with probability greater than $1 - 8 p^{-2(\kappa^{2/5} - 1)} - 2 p^{-1} - \Prob(\Event_R^C)$. 
\end{theorem}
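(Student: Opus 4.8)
The plan is to reduce both conclusions to a single uniform concentration event and then let the margin built into the definition of $\eta(\alpha)$ do the rest. Writing the sample screening statistic as $\omega_\ell = \overline{\mathrm{sd}}(\r)\,|\overline{\mathrm{cor}}(\Z_\ell,\r)| = |\overline{\Cov}(\Z_\ell,\r)|/\overline{\mathrm{sd}}(\Z_\ell)$ and recalling $\omega_\ell^\ast = \Psi_{\ell\ell}^{-1/2}\Cov(Z_\ell,W^T\gamma^\ast)$, I would first establish a high-probability event $G$ on which
\begin{align*}
  \max_{\ell\in[q]}\,\bigl|\omega_\ell - \omega_\ell^\ast\bigr| \le \tfrac12\,\bar\eta^\ast .
\end{align*}
Granting $G$, the screening property is immediate: for $\ell\in\I(\alpha)$ the definition \eqref{eq:eta_alpha} of $\eta(\alpha)$ gives $|\omega_\ell^\ast|\ge\min_{\ell'\in\I(\alpha)}|\omega_{\ell'}^\ast| = \tfrac32\eta(\alpha)$, so $\omega_\ell\ge|\omega_\ell^\ast| - \tfrac12\bar\eta^\ast\ge\tfrac32\eta(\alpha) - \tfrac12\eta(\alpha) = \eta(\alpha)\ge\eta$ (using $\bar\eta^\ast\le\eta(\alpha)$), whence $\ell\in\hat\I_\eta$. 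Symmetrically, every $\ell\in\hat\I_\eta$ has $\omega_\ell>\eta$, so $|\omega_\ell^\ast|\ge\omega_\ell - \tfrac12\bar\eta^\ast>\tfrac12\eta$ (using $\bar\eta^\ast\le\eta$), i.e.\ $(\omega_\ell^\ast)^2>\tfrac14\eta^2$.

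For the cardinality bound I would sum these lower bounds against an exact identity. Let $D = \diag(\Psi)^{-1/2}$; since $\Cov(Z_\ell,W^T\gamma^\ast) = (\Cov(Z,W)\gamma^\ast)_\ell = (\Omega\gamma^\ast)_\ell$ by \eqref{eq:Omega}, we have $\omega_\ell^\ast = (D\Omega\gamma^\ast)_\ell$ and hence $\sum_{\ell\in[q]}(\omega_\ell^\ast)^2 = \norm{D\Omega\gamma^\ast}_2^2$. Factoring $\Omega = \Omega^{1/2}\Omega^{1/2}$ and using the identity $\lambda_{\max}(\Omega^{1/2}D^2\Omega^{1/2}) = \lambda_{\max}(D\Omega D)$ (equal eigenvalues of $AB$ and $BA$ with $A=\Omega^{1/2}D$),
\begin{align*}
  \norm{D\Omega\gamma^\ast}_2^2 = (\Omega^{1/2}\gamma^\ast)^T\bigl(\Omega^{1/2}D^2\Omega^{1/2}\bigr)(\Omega^{1/2}\gamma^\ast) \le \lambda_{\max}(D\Omega D)\,\gamma^{\ast T}\Omega\gamma^\ast = \lambda_{\max}(D\Omega D)\,\Var(W^T\gamma^\ast).
\end{align*}
Combining $\tfrac14\eta^2|\hat\I_\eta|\le\sum_{\ell\in\hat\I_\eta}(\omega_\ell^\ast)^2\le\sum_{\ell\in[q]}(\omega_\ell^\ast)^2$ with this inequality gives exactly the second claim in \eqref{eq:size_Ihat}, since $D\Omega D = \diag(\Psi)^{-1/2}\Omega\diag(\Psi)^{-1/2}$.

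It remains to build $G$, which carries all the difficulty. Substituting the reparameterized model $\y = \X\theta^\ast + \W\gamma^\ast + \bvarepsilon$ into $\r = \y - \X\hat\theta$, I would expand the numerator as
\begin{align*}
  \overline{\Cov}(\Z_\ell,\r) - \Cov(Z_\ell, W^T\gamma^\ast)
  &= \overline{\Cov}\bigl(\Z_\ell, \X(\theta^\ast - \hat\theta)\bigr) + \overline{\Cov}(\Z_\ell, \bvarepsilon) \\
  &\quad + \bigl[\overline{\Cov}(\Z_\ell, \W\gamma^\ast) - \Cov(Z_\ell, W^T\gamma^\ast)\bigr],
\end{align*}
and separately control the denominator, showing that $\overline{\mathrm{sd}}(\Z_\ell)$ concentrates around $\Psi_{\ell\ell}^{1/2}$ and stays bounded below, uniformly in $\ell$. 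The first numerator term is handled deterministically on $\Event_R$: Cauchy--Schwarz gives $|\overline{\Cov}(\Z_\ell,\X(\theta^\ast-\hat\theta))|\le\overline{\mathrm{sd}}(\Z_\ell)\,n^{-1/2}\snorm{\X(\theta^\ast-\hat\theta)}_2\le\overline{\mathrm{sd}}(\Z_\ell)\,R$, which after normalization produces the $R$ term of \eqref{eq:eta_star_bar}.

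\emph{The main obstacle} is the uniform control, over all $q = \BigO(p^2)$ interactions, of the two remaining stochastic pieces and of the denominator. Here I would apply the sub-Weibull concentration inequalities of Appendix A after pinning down the Orlicz orders: each $Z_\ell$ and $W^T\gamma^\ast$ is $\subw(1)$, so the centered summands $Z_\ell\,W^T\gamma^\ast$ (and likewise $Z_\ell^2$ governing the variance) are $\subw(1/2)$, while $Z_\ell\,\varepsilon$ is $\subw(2/3)$. A Bernstein-type sub-Weibull bound on each empirical mean, union-bounded over the $q$ interactions, yields the numerator-covariance term at rate $(\log p)^{3/4}/n^{1/2}$ with coefficient $\snorm{\diag(\Psi)^{-1/2}Z}_{\psi_1}\snorm{W^T\gamma^\ast}_{\psi_1}$; the denominator-estimation error, being a relative error of the same $\subw(1/2)$ variance concentration scaled by the signal level, contributes the companion $\max_\ell|\omega_\ell^\ast|\,(\log p)^{3/4}/n^{1/2}$ piece; and the noise term contributes $\sigma(\log p)^{1/2}/n^{1/2}$. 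Because the summands have order $\nu<1$, the tail bounds a priori carry a heavier polynomial-in-$\log p$ term; Assumption \ref{a:dim} ($\kappa\log p\le\sqrt n$) is precisely what forces the sub-Gaussian part to dominate so that the stated $n^{-1/2}$ rates hold, and it also calibrates the per-term deviation threshold so that the union bound over the $\BigO(p^2)$ interactions costs only $8p^{-2(\kappa^{2/5}-1)}$ for the $\subw(1/2)$ numerator and variance events and $2p^{-1}$ for the $\subw(2/3)$ noise event. Intersecting these with $\Event_R$ yields $G$ and the claimed probability $1 - 8p^{-2(\kappa^{2/5}-1)} - 2p^{-1} - \Prob(\Event_R^C)$.
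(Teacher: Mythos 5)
Your proposal follows essentially the same route as the paper's proof: a uniform concentration lemma for $\max_{\ell}|\omega_\ell - \omega_\ell^\ast|$ at level $\bar{\eta}^\ast/2$ (the paper's Lemma 7), obtained from the same three-way decomposition (pure-interaction covariance term, Step-1 error via Cauchy--Schwarz on $\Event_R$, noise term) with sub-Weibull concentration plus a union bound over $q = \BigO(p^2)$ interactions, followed by the identical triangle-inequality screening argument and the identical $\lambda_{\max}$-based bound on $|\hat{\I}_\eta|$ via $\sum_\ell (\omega_\ell^\ast)^2$. The only cosmetic difference is the noise term, which the paper controls by conditioning on the normalized design column so that $\bvarepsilon^T \C_n \tilde{\Z}_\ell$ is conditionally sub-Gaussian (this is where the $2p^{-1}$ term in the probability actually originates), rather than via your unconditional $\subw(2/3)$ bound on $Z_\ell\varepsilon$ --- both work, with slightly different probability bookkeeping.
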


As discussed in the previous section, for the set $\I(\alpha)$ to be recovered, the condition that $\eta(\alpha) \geq \bar{\eta}^\ast$ requires that the minimum signal strength $\eta(\alpha)$ should be stronger than $\bar{\eta}^\ast$, which, intuitively can be considered as the noise level of Step 2. This is similar to a ``$\beta$-min'' condition in the screening and variable selection consistency literature \citep[see, e.g.,][]{fan2008sure, wainwright2009sharp}.
The definition of $\bar{\eta}^\ast$ is more general than \eqref{eq:eta_star} in that it allows for a generic prediction error rate $R$ of Step 1. As a result, the probability with which \eqref{eq:size_Ihat} holds depends on $\Prob(\Event_R^C)$. As discussed earlier, applying a different method in Step 1 or a different rate using lasso (under stronger conditions) will result in a different $R$, which changes the bounds (and the corresponding probability) in Theorem \ref{thm:main}.

As with other methods in interaction screening, we note that the result in Theorem \ref{thm:screening} is less favorable (which is mostly reflected in a stricter sample size requirement \ref{a:dim}) than that of sure independence screening \citep{fan2008sure} in the main-effects-only case. This reveals an intrinsic challenge when dealing with interactions, namely that they have heavier tails than main effects. When we further assume that $X$ has a bounded distribution, Theorem \ref{thm:screening} can be much improved. The rate is still less good than the main-effects-only case, as it depends on the misspecified lasso fit in the first step---an expected caveat in a two-stage method.

\begin{remark}[The ``top-$m$'' strategy]
Both $\eta(\alpha)$ and $\bar{\eta}^\ast$ depend on some population quantities, and thus are not available in practice.
We thus adapt the same ``top-$m$'' strategy as \citet{fan2008sure} and \citet{2016arXiv160508933F} using $\hat{\I}^\top_m$ of \eqref{eq:Ihat_m}. 
Appendix \ref{app:screening_m} shows that Step 2 succeeds if we adapt the ``top-$m$'' approach under certain conditions.
\end{remark}

With Theorem \ref{thm:screening} we have shown \eqref{eq:efficiency} in Theorem \ref{thm:main}. To show \eqref{eq:bound_pred}, the following theorem first gives a deterministic bound on the prediction error of Step 3 if $\I(\alpha) \subseteq \hat{\I}_\eta$ holds.
\begin{theorem}[Prediction error in Step 3] \label{thm:prediction}
  Consider the event $$\Event_R = \left\{ n^{-1/2} \snorm{\X\theta^\ast - \X \hat{\theta}}_2 \leq R \right\}$$ for some prediction error rate $R$ of Step 1.
  For any $\alpha \geq 0$, suppose $\I(\alpha) \subseteq \hat{\I}_{\eta}$, and take 
  \begin{align}
    \lambda \geq \max \left( \frac{1}{n} \max_{1 \leq j \leq p} |\bvarepsilon^T \X_j|, \frac{1}{n} \max_{\ell \in \hat{\I}_{\eta}} |\bvarepsilon^T \Z_\ell| \right)
    \label{eq:lambda}
  \end{align}
   as the tuning parameter value in Step 3.
  On the event $\Event_R$, we have
  \begin{align}
    &\frac{1}{2n} \norm{\X \theta^\ast + \W \gamma^\ast - \X(\hat{\theta} + \hat{\xi}) - \Z_{\hat{\I}_\eta} \hat{\varphi} }_2^2 \nonumber\\
    \leq & R^2 + \frac{1}{n} \norm{\W_{\I(\alpha)^C} \gamma^\ast_{\I(\alpha)^C}}_2^2 + 2\lambda \left( \snorm{\Sigma^{-1} \Phi_{\I(\alpha)} \gamma^\ast_{\I(\alpha)}}_1 + \snorm{\gamma^\ast_{\I(\alpha)}}_1 \right).
    \label{eq:general_bound}
  \end{align}
\end{theorem}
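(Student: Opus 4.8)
The plan is to run a standard lasso ``basic inequality'' argument, but against a carefully chosen oracle comparison point that exploits the reparameterization $W = Z - \Phi^T\Sigma^{-1}X$ underlying \eqref{eq:true_repa}. At the sample level this reads $\W = \Z - \X\Sigma^{-1}\Phi$, so splitting the interaction signal along $\I(\alpha)$ and its complement gives $\Z_{\I(\alpha)^C}\gamma^\ast_{\I(\alpha)^C} = \W_{\I(\alpha)^C}\gamma^\ast_{\I(\alpha)^C} + \X\Sigma^{-1}\Phi_{\I(\alpha)^C}\gamma^\ast_{\I(\alpha)^C}$. Substituting this into $\X\beta^\ast + \Z\gamma^\ast$ and cancelling the two $\X\Sigma^{-1}\Phi$ pieces, I would write the target signal as $\X\beta^\ast + \Z\gamma^\ast = \X\tilde\beta + \Z_{\hat\I_\eta}\tilde\gamma + \W_{\I(\alpha)^C}\gamma^\ast_{\I(\alpha)^C}$, where the comparison point is $\tilde\beta = \beta^\ast + \Sigma^{-1}\Phi_{\I(\alpha)^C}\gamma^\ast_{\I(\alpha)^C}$ and $\tilde\gamma\in\real^{|\hat\I_\eta|}$ is supported on $\I(\alpha)$ with $\tilde\gamma_{\I(\alpha)} = \gamma^\ast_{\I(\alpha)}$. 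The hypothesis $\I(\alpha)\subseteq\hat\I_\eta$ is precisely what makes $\tilde\gamma$ a feasible interaction coefficient for the Step~3 program \eqref{eq:step3}, so this comparison is legitimate.

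Next I would invoke optimality of $(\hat\beta,\hat\gamma)$ in \eqref{eq:step3} against $(\tilde\beta,\tilde\gamma)$. Writing $\hat f = \X\hat\beta + \Z_{\hat\I_\eta}\hat\gamma$, $\tilde f = \X\tilde\beta + \Z_{\hat\I_\eta}\tilde\gamma$, and the misspecification vector $b = \W_{\I(\alpha)^C}\gamma^\ast_{\I(\alpha)^C}$, I note that the target signal equals $\tilde f + b$, so $\y - \tilde f = b + \bvarepsilon$. Expanding $\norm{\y-\hat f}_2^2$ in the basic inequality and separately expanding the target $\norm{(\tilde f + b)-\hat f}_2^2$, the cross terms involving $b^T(\hat f - \tilde f)$ appear with opposite signs and cancel \emph{exactly}, leaving
\[
\frac{1}{n}\norm{(\tilde f + b) - \hat f}_2^2 \le \frac{1}{n}\norm{b}_2^2 + \frac{2}{n}\bvarepsilon^T(\hat f - \tilde f) + 2\lambda\bigl(\norm{\tilde\beta}_1 + \norm{\tilde\gamma}_1 - \norm{\hat\beta}_1 - \norm{\hat\gamma}_1\bigr).
\]
This cancellation is the one genuinely structural point of the argument: it is exactly what the orthogonality built into $W$ buys us, and it is why only the squared bias $\frac{1}{n}\norm{b}_2^2 = \frac{1}{n}\norm{\W_{\I(\alpha)^C}\gamma^\ast_{\I(\alpha)^C}}_2^2$ survives rather than a cross term that would have to be controlled stochastically.

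Finally I would bound the stochastic term by decomposing $\hat f - \tilde f = \X(\hat\beta-\tilde\beta) + \Z_{\hat\I_\eta}(\hat\gamma-\tilde\gamma)$ and applying H\"older coordinatewise: $\frac{2}{n}|\bvarepsilon^T\X(\hat\beta-\tilde\beta)| \le \frac{2}{n}\max_j|\bvarepsilon^T\X_j|\,\norm{\hat\beta-\tilde\beta}_1 \le 2\lambda\norm{\hat\beta-\tilde\beta}_1$ and likewise $\frac{2}{n}|\bvarepsilon^T\Z_{\hat\I_\eta}(\hat\gamma-\tilde\gamma)| \le 2\lambda\norm{\hat\gamma-\tilde\gamma}_1$, both using the choice of $\lambda$ in \eqref{eq:lambda}. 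A triangle inequality $\norm{\hat\beta-\tilde\beta}_1 \le \norm{\hat\beta}_1 + \norm{\tilde\beta}_1$ (and analogously for $\gamma$) makes the $\norm{\hat\beta}_1$ and $\norm{\hat\gamma}_1$ terms cancel against the $-\norm{\hat\beta}_1-\norm{\hat\gamma}_1$ from the basic inequality, collapsing the penalty contribution to $4\lambda(\norm{\tilde\beta}_1 + \norm{\tilde\gamma}_1)$. Substituting $\norm{\tilde\gamma}_1 = \snorm{\gamma^\ast_{\I(\alpha)}}_1$ and $\tilde\beta = \beta^\ast + \Sigma^{-1}\Phi_{\I(\alpha)^C}\gamma^\ast_{\I(\alpha)^C}$ produces \eqref{eq:general_bound} verbatim. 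Every inequality used is deterministic once $\lambda$ dominates the empirical correlations in \eqref{eq:lambda}, so no appeal to Assumptions \ref{a:distn}--\ref{a:dim} is needed; the only delicate part is the bookkeeping of the two cancellations (the bias cross term and the $\norm{\hat\beta}_1,\norm{\hat\gamma}_1$ terms), while the rest is routine.
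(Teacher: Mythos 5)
Your proof is correct and is essentially the paper's own argument: the same basic inequality for the Step 3 lasso \eqref{eq:step3} against the same oracle point $\bigl(\beta^\ast + \Sigma^{-1}\Phi_{\I(\alpha)^C}\gamma^\ast_{\I(\alpha)^C},\, \gamma^\ast_{\I(\alpha)}\bigr)$ (feasible precisely because $\I(\alpha)\subseteq\hat{\I}_\eta$), the same H\"older-plus-$\lambda$ absorption of the empirical-process terms with the same triangle-inequality cancellation of $\snorm{\hat{\beta}}_1$ and $\snorm{\hat{\gamma}}_1$, and the same identity $\Z=\W+\X\Sigma^{-1}\Phi$ reducing the bias term to $\W_{\I(\alpha)^C}\gamma^\ast_{\I(\alpha)^C}$; the paper differs only cosmetically by keeping the comparison point $(\bar{\beta},\bar{\gamma})$ generic until the final step, whereas you specialize it up front. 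One minor caveat: the exact cancellation of the $b^T(\hat f-\tilde f)$ cross term is pure algebra, valid for any decomposition of the signal, and is not ``bought'' by the orthogonality $\Cov(X,W)=0$ --- that population-level orthogonality plays no role in this deterministic theorem (it matters instead in Theorem \ref{thm:screening} and in bounding $\frac{1}{n}\snorm{\W_{\I(\alpha)^C}\gamma^\ast_{\I(\alpha)^C}}_2^2$ by $\alpha$ in Theorem \ref{thm:main}).
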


The result is deterministic in that it does not require any probabilistic argument. Based on Theorem \ref{thm:prediction}, Theorem \ref{thm:main} then characterizes the scale of $\lambda$ and the corresponding probability that \eqref{eq:lambda} holds.

Finally, we derive a particular prediction error rate $R$ in Step 1 used in Theorem \ref{thm:screening}.
Although the proposed framework does not depend on a specific regression method in Step 1 for fitting the main effects, we take the lasso as an example. Recall that in Step 1 we are fitting a misspecified model, i.e., we treat $W^T \gamma^\ast + \varepsilon$ in \eqref{eq:true_repa} as the noise term and solve the following problem:
$$
\check{\theta} \in \argmin_\theta \left( \frac{1}{2n} \norm{\y - \X\theta}^2_2+\lambda\norm{\theta}_1 \right).
$$
The following theorem gives a prediction error rate for the main effects only lasso that is carried out in Step 1. 
\begin{theorem} [Prediction error in Step 1]
  \label{thm:lasso_step1}
  Suppose that Assumption \ref{a:distn} and \ref{a:dim} hold.  Take
  \begin{align}
    \lambda_0 = C \left(\sigma + \snorm{W^T\gamma^\ast}_{\psi_{1}} \right) \snorm{X}_{\psi_2} \sqrt{\frac{\log p}{n}},
    \nonumber
  \end{align}
  where $C > 0$ is a constant,
  then for any $\lambda \geq \lambda_0$,
  the following bound holds
  \begin{align}
    \frac{1}{n} \snorm{\X \check{\theta} - \X \theta^\ast}_2^2 \leq 4\lambda \snorm{\theta^\ast}_1
    \nonumber
  \end{align}
  with probability greater than $1 - 4 p^{-(\kappa^{1/3} - 1)}$.  
\end{theorem}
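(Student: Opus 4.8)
The plan is to prove this via the standard ``slow-rate'' lasso argument, isolating the one ingredient that is specific to the interaction setting: the effective noise in Step~1 carries the heavy-tailed pure-interaction signal. Writing the sample-level reparameterization \eqref{eq:true_repa} as $\y = \X\theta^\ast + \tilde\bvarepsilon$ with $\theta^\ast = \beta^\ast + \vartheta^\ast$ and effective noise $\tilde\bvarepsilon := \W\gamma^\ast + \bvarepsilon$, the whole argument reduces to controlling the empirical correlations $\frac1n\max_{j}|\tilde\bvarepsilon^T\X_j|$ and then running the usual convex-analytic bound.

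First I would dispatch the deterministic part. By optimality of $\check\theta$ we have the basic inequality $\frac{1}{2n}\norm{\y-\X\check\theta}_2^2 + \lambda\snorm{\check\theta}_1 \le \frac{1}{2n}\norm{\y-\X\theta^\ast}_2^2 + \lambda\snorm{\theta^\ast}_1$. Substituting $\y - \X\check\theta = \X(\theta^\ast - \check\theta) + \tilde\bvarepsilon$ and $\y - \X\theta^\ast = \tilde\bvarepsilon$, expanding the squares and cancelling the common $\frac{1}{2n}\norm{\tilde\bvarepsilon}_2^2$ gives
\[
\frac{1}{2n}\norm{\X(\check\theta-\theta^\ast)}_2^2 \le \frac1n\tilde\bvarepsilon^T\X(\check\theta-\theta^\ast) + \lambda\snorm{\theta^\ast}_1 - \lambda\snorm{\check\theta}_1.
\]
H\"older bounds the cross term by $\big(\frac1n\max_j|\tilde\bvarepsilon^T\X_j|\big)\snorm{\check\theta-\theta^\ast}_1$, so on the event $\{\frac1n\max_j|\tilde\bvarepsilon^T\X_j| \le \lambda\}$ the triangle inequality $\snorm{\check\theta-\theta^\ast}_1 \le \snorm{\check\theta}_1 + \snorm{\theta^\ast}_1$ makes the $\snorm{\check\theta}_1$ terms cancel, leaving $\frac{1}{n}\norm{\X(\check\theta-\theta^\ast)}_2^2 \le 4\lambda\snorm{\theta^\ast}_1$. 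It thus remains only to show that $\lambda_0$ dominates $\frac1n\max_j|\tilde\bvarepsilon^T\X_j|$ with the stated probability.

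The probabilistic heart is bounding $\frac1n\max_{1\le j\le p}|\tilde\bvarepsilon^T\X_j|$. For each fixed $j$ this is an average of $n$ i.i.d. copies of $\varepsilon X_j + (W^T\gamma^\ast)X_j$. The crucial structural fact is that each summand is mean-zero: $\E[\varepsilon X_j]=0$ by independence, and $\E[(W^T\gamma^\ast)X_j] = \Cov(X_j, W^T\gamma^\ast) = 0$ because $\Cov(X,W)=0$ by construction of $W$. For the tails, $\varepsilon X_j$ is a product of two sub-Gaussians and hence $\subw(1)$, while $(W^T\gamma^\ast)X_j$ is a product of the $\subw(1)$ pure-interaction signal and a $\subw(2)$ main effect and hence $\subw(2/3)$ by the product rule for Orlicz norms in Appendix~\ref{app:subweibull}. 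I would then apply the sub-Weibull concentration inequality from the appendix to each centered average at order $\nu=2/3$, obtaining per-coordinate control at deviation level $t \asymp \snorm{X}_{\psi_2}\big(\snorm{W^T\gamma^\ast}_{\psi_1}+\sigma\big)\sqrt{\log p/n}$, matching $\lambda_0$, and finish with a union bound over the $p$ coordinates.

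The main obstacle is the heavy tail of the $\subw(2/3)$ term. Sub-Weibull concentration for $\nu<1$ has two regimes---a sub-Gaussian regime for moderate deviations and a heavier regime for large deviations---and one obtains the clean $\sqrt{\log p/n}$ rate only if the target deviation lands in the sub-Gaussian regime. This is exactly what Assumption~\ref{a:dim} ($\kappa\log p \le \sqrt n$) buys: it forces $\sqrt{\log p/n}$ small enough that, after the union bound, the sub-Gaussian term controls the failure probability while the sub-Weibull tail term remains active at the regime boundary. Since that tail term's exponent scales like $\kappa^{\nu/2}=\kappa^{1/3}$, the per-coordinate failure probability is of order $p^{-c\kappa^{1/3}}$ and the union bound yields $1 - 4p^{-(\kappa^{1/3}-1)}$, rather than the $1-\BigO(p^{-c})$ one would get in the main-effects-only sub-Gaussian setting. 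Verifying that the regime threshold is met and tracking how $\kappa$ propagates through the $\subw(2/3)$ exponent is the one place where care is required; everything else is routine bookkeeping.
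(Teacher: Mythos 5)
Your proposal is correct and follows essentially the same route as the paper's proof: the standard basic-inequality/H\"older slow-rate argument, reduction to the event $\bigl\{\frac{1}{n}\max_j|\X_j^T(\W\gamma^\ast+\bvarepsilon)|\le\lambda_0\bigr\}$, the key observation that $\E[\varepsilon X_j]=0$ and $\E[X_jW^T\gamma^\ast]=\Cov(X_j,W^T\gamma^\ast)=0$, and then the sub-Weibull concentration of Theorem \ref{thm:GBO} plus a union bound with Assumption \ref{a:dim} converting the exponential tails into polynomial-in-$p$ probabilities. The only cosmetic differences are that the paper applies the concentration at order $\nu=1$ to $\varepsilon X_j$ and $\nu=2/3$ to $X_jW^T\gamma^\ast$ (rather than $\nu=2/3$ to both), and that your heuristic exponent $\kappa^{\nu/2}=\kappa^{1/3}$ understates the actual exponents ($\kappa^{1/2}$ and $\kappa^{2/3}$) that the computation yields, so the stated $1-4p^{-(\kappa^{1/3}-1)}$ holds a fortiori.
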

We see, through $\lambda$'s dependence on $\snorm{W^T \gamma^\ast}_{\psi_1}$, that the presence of the pure interaction signal leads to rates that could be less good than if no interactions were present. This is the price paid for fitting a misspecified model in Step 1. Also this prediction bound holds under Assumption \ref{a:dim}, which is a stricter sample size requirement due to dealing with the empirical process that involves interactions, which have heavier tails than main effects. 
This bound is aligned with the oracle inequality for the constrained lasso in the misspecified model (with a sub-Weibull(1) misspecification part).
Under stronger conditions (e.g., compatibility conditions on $\theta^\ast$), a faster prediction error rate in Step 1 could be derived.

\subsection{A Gaussian example} \label{subsec:example}
In this section we study the condition that $\eta(\alpha) \geq \eta^\ast$ required both in Theorem \ref{thm:main} and Theorem \ref{thm:screening} in the case where $X$ follows a Gaussian distribution. We defer the detailed computation to Appendix \ref{app:example}, where we also consider an example where $X$ is not symmetric. 

We let $\tau: [p]^2 \rightarrow [q]$ map the interaction between $X_j$ and $X_k$ to its corresponding index in $Z$, i.e., $Z_{\tau(j, k)} = X_j X_k$.
Consider the simple case where $X \sim N(\mathbf{0}, \Sigma)$, and there is only one true interaction, i.e., $\supp(\gamma^\ast) = \{\tau(1, 2)\}$. Without loss of generality, we assume that $\Sigma_{jj} = 1$ for all $j = 1, \dots, p$, so that for any pair of variables $X_j$ and $X_k$, their covariance $\sigma_{jk}$ equals their correlation coefficient $\rho_{jk}$. Recall from Section \ref{sec:principle} that in the Gaussian case, $\Phi = \0$, $W = Z$ and $\theta^\ast = \beta^\ast$.
We discuss two cases, depending on the size of $|\gamma_{\tau(1, 2)}^\ast|$:    
\begin{enumerate}
  \item (Strong interaction) Suppose the signal is strong in that 
    \begin{align}
      |\gamma^\ast_{\tau(1, 2)}| > r(n, q)^2,
      \nonumber
    \end{align}
    where
    \begin{align}
      r(n, q)^2 = 18 K^2 \snorm{\beta^\ast}_{1} \left( \frac{\log p}{n}  \right)^{1/2} + 12K  \snorm{\beta^\ast}_1^{1/2} \sigma^{1/2} \left( \frac{\log p}{n}  \right)^{1/4} + 12K \sigma \left( \frac{\log p}{n}  \right)^{1/2},
      \nonumber
    \end{align}
    and $K > 0$ is the constant in \eqref{eq:eta_star}.
    Appendix \ref{app:example} shows that $|\gamma^\ast_{\tau(1, 2)}| > r(n, q)^2$ is a sufficient condition under which $\eta(\alpha) \geq \eta^\ast$ holds for any $\alpha \geq 0$. 
  \item (Weak interaction) Suppose the signal strength is weak in that 
    \begin{align}
      |\gamma^\ast_{\tau(1, 2)}| \leq r(n, q)^2.
      \nonumber
    \end{align}
    Taking
    \begin{align}
      \alpha = 3 r(n, q)^4 = \BigO \left( \sigma \snorm{\beta^\ast}_1 \left( \frac{\log p}{n}  \right)^{1/2} \right),
      \label{eq:alpha}
    \end{align}
    we can show that
    $\I(\alpha) = \emptyset$.
    As a result, \eqref{eq:eta_alpha} implies that $\eta(\alpha) = \infty$, which is trivially greater than $\eta^\ast$. 
\end{enumerate}
Finally we note that the reason for condition $\eta(\alpha) \geq \eta^\ast$ to be satisfied regardless of the size of the interaction signal strength is because of our focus on prediction error in Theorem \ref{thm:main} and Theorem \ref{thm:screening} instead of the exact recovery of specific interactions.

\section{Numerical studies} \label{sec:num}
\subsection{Simulation studies: binary features} \label{subsec:binary}
We consider a simulation scenario with binary features in which some but not all interactions can be well approximated by main effects.  We generate $p$ binary features as follows: for a (perfect) binary tree of depth $d$, each leaf node is an independent Bernoulli($0.1$) random variable; the value of each non-leaf node is the maximum of the node values in its sub-tree, i.e., each non-leaf node represents an event that is the union of all the events represented by its children nodes. The total number of nodes in the tree is $p = 2^{d + 1} - 1$, and we consider these node values as main effects. 
This construction ensures that for any pair of main effects, they are either independent or else one is an ancestor of the other. The interaction between two binary features is simply the intersection of the two main effect events, so in this second case their interaction is simply the main effect corresponding to the descendant node. Figure~\ref{fig:btree} shows the binary tree (of depth $5$), where each node represents a main effect, and the node color represents the success probability of the corresponding Bernoulli random variable.
\begin{figure}[ht]
  \centering
  \captionsetup{width=.8\textwidth}
  \includegraphics[width = .8\textwidth]{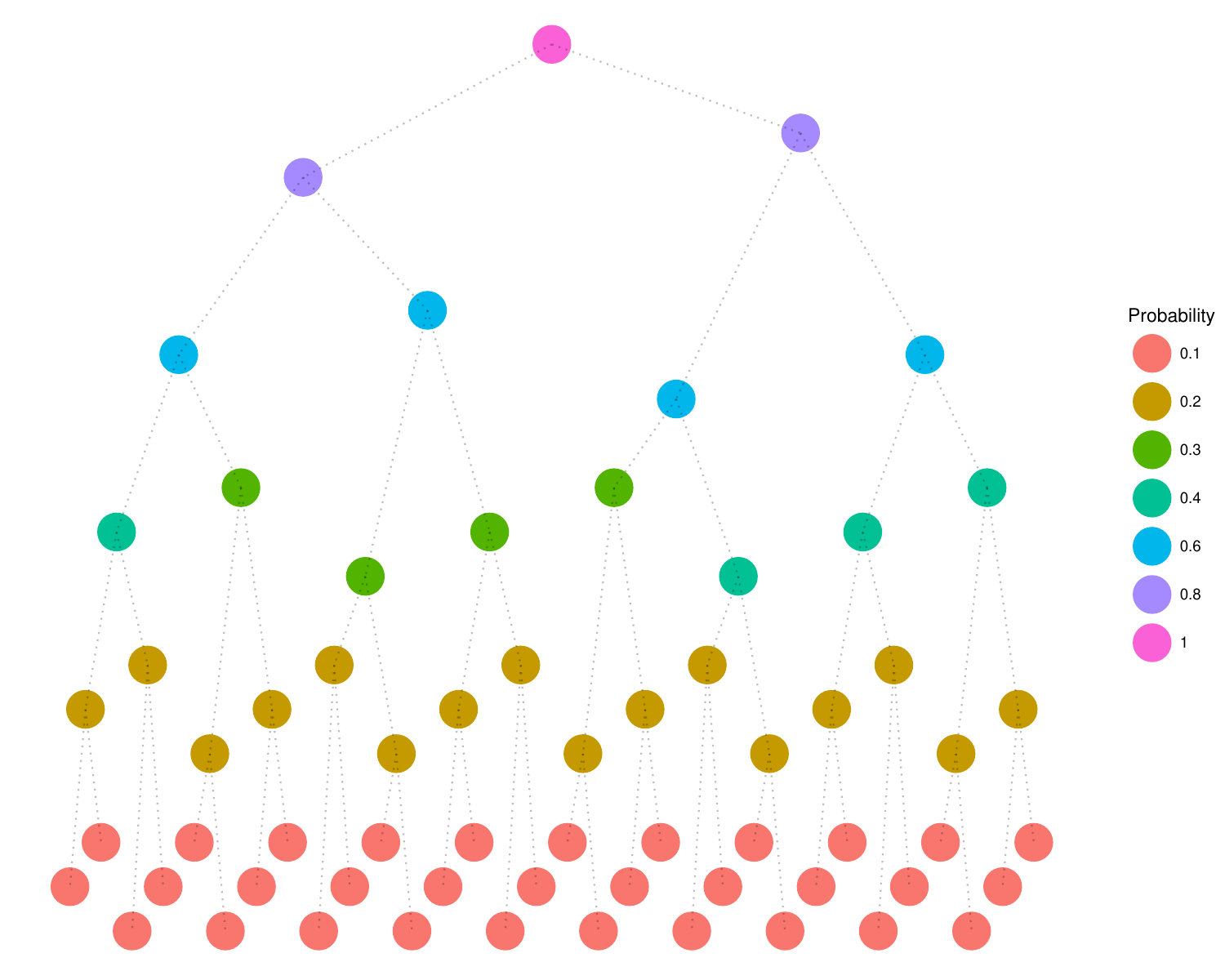}
  \caption{An example of the perfect binary tree, representing main effects. Node color represents the success probability (rounded to 1 decimal place) of the corresponding Bernoulli random variable.}
  \label{fig:btree}
\end{figure}
We can control the degree to which the interaction signal can be explained by main effects by choosing the proportion of nonzero elements of $\gamma^\ast$ correspond to interactions between main effects that are ancestors/descendants of each other versus not.
We consider three scenarios: (a) almost all interactions can be explained by main effects; (b) approximately half of the interactions can be explained by main effects; and (c) a very limited amount of interactions can be explained by main effects. These three scenario correspond to three cases where the {\em main-effect-interaction-ratio},
\begin{align}
  \mathrm{MIR} = \frac{\norm{\X \theta^\ast}_2^2}{\norm{\W \gamma^\ast}_2^2},
  \nonumber
\end{align}
is large, medium, and small. We would expect sprinter to perform especially well when $\mathrm{MIR}$ is large. For each value of MIR, we generate the response $\y$ using \eqref{eq:true_repa} with the zero-mean additive noise $\varepsilon$ generated according to the signal-to-noise ratio $\frac{\norm{\X \theta^\ast}^2 + \norm{\W \gamma^\ast}^2}{n\sigma^2} \in \left\{ 0.3, 0.5, 1, 1.5, 2, 2.5, 3\right\}$.
We generate $n = 100$ samples in each simulation setting, and in Figure \ref{fig:binary} we report the prediction error (on another $n = 100$ testing samples) of various methods (averaged over $50$ repetitions).
In particular, we compare the performance of the following methods:
\begin{itemize}
  \item The all pairs lasso (\texttt{APL}) with tuning parameter selected by cross-validation. We use the \texttt{R} package \texttt{glmnet} to implement APL.
  \item \texttt{sprinter}, as in Algorithm \ref{alg:first}, with lasso using main effects and squared effects in Step 1. In Step 2, we use the ``top-$m$ approach'' as in \eqref{eq:Ihat_m} with $m = \lceil n / \log(n) \rceil$. We use a two-dimensional cross-validation to select $(\lambda_1, \lambda_3)$, the tuning parameter pair for the lasso in Step 1 and Step 3, respectively. Both lasso fits are implemented using \texttt{glmnet}.
  \item The main effects lasso (\texttt{MEL}) with tuning parameter selected by cross-validation.
  \item \texttt{SIS and Lasso}: We use sure independence screening \citep{fan2008sure} on all main effects and interactions, and fit the lasso on the selected candidate features.
  \item Interaction Pursuit (\texttt{IP}) by \citet{2016arXiv160508933F}.
\end{itemize}
\begin{figure}[ht]
  \centering
  \captionsetup{width=.8\textwidth}
  \includegraphics[width = \textwidth]{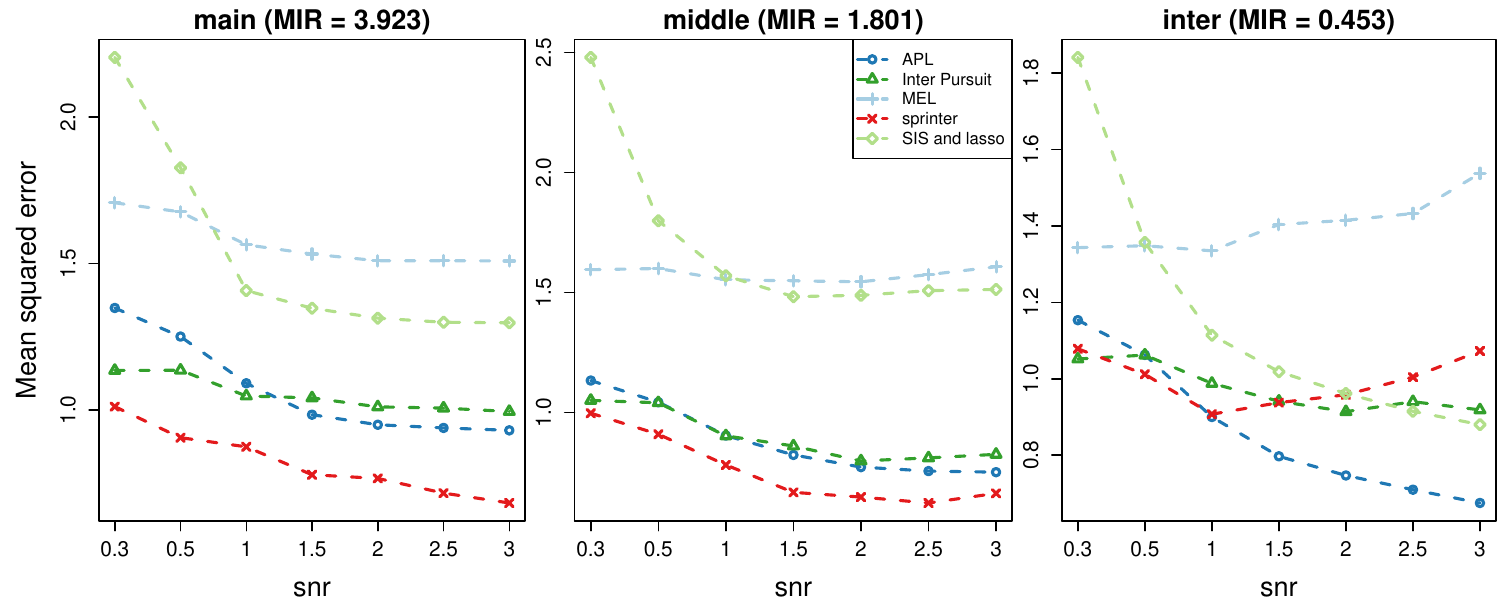}
  \caption{Prediction mean-squared error of different methods (averaged over 50 repetitions, binary settings).}
  \label{fig:binary}
\end{figure}
As MIR gets small, the performance of MEL worsens relative to other methods that model interactions. The performance of sprinter is favorable in comparison with other methods, especially when most of the signal can be well captured by main effects (as expected by the design of $\mathrm{MIR}$). The third panel shows that APL achieves favorable performance compared with other methods when the signal is sufficiently strong and concentrated primarily in pure interactions. As we will see in subsequent simulation studies, the performance advantage of sprinter over APL is more pronounced when $p$ is much larger.

Furthermore, Figure~\ref{fig:binary_nnzi} shows that the relationship between the number of selected interactions and the prediction error in the above three settings where the signal-to-noise ratio is 2. We observe that compared with IP, sprinter attains smaller prediction error using similar numbers of selected interactions. 
Compared with APL, sprinter selected a leaner set of interactions while achieving favorable prediction error (except in the last setting) .

\begin{figure}[ht]
  \centering
  \captionsetup{width=.8\textwidth}
  \includegraphics[width = \textwidth]{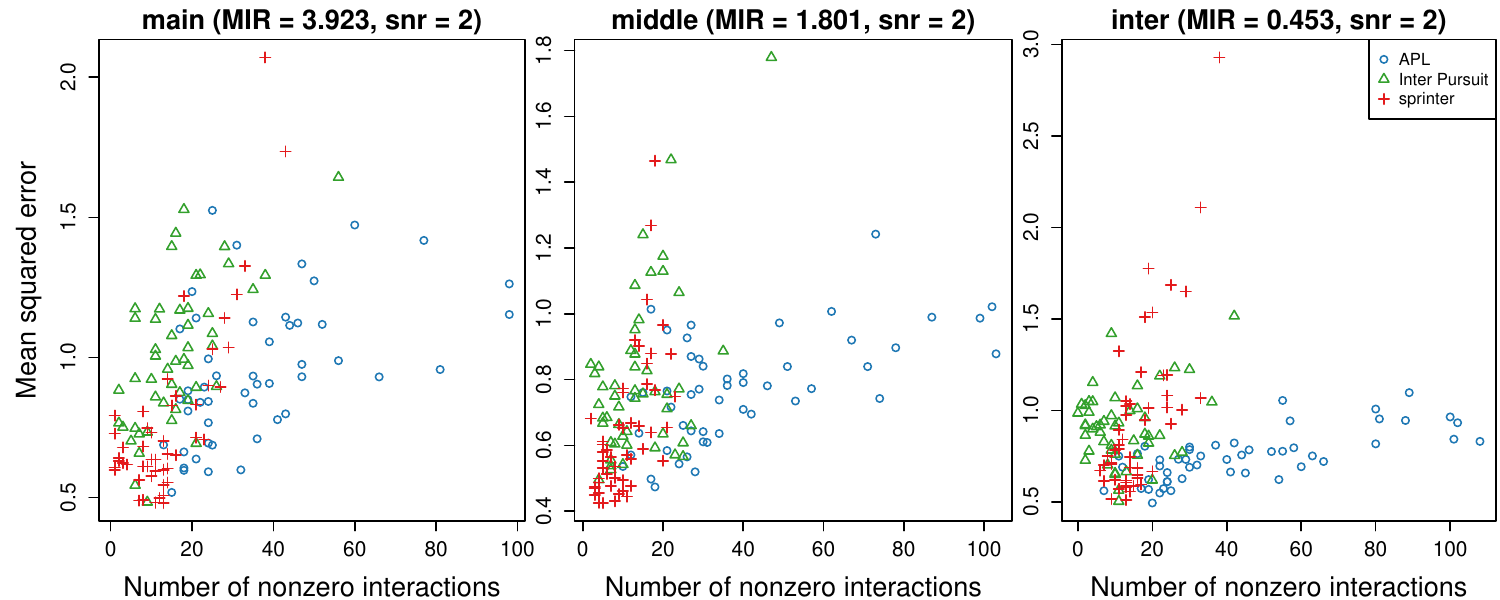}
  \caption{Prediction mean-squared error v.s. the number of selected interactions. Each point represents one (out of 50) repetition.}
  \label{fig:binary_nnzi}
\end{figure}

\subsection{Simulation studies: Gaussian features and hierarchy} \label{subsec:sim_Gaussian}
Next we study the performance of sprinter under different interaction structures when the main effects follow a multivariate Gaussian distribution.
We generate $n = 100$ samples from model \eqref{model:true}, where $X$ is a $p$-dimensional random vector following a multivariate zero-mean Gaussian distribution with $\Cov(X_j, X_k) = 0.5^{|j-k|}$ for $1 \leq j, k \leq p$, and $p = 400$.
Recall that our proposed idea of reluctance to interactions in Section \ref{sec:principle} is different from the hierarchical principle for interactions, and the proposed method does not assume hierarchy; actually sprinter does not assume any structure among interactions. 
Denote $\T_1$ as the indices of non-zero main effects, $\T_2$ as the indices of non-zero squared terms
and $\T_3$ as indices of non-zero interaction terms, and consider the following structures for the interactions:
\begin{enumerate}
  \item Mixed: $\T_1 = \left\{ 1, 2, \dots, 6 \right\}, \T_2 = \left\{ 1, 5, 15 \right\}, 
    \T_3 = \left\{ \left( 1, 5 \right), \left( 4, 18 \right), \left( 10, 11 \right), \left( 9, 17 \right), \left( 1, 13 \right), \left( 4, 17 \right) \right\}$.
  \item Hierarchical, i.e., $\beta_{jk} \neq 0 \implies \beta_j \neq 0 \text{ or } \beta_k \neq 0$:
    $\T_1 = \left\{ 1, 2, \dots, 6 \right\}$ , $\T_2 = \left\{ 1, 2, 3 \right\}$ and 
    $\T_3 = \left\{ (1, 3), (2, 4), \left( 3, 4 \right), (1, 8), (2,8), (5, 10) \right\}$.
  \item Anti-hierarchical, i.e., $\beta_{jk} \neq 0 \implies \beta_j = 0, \beta_k = 0$:
    $\T_1 = \left\{ 1, 2, \dots, 6 \right\}$, $\T_2 = \left\{ 11, 12, 13 \right\}$ and
    $\T_3 = \left\{ (11, 13), (12, 14), \left( 13, 14 \right), (11, 18), (12,18), (15, 20) \right\}$.
  \item Interaction only: $\T_1 = \T_2 = \emptyset$ and 
    $\T_3 = \left\{ (1, 3), (2, 4), \left( 3, 4 \right), (1, 8), (2,8), (5, 10) \right\}$.
  \item Main effects only:  
    $\T_1 = \left\{ 1, 2, \dots, 6 \right\}$ , $\T_2 = \emptyset$ and $\T_3 = \emptyset$.
  \item Squared effects only:  
    $\T_1 = \emptyset$ , $\T_2 = \left\{ 1, 2, \dots, 6 \right\}$ and $\T_3 = \emptyset$.
\end{enumerate}
Note that the hierarchy structure only exists in the hierarchical model and the main effects only model.
The signal strength is then set as $\beta^\ast_j = 2$ for $j \in \T_1$, $\gamma^\ast_j = 3$ for $j \in \T_2$ and $j \in \T_3$.
Finally, the zero-mean additive noise $\varepsilon$ in \eqref{model:true} is generated according to the signal-to-noise ratio $\sqrt{\frac{\norm{\X \beta^\ast}^2 + \norm{\Z \gamma^\ast}^2}{n\sigma^2}} \in \left\{ 0.3, 0.5, 1, 1.5, 2, 2.5, 3\right\}$.

In addition to the methods considered in the previous study, we also include the performance of the following two methods:
\begin{itemize}
  \item \texttt{sprinter(1cv)}, with everything the same as sprinter, except that cross-validation of $\lambda_1$ is performed before subsequent steps. See Section \ref{subsec:sprinter1cv} for details.
  \item \texttt{RAMP} \citep{hao2018model}, which iteratively adds variables into a path of solutions under the marginality (hierarchy) principle. 
    They also consider the two-stage lasso, but state that \texttt{RAMP} performs better than the two-stage lasso \citep{hao2014interaction}. The performance of \texttt{RAMP} is very unstable for binary features, and thus is not included in Section \ref{subsec:binary} or later in Section \ref{subsec:tripadvisor}.
\end{itemize}
\begin{figure}[ht]
  \centering
  \captionsetup{width=.8\textwidth}
  \includegraphics[width = .9\textwidth]{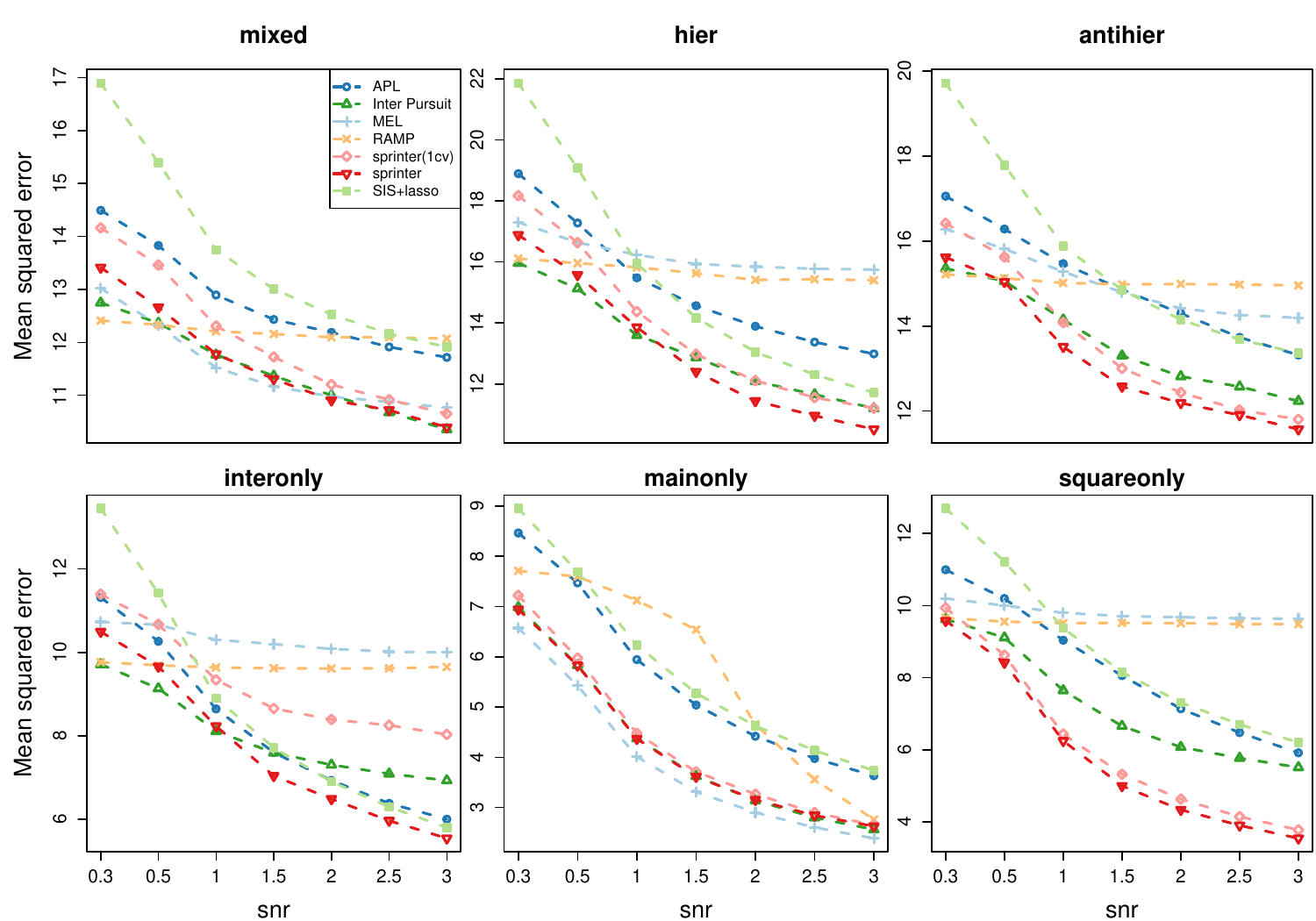}
  \caption{Prediction mean-squared error of different methods (averaged over 100 repetitions, Gaussian settings).}
  \label{fig:across_pred}
\end{figure}
We measure the statistical performance of each method in prediction error, which is averaged over 100 repetitions and is reported in Figure~\ref{fig:across_pred}.
Observe that sprinter almost works uniformly better than other methods in all settings except in the model where main effects are dominating. 
This is because sprinter includes both main effects and the squared effects in Step 1, which involves $p$ irrelevant squared effects. Actually, although not shown here, sprinter works much better in this setting if it uses only main effects in Step 1.

We also note that the prediction performance advantage of \texttt{sprinter} over \texttt{sprinter(1cv)} is rather marginal, except for the interaction-only model. As discussed in Section \ref{subsec:sprinter1cv}, when there are only interactions in the model, \texttt{sprinter(1cv)} tends to under-penalize the main effects in the cross-validation of Step 1, and thus has much worse performance than \texttt{sprinter}.

\subsection{Simulation studies: computation time}
In this section, we show that sprinter is much more computationally efficient than APL, while having similar (if not better) statistical performance. To this end, we consider varying  $p \in \left\{ 100, 200, 1000, 2000 \right\}$ in the mixed model in Section \ref{subsec:sim_Gaussian} with signal-to-noise ratio equal to $3$. The following plots show both the computation time (in seconds) and the prediction mean squared error (averaged over 100 repetitions).
\begin{figure}[ht]
  \centering
  \captionsetup{width=.8\textwidth}
  \includegraphics[width = .9\textwidth]{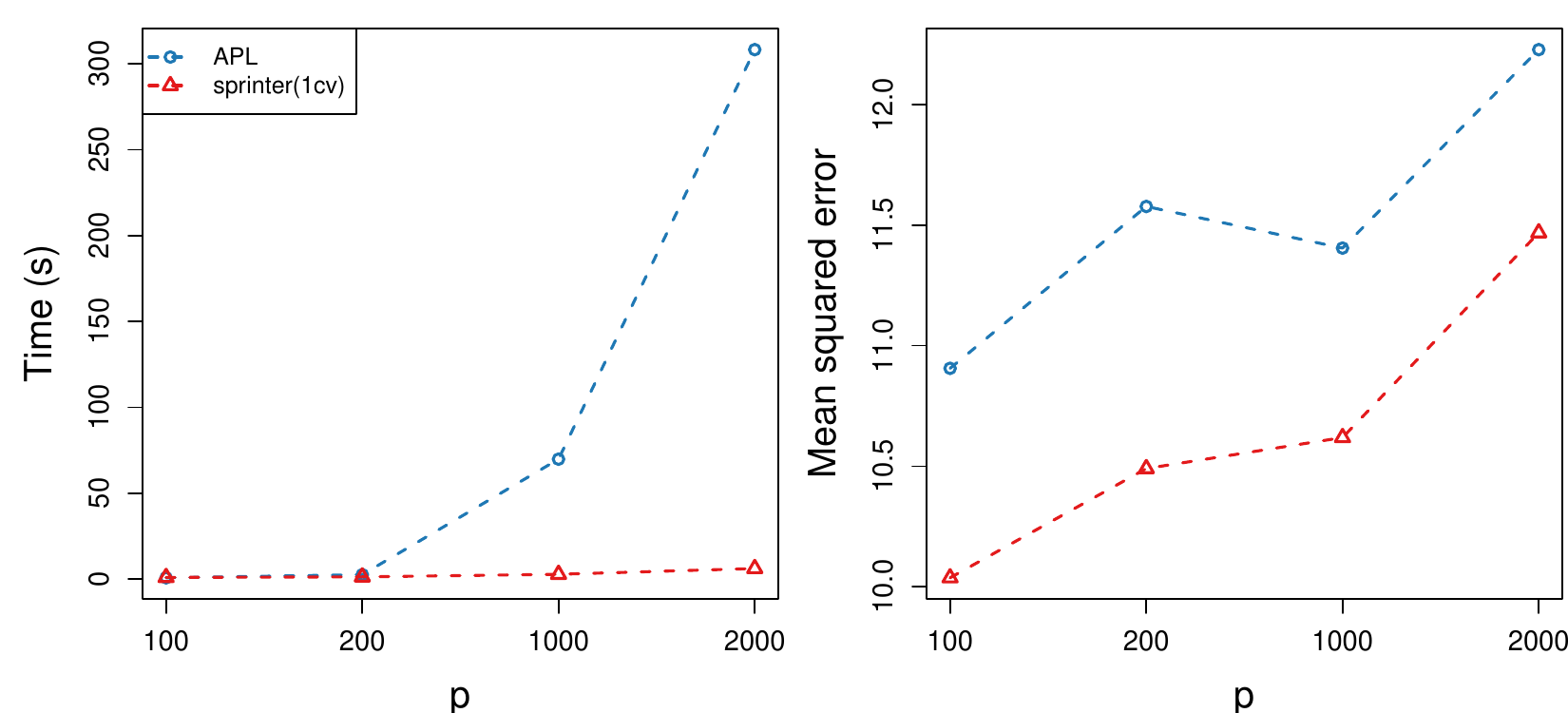}
  \caption{Computation time and prediction mean-squared error (averaged over 100 repetitions) for different $p$ in the mixed model.}
  \label{fig:pred_time}
\end{figure}
As expected, APL is computationally much more expensive than the proposed method. In particular, for $p = 2000$, the proposed method is about 100 times faster than APL. In addition, while not shown, sprinter can solve a problem with $140000$ main effects (about 10 billion interactions, which is infeasible for APL) with 5-fold cross-validation under 7 hours on a single CPU. 

In addition to enjoying obvious computational benefits, the right panel of Figure \ref{fig:pred_time} shows that the proposed method does not lose statistical performance in terms of prediction error. Actually, sprinter attains uniformly smaller prediction error than APL. 

Furthermore, Figure~\ref{fig:nnzm_nnzi} shows that compared with APL, sprinter achieves this favorable performance with a much smaller number of selected interactions (and more selected main effects) --- a property that is expected by prioritizing main effects over interactions, and is beneficial for interpretation. 
\begin{figure}[ht]
  \centering
  \captionsetup{width=.8\textwidth}
  \includegraphics[width = \textwidth]{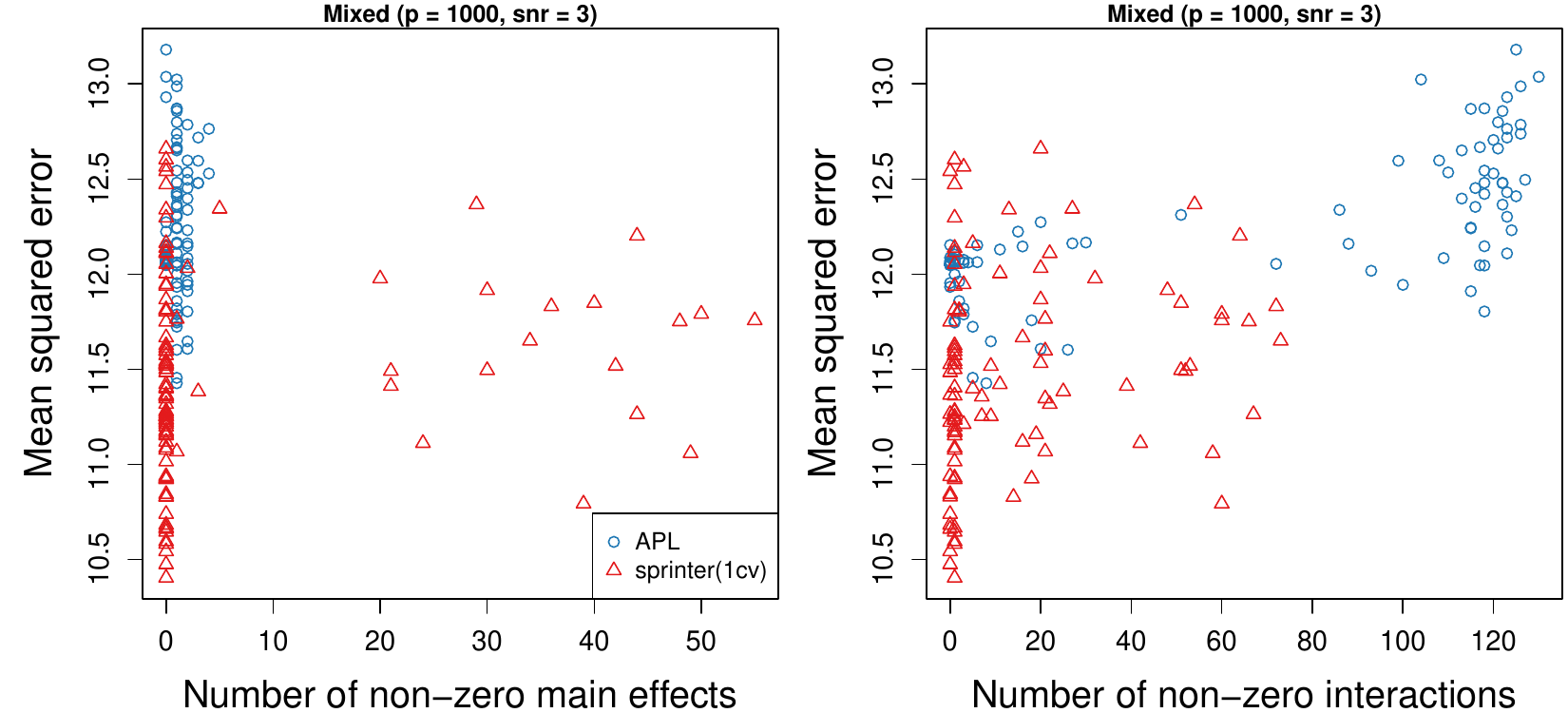}
  \caption{Number of non-zero main effects, interactions, and prediction mean-squared error for $p = 1000$ in the mixed model. Each point represents one (out of 100) repetition.}
  \label{fig:nnzm_nnzi}
\end{figure}


\subsection{Data example: Tripadvisor hotel reviews} \label{subsec:tripadvisor}
We consider the dataset prepared in \citet{wang2010latent}, which contains reviews and the corresponding ratings (on the integer scale of $\{1, \ldots, 5\}$) of hotels on \texttt{Tripadvisor.com}. 
We further construct a dictionary, consisting of $p = 7817$ distinct words following the approach in \citet{yan2020rare}. Most of the words are adjectives, while we also include words that might alter the meaning of an adjective such as \textit{but}, \textit{not}, \textit{should}, etc.  
In total, the dataset contains $n = 211321$ reviews and is represented in the binary design matrix $\X \in \real^{n \times p}$, where $\X_{ij} = 1$ if the $j$-th word in the dictionary appears in the $i$-th review, and $\X_{ij} = 0$ otherwise.

Our goal is to predict the hotel rating based on the words used in the review. While \eqref{model:true} is certainly an overly simplistic model for text data (especially in light of advances in natural language processing in recent years), the ability to easily interpret words makes this a helpful illustration of our method. Note that  $\X_{ij} \X_{ik}=1$ indicates that both the $j$-th word and the $k$-th word appear in the $i$-th review. The co-occurrence of two key words can provide useful information in the prediction task.
For example, while a negative sounding word like \textit{``worst''} might be associated with lower ratings, its co-occurrence with words like \textit{``not''} (as in {\it ``not the worst''}) would yield a very important interaction for prediction. 

Due to memory constraints (that applies to all methods considered), we randomly select $10\%$ of the whole data as a training set, and the remaining $90\%$ as a testing set.  We compare several methods: \texttt{MEL}, a two-stage hierarchical lasso \citep[our implementation of a method proposed in][]{hao2018model}, and \texttt{sprinter}. The problem size is too big for APL.
In Table \ref{tab:tripadvisor}, we report the prediction root mean squared error on the testing set. In addition, we report the number of selected main effects and interactions. The performance of the other methods in previous sections is much worse, and thus are not included. 
\begin{table}[H]
  \centering
  \begin{tabular}{ |l | c | c | c |}
    \hline
    method & prediction RMSE & \# main effects & \# interactions \\ \hline
    MEL & 1.153 & 689 & 0  \\ \hline
    Two Stage Hierarchical Lasso & 1.125 & 500 & 992 \\ \hline
    sprinter & 1.091 & 304 & 526 \\ \hline
  \end{tabular}
  \captionsetup{width=.85\textwidth}
  \caption{Tripadvisor hotel reviews: prediction root mean squared error on the testing set, number of selected main effects and interactions}
  \label{tab:tripadvisor}
\end{table}
Comparing with sprinter, the prediction error of MEL suggests the importance of including interactions in the model, and the prediction error of the two-stage hierarchical lasso may suggest that the true underlying interaction structure is not hierarchical. 

In the sprinter model, the top $5$ main effects with largest positive coefficient estimates are \textit{excellent}, \textit{fantastic}, \textit{perfect}, \textit{wonderful}, and \textit{loved}, while the top $5$ main effects with largest negative coefficient estimates are \textit{worst}, \textit{rude}, \textit{poor}, \textit{dirty}, and \textit{awful}. The sign and the magnitude of the coefficient estimates corresponding to these words are of course highly aligned with the words' actual meanings.

The coefficient estimates for interactions are quite interesting. As expected, the interactions with the large positive coefficients include \textit{awful}$\times$\textit{not} and \textit{rude}$\times$\textit{but}.  This makes sense since the words {\em not} and {\em but} often have the effect of negating or at least weakening the sentiment of the words around them.   Interestingly, the coefficient corresponding to the interaction \textit{wonderful}$\times$\textit{superb}  and \textit{great} $\times$ \textit{fantastic} have large negative estimate.  We view this as reflecting the diminishing returns of superlative synonyms: a review containing both \textit{wonderful} and \textit{superb} does not make the rating twice as good as if the review only contains one of the words.  The interaction serves to reduce the additive effects of the use of synonyms. This is especially sensible given that we are using presence-absence coding of words because it means that using \textit{wonderful} and \textit{superb} together is treated the same as if someone used \textit{wonderful} twice.

\section{Discussion}
In this paper, we present sprinter (sparse reluctant interaction modeling),  a general multi-step framework that is highly scalable to large interaction problems. Our proposal is motivated by a new guiding principle in interaction modeling, which favors main effects over interactions if all else is equal. We show, in both theoretical analysis and numerical studies, that sprinter achieves strong computational efficiency, without compromising favorable statistical properties.

Although not discussed in this paper, the sprinter framework can be easily generalized to higher order interaction modeling. The idea of screening interactions based on the residual from fitting a model of lower order terms can be used repeatedly to select higher order interactions. For example, after Step 3 of Algorithm \ref{alg:first}, we could obtain a residual vector, which could then be used in screening three-way interactions. This approach can be used recursively in modeling higher order interactions, by prioritizing lower-order terms over higher-order terms if all else is equal.

The idea of enforcing reluctance to interactions could be further generalized to a broader context beyond interaction modeling.
This is particularly useful when side information makes us prefer one group of variables over another. For example, one might be reluctant in using a group of expensive variables, and thus would try to capture the response as much as possible by first only using the group of cheap variables. This is an interesting future research direction.

One can also extend the proposed reluctant interaction modeling framework to using a general loss function $\ell$ (e.g., as in generalized linear models), in which case the screening in Step 2 could be based on absolute correlation between each interaction and $\nabla \ell$ evaluated in Step 1.

Finally, an R \citep{citeR} package, named \texttt{sprintr}, is available online, implementing our method. The estimation is very fast with the core screening functions coded in \texttt{Rcpp} \citep{rcpp}.
The simulation studies in Section \ref{sec:num} use the \texttt{simulator} package \citep{2016arXiv160700021B}, and the code to reproduce all numerical studies is available at \url{https://github.com/hugogogo/reproducible/tree/master/sprinter}. The \texttt{Tripadvisor} data are available on request from the corresponding author. These data were derived from the following resources available in the public domain: \url{http://times.cs.uiuc.edu/~wang296/Data/}.

\section*{Acknowledgment}
We thank Daniela Witten for helpful feedback. 
GY and JB were supported by NIH Grant R01GM123993. JB was also supported by NSF CAREER Award DMS-1653017. RT was supported by NSF CAREER Award DMS-1554123.

\vskip 0.2in
\bibliographystyle{agsm}
\bibliography{short.bib}

\newpage

\appendices
\textbf{Organization:}
Appendix \ref{app:subweibull} first gives the technical tools for the theoretical analysis. Then Appendix \ref{app:proof_screening}, \ref{app:proof_prediction}, and \ref{app:proof_step1} prove the building blocks in order to prove Theorem \ref{thm:main}, which is then given in Appendix \ref{app:proof_thm_main}.
Appendix \ref{app:screening_m} provides the theoretical guarantees of Step 2 with the ``top-m'' approach.
In Appendix \ref{app:example} we give detailed discussions on the validity of the conditions such that Theorem \ref{thm:screening} holds in examples of Gaussian and Bernoulli main effects.

\section{Useful inequalities for sub-Weibull random variables} \label{app:subweibull}
We first present the following property of a $\subw$ random variable. 
\begin{lemma} \label{lem:exp}
  If $U \sim \subw(\nu)$ with norm $\snorm{U}_{\psi_\nu}$, then for any integer $k \geq 1$, we have
  \begin{align}
    \E \left[ |U|^k \right] \leq 2 \snorm{U}_{\psi_\nu}^k \frac{k}{\nu} \Gamma \left( \frac{k}{\nu}  \right),
  \end{align}
  where $\Gamma(x) = \int e^{-t} t^{x - 1} dt$ is the Gamma function.
\end{lemma}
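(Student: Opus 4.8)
The plan is to convert the Orlicz-norm definition into a sub-Weibull tail bound and then integrate that tail against the layer-cake representation of the $k$-th absolute moment. First I would set $\zeta = \snorm{U}_{\psi_\nu}$, so that the definition \eqref{eq:sb} gives $\E[\exp(|U|^\nu/\zeta^\nu)] \leq 2$. Applying Markov's inequality to the nonnegative random variable $\exp(|U|^\nu/\zeta^\nu)$ then yields, for every $s > 0$,
\[
  \Prob(|U| > s) = \Prob\!\left(\exp(|U|^\nu/\zeta^\nu) > \exp(s^\nu/\zeta^\nu)\right) \leq 2\exp\!\left(-s^\nu/\zeta^\nu\right).
\]
This is the standard exponential tail bound for a $\subw(\nu)$ variable, and it is the only place where the Orlicz-norm hypothesis is used.

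Next I would invoke the layer-cake identity $\E[|U|^k] = \int_0^\infty k s^{k-1}\Prob(|U| > s)\,ds$ (equivalently, substitute $t = s^k$ in $\E[|U|^k] = \int_0^\infty \Prob(|U|^k > t)\,dt$) and plug in the tail bound to obtain
\[
  \E[|U|^k] \leq 2k \int_0^\infty s^{k-1}\exp\!\left(-s^\nu/\zeta^\nu\right) ds.
\]
The final step is the change of variables $u = (s/\zeta)^\nu$, giving $s = \zeta u^{1/\nu}$ and $ds = (\zeta/\nu)\,u^{1/\nu - 1}\,du$. Collecting powers of $u$ turns the integrand into $\zeta^k u^{k/\nu - 1}e^{-u}$ up to the prefactor $2k/\nu$, and $\int_0^\infty u^{k/\nu - 1} e^{-u}\,du = \Gamma(k/\nu)$ by definition of the Gamma function. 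This produces $\E[|U|^k] \leq 2\zeta^k (k/\nu)\Gamma(k/\nu) = 2\snorm{U}_{\psi_\nu}^k (k/\nu)\Gamma(k/\nu)$, which is the claimed inequality; the Weibull-type tail also guarantees every such moment is finite, so there is no integrability concern.

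The computation is routine, so there is no substantial obstacle; the only point meriting care is justifying that the defining expectation is in fact $\leq 2$ at $\zeta = \snorm{U}_{\psi_\nu}$ itself rather than merely at values strictly larger. I would settle this by noting that $\zeta \mapsto \E[\exp(|U|^\nu/\zeta^\nu)]$ is continuous and nonincreasing (by monotone convergence), so the infimum in \eqref{eq:sb} is attained; alternatively one runs the entire argument with $\zeta + \epsilon$ in place of $\zeta$ and lets $\epsilon \downarrow 0$, since both sides of the final bound are continuous in $\zeta$.
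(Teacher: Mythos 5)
Your proof is correct and follows essentially the same route as the paper's: Markov's inequality applied to $\exp(|U|^\nu/\snorm{U}_{\psi_\nu}^\nu)$ to get the sub-Weibull tail bound, then the layer-cake representation of $\E[|U|^k]$ with a change of variables producing $\Gamma(k/\nu)$. Your additional remark on why the defining expectation is $\leq 2$ at the infimum itself (via monotone convergence) is a point the paper glosses over, but it does not change the argument.
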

\begin{proof}
  First we have
  \begin{align}
    \Prob \left( |U| > x \right) &= \Prob \left[ \exp\left( \frac{|U|^\nu}{\snorm{U}_{\psi_\nu}^\nu}  \right) > \exp \left( \frac{x^\nu}{\snorm{U}_{\psi_\nu}^\nu}  \right)  \right] \nonumber\\
                                 &\leq\E \left[ \exp\left( \frac{|U|^\nu}{\snorm{U}_{\psi_\nu}^\nu}  \right) \right]\exp \left( -\frac{x^\nu}{\snorm{U}_{\psi_\nu}^\nu}  \right) \leq 2 \exp \left( -\frac{x^\nu}{\snorm{U}_{\psi_\nu}^\nu}  \right),
                                 \nonumber
  \end{align}
  where the first inequality is Markov inequality, and the second inequality holds from Definition \ref{def:sb}.
  Then 
  \begin{align}
    \E \left[ |U|^k \right] &= \int_{0}^\infty \Prob \left( |U|^k > x \right) dx = \int_0^\infty \Prob \left( |U| > x^{1/k} \right) dx \nonumber\\
                            &\leq 2 \int_0^\infty \exp \left[ - \frac{x^{\nu / k}}{\snorm{U}_{\psi_\nu}^\nu}  \right] dx
                            = 2 \snorm{U}_{\psi_\nu}^k \frac{k}{\nu} \int_{0}^\infty \exp(-t) t^{\frac{k}{\nu} - 1} dt = 2 \snorm{U}_{\psi_\nu}^k \frac{k}{\nu} \Gamma\left( \frac{k}{\nu} \right),
                            \nonumber
  \end{align}
  where we use the change of variable $t = \frac{x^{\nu / k}}{\snorm{U}_{\psi_\nu}^\nu} $.
\end{proof}

The following theorem serves as the main tool for our theoretical analysis. It gives concentration inequalities for the average of $n$ $\iid$ $\subw(\nu)$ random variables. As the definition of $\subw(\nu)$ a generalization of sub-Gaussian and sub-Exponential random variables, the following theorem reduces to Hoeffding's inequality for sub-Gaussian random variables when $\nu = 2$, and it reduces to Bernstein inequality for sub-Exponential random variables when $\nu = 1$.
\begin{theorem} \label{thm:GBO}
  If $U_1, ..., U_n$ are $\iid$ $\subw(\nu)$ random variables with $\nu \leq 1$, then the following bound holds:
  \begin{align}
    \Prob \left( \left| \frac{1}{n} \sum_{i = 1}^n U_i - \E [U_1] \right| \geq C(\nu) \norm{U_1}_{\psi_\nu} \frac{t^{ \frac{2 + \nu}{4\nu} }}{n^{3/4}} \right)   \leq 2 e^{-t},
    \label{eq:concentration}
  \end{align} 
  where $C(\nu) > 0$ only depends on $\nu$.
\end{theorem}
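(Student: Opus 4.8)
The plan is to translate the Orlicz-norm hypothesis into polynomial moment control and then run a moment/Markov argument. This route is essentially forced by the regime $\nu < 1$: the moment generating function of $U_1$ is infinite, so the classical Chernoff--Bernstein (MGF) method is unavailable, and one must work either through $L^k$ moments or through truncation. First I would reduce to the mean-zero case. Writing $V_i = U_i - \E U_i$, the triangle inequality for $\snorm{\cdot}_{\psi_\nu}$ together with $|\E U_1| \le \E|U_1| \le C(\nu)\snorm{U_1}_{\psi_\nu}$ (Lemma~\ref{lem:exp} with $k=1$) and the fact that the $\psi_\nu$-norm of a constant is a $\nu$-dependent multiple of its absolute value gives $\snorm{V_1}_{\psi_\nu} \le C(\nu)\snorm{U_1}_{\psi_\nu}$, so it suffices to bound $\Prob(|n^{-1}\sum_i V_i| \ge s)$ for i.i.d.\ mean-zero $V_i$. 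Lemma~\ref{lem:exp} then supplies the key input, namely the moment growth $(\E|V_1|^k)^{1/k} \le C(\nu)\snorm{V_1}_{\psi_\nu}\,k^{1/\nu}$, obtained by applying Stirling's formula to $\Gamma(k/\nu)$.

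The core of the argument is a sharp bound on $\snorm{n^{-1}\sum_i V_i}_{L^k}$ for even integer $k$. I would use a Rosenthal- (or Marcinkiewicz--Zygmund-) type inequality to split this into a variance contribution and an extreme-summand contribution, producing a bound of the schematic form $\snorm{n^{-1}\sum_i V_i}_{L^k} \le C(\nu)\snorm{V_1}_{\psi_\nu}\big(\sqrt{k/n} + k^{1/\nu}/n\big)$, where the first term comes from $\E V_1^2 \le C\snorm{V_1}_{\psi_\nu}^2$ and the second from the $L^k$ growth above. Taking $k \asymp t$ and applying Markov's inequality in the form $\Prob\big(|n^{-1}\sum_i V_i| \ge e\,\snorm{n^{-1}\sum_i V_i}_{L^k}\big) \le e^{-k}$ then delivers a two-regime (Bernstein--Orlicz) tail bound at deviation scale $\snorm{U_1}_{\psi_\nu}\big(\sqrt{t/n} + t^{1/\nu}/n\big)$. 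This specializes correctly to Hoeffding at $\nu = 2$ (where the $\sqrt{t/n}$ term dominates) and to Bernstein at $\nu = 1$ (where the two terms are exactly the sub-Gaussian and sub-exponential pieces), matching the surrounding discussion. The stated single rate $t^{(2+\nu)/(4\nu)}/n^{3/4}$ is precisely the geometric mean of these two contributions, $\big[(t/n)^{1/2}\cdot(t^{1/\nu}/n)\big]^{1/2}$, which is the controlling scale in the regime where the Gaussian and heavy-tail terms balance---the $t \asymp \log p$, $\log p \lesssim \sqrt{n}$ regime of Assumption~\ref{a:dim} in which the result is actually applied---so the final step is to identify this balance point and absorb the remaining factors into $C(\nu)$.

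The main obstacle is the core $L^k$ estimate for $\nu < 1$: because the summands are genuinely heavy-tailed, the extreme-summand term must be tracked with the correct power of $k$ and the correct (non-exploding) dependence on $n$---in particular one must control the stray $n^{1/k}$ factors and verify the variance term is not swamped. An equally viable, and perhaps cleaner, alternative is a truncation argument: split each $V_i$ at a level $M$, apply Bernstein to the bounded parts, bound the exceedance probability $\Prob(\max_i|V_i| > M)$ through the sub-Weibull tail $2\exp\big(-(M/\snorm{V_1}_{\psi_\nu})^\nu\big)$, and control the truncation bias via Lemma~\ref{lem:exp}; optimizing $M$ reproduces the same two-regime scale. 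Either way, the delicate point is the balancing of the bounded/Gaussian part against the heavy-tailed part as a joint function of $n$ and $t$, since it is this optimization---rather than any single moment or MGF computation---that produces the exponent $(2+\nu)/(4\nu)$ and the $n^{-3/4}$ rate.
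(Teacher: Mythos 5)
Your route to the two-regime bound is sound and genuinely different from the paper's. The paper does not argue through moments at all: it quotes the Generalized Bernstein--Orlicz machinery of Kuchibhotla and Chakrabortty, bounding the GBO norm of the centered average by $2ec(\nu)n^{-1/2}\snorm{U_1}_{\psi_\nu}$ (their Theorem 3.1, with $L_n(\nu) \asymp n^{-1/2}$) and then converting that norm bound into the tail bound \eqref{eq:concentration2}, whose threshold is proportional to $\snorm{U_1}_{\psi_\nu}\bigl(\sqrt{t/n} + t^{1/\nu}/n\bigr)$. Your Rosenthal-plus-Markov plan --- moment growth $(\E|V_1|^k)^{1/k} \lesssim \snorm{V_1}_{\psi_\nu}k^{1/\nu}$ from Lemma \ref{lem:exp}, a Rosenthal inequality for the $L^k$ norm of the sum, then Markov at $k \asymp t$ --- reaches exactly the same two-regime threshold without importing the external theorem, and (modulo the $n^{1/k}$ bookkeeping you correctly flag) is a legitimate, self-contained alternative up to that point.

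The genuine gap is your final step, and it cannot be repaired as described. From the two-regime bound one cannot pass to the single rate $t^{(2+\nu)/(4\nu)}/n^{3/4}$ by ``absorbing the remaining factors into $C(\nu)$'': writing $a = \sqrt{t/n}$ and $b = t^{1/\nu}/n$, the geometric mean satisfies $\sqrt{ab} \le \max(a,b) \le a+b$, and the ratio $(a+b)/\sqrt{ab}$ is unbounded over $(t,n)$, so the single-rate claim is strictly stronger than \eqref{eq:concentration2} except at the balance point $t \asymp n^{\nu/(2-\nu)}$. Your assertion that the regime of Assumption \ref{a:dim} is the balance regime is also not correct: for $\nu = 1/2$ balance occurs at $t \asymp n^{1/3}$, whereas the paper's applications take, e.g., $t \asymp (\log p)^{3/5}n^{1/5}$, which matches only when $\log p \asymp n^{2/9}$. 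Indeed, the statement as written (for all $t>0$, with $C(\nu)$ an absolute constant) is false: take $U_i \sim N(0,1)$, which is $\subw(1/2)$ with constant norm, and $t = n^{1/5}$; the threshold in \eqref{eq:concentration} is then $\asymp n^{-1/2}$, so the left side is a fixed positive constant by the central limit theorem while the right side is $2e^{-n^{1/5}} \rightarrow 0$. You should know, however, that the paper's own proof stumbles at precisely the same point: after deriving \eqref{eq:concentration2} it invokes the inequality ``$a+b \le \sqrt{4ab}$,'' which reverses AM--GM and is false whenever $a \neq b$. So your proposal is in the same position as the paper --- correct up to the two-regime bound, after which no valid argument can reach \eqref{eq:concentration}, because the target itself must be weakened: either state the theorem with the two-term threshold $\sqrt{t/n} + t^{1/\nu}/n$ (and propagate that threshold through the downstream lemmas), or restrict $t$ to the balance scale $t \asymp n^{\nu/(2-\nu)}$.
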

\begin{proof}
  We first consider the following definition \citep[Definition 2.3]{kuchibhotla2018moving}:
  \begin{definition}[Generalized Berstein-Orlicz norm] \label{def:GBO}
    For fixed value of $\nu > 0$ and $L > 0$, define the function $\Psi_{\nu, L}$ based on its inverse function, for all $t \geq 0$,
    \begin{align}
      \Psi^{-1}_{\nu, L}(t) = \sqrt{\log (1 + t)} + L \left( \log (1 + t) \right)^{1 / \nu}.
      \nonumber
    \end{align}
    Then the generalized Berstein-Orlicz (GBO) norm of a random variable $U$ is defined as
    \begin{align}
      \norm{U}_{\Psi_{\nu, L}} = \inf \left\{ \zeta > 0: \E \left[ \Psi_{\nu, L}\left(\frac{|U|}{\zeta}\right) \right] \leq 1 \right\}.
      \nonumber
    \end{align}
  \end{definition}

  First it is easy to verify that $\Psi_{\nu, L}$ is monotonically non-decreasing, $\Psi_{\nu, L}(0) = 0$, and $\Psi_{\nu, L}(a) \geq 0$ for all $a \geq 0$.
  Then for $U \sim \subw(\nu)$,
  \begin{align}
    \Prob \left[ |U| \geq \snorm{U}_{\Psi_{\nu, L}} \left( \sqrt{t} + L t^{1 / \nu} \right) \right] &= \Prob \left[ \frac{|U|}{\snorm{U}_{\Psi_{\nu, L}}} \geq  \sqrt{t} + L t^{1 / \nu} \right] \nonumber\\
                                                                                                    & = \Prob \left[ \frac{|U|}{\snorm{U}_{\Psi_{\nu, L}}} \geq \Psi_{\nu, L}^{-1} \left( e^t - 1 \right)  \right] \nonumber\\
                                                                                                    & = \Prob \left[ \Psi_{\nu, L} \left(\frac{|U|}{\snorm{U}_{\Psi_{\nu, L}}} \right) + 1 \geq e^t \right] \nonumber\\
                                                                                                    & \leq \frac{\E \left[ \Psi_{\nu, L} \left( \frac{|U|}{\snorm{U}_{\Psi_{\nu, L}}}  \right) + 1 \right]}{e^{t}}  \leq 2e^{-t}.
                                                                                                    \label{eq:GBO_prob}
  \end{align}

  The following theorem (Theorem 3.1 of \citet{kuchibhotla2018moving}, simplified to i.i.d. case) gives an upper bound on the GBO norm of $\sum_i U_i / n$ for i.i.d. $U_i$:
  \begin{lemma}
    Consider i.i.d. $\subw(\nu)$ random variables $U_1, ..., U_n$, the following bound holds:
    \begin{align}
      \norm{\frac{1}{n} \sum_{i = 1}^n U_i - \E[U_1]}_{\Psi_{\nu, L_n(\nu)}} \leq \frac{2ec(\nu)}{\sqrt{n}}  \norm{U_1}_{\psi_{\nu}},
      \label{eq:GBO_upperbound}
    \end{align}
    where the constant $c(\nu) > 0$ only depends on $\nu$, and
    \begin{align}
      L_n(\nu) = \frac{4^{1 / \nu}}{\sqrt{2}} \times \begin{cases}
        \frac{1}{\sqrt{n}}  \qquad & 0 < \nu < 1,\\
        \frac{4e}{c(\nu) \sqrt{n}}  & \nu = 1.
      \end{cases}
      \label{eq:Ln}
    \end{align}
  \end{lemma}
  Combining \eqref{eq:GBO_prob} and \eqref{eq:GBO_upperbound}, we have
  \begin{align}
    \Prob \left( \left| \frac{1}{n} \sum_{i = 1}^n U_i - \E [U_1] \right| \geq 2 e c(\nu) \norm{U_1}_{\psi_\nu}  \left( \sqrt{\frac{t}{n}} + L_n(\nu) \frac{t^{1 / \nu}}{\sqrt{n}} \right) \right) \leq 2 e^{-t}.
    \label{eq:concentration2}
  \end{align} 

  The concentration inequality above shows that for small values of $t$, the tail bound for sub-Weibull averages behaves like a Gaussian (i.e., having $\sqrt{t/ n}$ tail), and for larger values of $t$, it has a much heavier tail.

  Finally, the single mixture bound in \eqref{eq:concentration} that holds for all values of $t$ follows from applying the inequality that $a + b \leq \sqrt{4ab}$ for any $a, b > 0$ in \eqref{eq:concentration2}.
\end{proof}

\subsection{Some concentration inequalities for interactions} \label{app:example_subweibull}
First let $\snorm{X}_{\psi_2}$ denote the sub-Gaussian norm of the sub-Gaussian random vector $X = (X_1, ..., X_p)$. For any $j, k, m, l \in [p]$, by Young's inequality, we have
\begin{align}
  &\E \left[ \exp \left( \frac{|X_j X_k X_m|^{2/3}}{\norm{X}_{\psi_{2}}^{2/3} \snorm{X}_{\psi_2}^{2/3} \snorm{X}_{\psi_2}^{2/3}}  \right) \right] \leq \E \left[ \exp \left( \frac{X_j^2}{3 \snorm{X}_{\psi_2}^2} + \frac{X_k^2}{3 \snorm{X}_{\psi_2}^2} + \frac{X_m^2}{3 \snorm{X}_{\psi_2}^2}    \right) \right] \nonumber\\
  &\leq \frac{1}{3} \E \left[ \exp \left( \frac{X_j^2}{\snorm{X}_{\psi_2}^2}  \right) \right] + \frac{1}{3} \E \left[ \exp \left( \frac{X_k^2}{\snorm{X}_{\psi_2}^2}  \right) \right] +  \frac{1}{3} \E \left[ \exp \left( \frac{X_m^2}{\snorm{X}_{\psi_2}^2}  \right) \right] \leq 3 \times \frac{2}{3} = 2.
  \nonumber
\end{align}
As a result, $X_jX_k X_m$ is a $\subw(2/3)$ random variable with $\snorm{X_jX_k X_m}_{\Psi_{2/3}} \leq \snorm{X}_{\psi_2}^3$.
Using similar arguments, we can show that $X_jX_kX_mX_l \sim \subw(1/2)$ random variable with $\snorm{X_jX_kX_mX_l}_{\Psi_{1/2}} \leq \snorm{X}_{\psi_2}^4$. Moreover, one can show that $W^T \gamma^\ast \sim \subw(1)$  and $(W^T \gamma^\ast)^2 \sim \subw(1/2)$.

Applying Theorem \ref{thm:GBO}, we have the following useful concentration inequalities for the products of main effects and interactions: 
\begin{corollary}
  Under Assumption \ref{a:distn}, for any $t > 0$,
  \begin{align}
    &\Prob \left( \left| \frac{1}{n} \sum_{i = 1}^n \X_{ij} \X_{ik} \X_{im} - \E [X_j X_k X_m] \right| \geq C(2/3) \norm{X}_{\psi_2}^3 \frac{t}{n^{3/4}} \right)   \leq 2 e^{-t},
    \nonumber\\
    &\Prob \left( \left| \frac{1}{n} \sum_{i = 1}^n \X_{ij} \X_{ik} \X_{im} \X_{il} - \E [X_j X_k X_m X_l] \right| \geq C(1/2) \norm{X}_{\psi_2}^4 \frac{t^{5/4}}{n^{3/4}} \right)   \leq 2 e^{-t}.
    \nonumber
  \end{align} 
\end{corollary}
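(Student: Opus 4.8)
The plan is to read off both inequalities as immediate specializations of Theorem~\ref{thm:GBO}, using the sub-Weibull orders and Orlicz-norm bounds just established. By Assumption~\ref{a:distn} the $n$ samples are independent, so for fixed indices $j,k,m$ the summands $\X_{ij}\X_{ik}\X_{im}$, $i=1,\dots,n$, are i.i.d.\ copies of $X_jX_kX_m$, and likewise $\X_{ij}\X_{ik}\X_{im}\X_{il}$ are i.i.d.\ copies of $X_jX_kX_mX_l$. The preceding computation shows $X_jX_kX_m\sim\subw(2/3)$ with $\snorm{X_jX_kX_m}_{\psi_{2/3}}\le\snorm{X}_{\psi_2}^3$ and $X_jX_kX_mX_l\sim\subw(1/2)$ with $\snorm{X_jX_kX_mX_l}_{\psi_{1/2}}\le\snorm{X}_{\psi_2}^4$; since $2/3\le 1$ and $1/2\le 1$, Theorem~\ref{thm:GBO} applies in both cases.

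First I would substitute $\nu=2/3$ into the exponent $\tfrac{2+\nu}{4\nu}$ appearing in Theorem~\ref{thm:GBO}, obtaining $\tfrac{2+2/3}{4\cdot 2/3}=1$, which reproduces the power $t^1$ in the first line of the corollary. Substituting $\nu=1/2$ gives $\tfrac{2+1/2}{4\cdot 1/2}=5/4$, reproducing the power $t^{5/4}$ in the second line. With these exponents, Theorem~\ref{thm:GBO} yields the desired bounds but with the exact Orlicz norms $\snorm{X_jX_kX_m}_{\psi_{2/3}}$ and $\snorm{X_jX_kX_mX_l}_{\psi_{1/2}}$ in place of $\snorm{X}_{\psi_2}^3$ and $\snorm{X}_{\psi_2}^4$, and with the same constants $C(2/3)$, $C(1/2)$ as in the theorem.

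The only remaining point is a monotonicity argument to pass from the exact norms to the stated upper bounds. The deviation threshold on the right-hand side of each probability is increasing in the Orlicz norm, so enlarging the norm from $\snorm{X_jX_kX_m}_{\psi_{2/3}}$ to its upper bound $\snorm{X}_{\psi_2}^3$ only raises the threshold; the event appearing in the first line of the corollary is therefore contained in the corresponding event of Theorem~\ref{thm:GBO}, whose probability is at most $2e^{-t}$, and the same containment handles the four-fold product with $\snorm{X}_{\psi_2}^4$. I do not expect any genuine obstacle, since all of the substance resides in Theorem~\ref{thm:GBO} together with the Young's-inequality Orlicz-norm estimates already derived; the proof reduces to verifying the two exponent values and invoking this containment step.
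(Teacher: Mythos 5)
Your proposal is correct and matches the paper's own (very brief) argument: the paper likewise derives this corollary by applying Theorem~\ref{thm:GBO} with $\nu=2/3$ and $\nu=1/2$ to the products of main effects, using the Young's-inequality bounds $\snorm{X_jX_kX_m}_{\psi_{2/3}}\leq\snorm{X}_{\psi_2}^3$ and $\snorm{X_jX_kX_mX_l}_{\psi_{1/2}}\leq\snorm{X}_{\psi_2}^4$ established just before the statement. Your explicit verification of the exponents $\tfrac{2+\nu}{4\nu}=1$ and $\tfrac{5}{4}$, and the monotonicity/containment step for replacing the exact Orlicz norms by their upper bounds, are exactly the details the paper leaves implicit.
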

Note that the inequalities above can be easily adapted to derive concentration inequalities for the products of the pure interaction $W$.

\section{Proof of Theorem \ref{thm:screening}} \label{app:proof_screening}
We follow the analysis in \citet{barut2015conditional} and \citet{2016arXiv160508933F}. First we let the vector $\onevec_n$ stands for a vector of $n$ ones, and $\C_n = \iden_n - \onevec_n \onevec_n^T / n$ is the centering matrix.  
We consider 
\begin{align}
  \omega_\ell = \frac{\frac{1}{n} \Z_\ell^T \C_n \r}{\sqrt{\frac{1}{n} \Z_\ell^T \C_n \Z_\ell}} = n^{-1/2} \norm{\C_n \Z_\ell}_2^{-1} \Z_\ell^T \C_n \left( \W \gamma^\ast + \X \theta^\ast - \X \hat{\theta} + \bvarepsilon \right),
  \label{eq:omegaj}
\end{align}
and the corresponding population quantity 
\begin{align}
  \omega^\ast_\ell = \frac{\Cov \left( Z_\ell, W^T \gamma^\ast \right)}{\sqrt{\Psi_{\ell\ell}}} = \frac{\Omega_\ell^T \gamma^\ast}{\sqrt{\Psi_{\ell\ell}}}.
  \label{eq:omegaj_star}
\end{align}

We first show that $\omega^\ast_\ell$ is useful in representing interaction variables $\ell \in \I(\alpha)$, and furthermore that $\omega_\ell$ converges to  $\omega^\ast_\ell$. As a result, we can use $\omega_\ell$, which is computable, as a noisy proxy for $\omega^\ast_\ell$ to determine whether $\ell$ is in $\I(\alpha)$. We formally present it as the following lemma
\begin{lemma} \label{lem:screening}
  Under Assumption \ref{a:distn} and \ref{a:dim} and with $\bar{\eta}^\ast$ as in \eqref{eq:eta_star_bar},
  \begin{align}
    \Prob\left( \max_{1 \leq \ell \leq q} |\omega_\ell - \omega^\ast_\ell| \leq \bar{\eta}^\ast \right) \geq 1 - 8 p^{-2(\kappa^{3/5} - 1)} - 2 p^{-1} - \Prob(\Event_R^C). 
    \nonumber
  \end{align}
\end{lemma}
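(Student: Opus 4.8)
The plan is to control the studentized statistic $\omega_\ell$ by comparing its numerator $a_\ell := n^{-1}\Z_\ell^T\C_n\r$ and denominator $b_\ell := (n^{-1}\Z_\ell^T\C_n\Z_\ell)^{1/2}$ to their population targets $a^\ast_\ell := \Cov(Z_\ell, W^T\gamma^\ast) = \sqrt{\Psi_{\ell\ell}}\,\omega^\ast_\ell$ and $b^\ast_\ell := \sqrt{\Psi_{\ell\ell}}$, and then to assemble the two deviations through the elementary identity
\[
\omega_\ell - \omega^\ast_\ell = \frac{a_\ell - a^\ast_\ell}{b_\ell} - \omega^\ast_\ell\,\frac{b_\ell - b^\ast_\ell}{b_\ell},
\]
so that $|\omega_\ell - \omega^\ast_\ell| \le |a_\ell - a^\ast_\ell|/b_\ell + |\omega^\ast_\ell|\,|b_\ell - b^\ast_\ell|/b_\ell$. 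The three summands of $\bar\eta^\ast$ will then come, respectively, from the pieces of the numerator, the Step-1 misspecification, and the denominator, after a union bound over the $q = \BigO(p^2)$ interactions.

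For the numerator I would use $\r = \W\gamma^\ast + \X(\theta^\ast - \hat\theta) + \bvarepsilon$ (from \eqref{eq:true_repa}) to split $a_\ell = T_1 + T_2 + T_3$ with $T_1 = n^{-1}\Z_\ell^T\C_n\W\gamma^\ast$, $T_2 = n^{-1}\Z_\ell^T\C_n\X(\theta^\ast-\hat\theta)$, and $T_3 = n^{-1}\Z_\ell^T\C_n\bvarepsilon$. For $T_2$, on the event $\Event_R$ the idempotence of $\C_n$ and Cauchy--Schwarz give $|T_2| = |(\C_n\Z_\ell)^T\C_n\X(\theta^\ast-\hat\theta)|/n \le \|\C_n\Z_\ell\|_2\,\|\X(\theta^\ast-\hat\theta)\|_2/n \le b_\ell R$, hence $|T_2|/b_\ell \le R$ deterministically---this is the $R$ term. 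For $T_3$, conditioning on $(\X,\Z)$ turns it into a fixed linear combination of the independent sub-Gaussian noises, hence sub-Gaussian with proxy $\lesssim \sigma^2 b_\ell^2/n$; a Gaussian tail bound with a union bound and the denominator control $b_\ell \asymp b^\ast_\ell$ below yield $|T_3|/b_\ell \lesssim \sigma(\log p)^{1/2}/n^{1/2}$, the third term. The residual piece $T_1 - a^\ast_\ell$ is exactly the deviation of the sample covariance of $\Z_\ell$ and $\W\gamma^\ast$ from $\Cov(Z_\ell, W^T\gamma^\ast)$; after normalizing $\Z_\ell$ by $\sqrt{\Psi_{\ell\ell}}$ and using that $(Z_\ell/\sqrt{\Psi_{\ell\ell}})\cdot W^T\gamma^\ast \sim \subw(1/2)$ with norm at most $\snorm{\diag(\Psi)^{-1/2}Z}_{\psi_1}\snorm{W^T\gamma^\ast}_{\psi_1}$ (the Orlicz product rule of Appendix \ref{app:subweibull}), Theorem \ref{thm:GBO} with $\nu = 1/2$ controls it.

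The denominator is handled the same way: $n^{-1}\Z_\ell^T\C_n\Z_\ell$ is the sample variance of $\Z_\ell$, and $Z_\ell^2/\Psi_{\ell\ell} \sim \subw(1/2)$ with norm at most $\snorm{\diag(\Psi)^{-1/2}Z}_{\psi_1}^2$, so Theorem \ref{thm:GBO} gives $|b_\ell^2/(b^\ast_\ell)^2 - 1| \lesssim (\log p)^{3/4}/n^{1/2}$. Via $|\sqrt x - \sqrt y| \le |x-y|/\sqrt y$ this delivers both the two-sided control $b_\ell \asymp b^\ast_\ell$ (needed so that dividing by $b_\ell$ is safe and $T_3$ is controlled) and $|b_\ell - b^\ast_\ell|/b_\ell \lesssim (\log p)^{3/4}/n^{1/2}$; multiplying by $|\omega^\ast_\ell| \le \max_\ell |\omega^\ast_\ell|$ produces the $\max_\ell|\omega^\ast_\ell|$ contribution. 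In each application of Theorem \ref{thm:GBO} the native $\subw(1/2)$ rate is $t^{5/4}/n^{3/4}$; taking $t \asymp \kappa^{3/5}\log p$ and invoking Assumption \ref{a:dim} ($\kappa\log p \le \sqrt n$) to absorb the surplus factor $(\log p)^{1/2}/n^{1/4}$ into the constant $K$ reduces this to the advertised $(\log p)^{3/4}/n^{1/2}$. A union bound over the $q$ interactions, aggregating the failure probabilities of the sample-covariance and sample-variance ($\subw(1/2)$) steps into $8p^{-2(\kappa^{3/5}-1)}$, the noise ($\subw(2)$) step into $2p^{-1}$, and adding $\Prob(\Event_R^C)$ from the bound on $T_2$, gives the stated probability.

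The main obstacle is the uniform control of $T_1 - a^\ast_\ell$. It requires (i) certifying via the Orlicz product rule that the normalized product $(Z_\ell/\sqrt{\Psi_{\ell\ell}})\,W^T\gamma^\ast$ is genuinely $\subw(1/2)$ at the norm appearing in $\bar\eta^\ast$; (ii) correctly handling that the sample covariance subtracts a product of two random sample means, so that one must separately concentrate $n^{-1}\onevec_n^T\Z_\ell$ and $n^{-1}\onevec_n^T\W\gamma^\ast$ (both $\subw(1)$ averages, whose means need not vanish since $X$ is not assumed symmetric) and recombine them; and (iii) converting the native $t^{5/4}/n^{3/4}$ sub-Weibull rate into $(\log p)^{3/4}/n^{1/2}$ through Assumption \ref{a:dim}, which is precisely where the stronger sample-size requirement for interactions is consumed. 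The remaining pieces---$T_2$ by Cauchy--Schwarz on $\Event_R$ and $T_3$ by conditional sub-Gaussianity---are comparatively routine once the denominator has been pinned down.
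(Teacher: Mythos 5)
Your proposal is correct and follows essentially the same route as the paper's proof: the same three-way split of the numerator via $\r = \W\gamma^\ast + \X(\theta^\ast - \hat\theta) + \bvarepsilon$ (Cauchy--Schwarz on $\Event_R$ for the misspecification term, conditional sub-Gaussianity plus a union bound for the noise term), and a numerator/denominator decomposition that is algebraically identical to the paper's bound \eqref{eq:t1} (note $\sqrt{n}\snorm{\C_n\check{\Z}_\ell}_2^{-1} = b^\ast_\ell/b_\ell$), with the covariance and variance deviations both controlled by the $\subw(1/2)$ concentration of Theorem \ref{thm:GBO} and Assumption \ref{a:dim} used to convert the native $t^{5/4}/n^{3/4}$ rate into $(\log p)^{3/4}/n^{1/2}$. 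The only differences are bookkeeping---your choice $t \asymp \kappa^{3/5}\log p$ versus the paper's $t = 2(\log p)^{3/5}n^{1/5}$---and your explicit attention to the sample-mean centering inside the empirical covariance, a lower-order issue the paper's own proof glosses over.
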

\begin{proof}
  Denote $\tilde{\Z}_\ell = n^{1/2} \snorm{\C_n \Z_\ell}_2^{-1} \Z_\ell$, then we have that $\snorm{\C_n \tilde{\Z}_\ell}_2 = n^{1/2}$. From \eqref{eq:omegaj} and \eqref{eq:omegaj_star},
  \begin{align}
    &\max_{1 \leq \ell \leq q}|\omega_\ell - \omega^\ast_\ell| 
    = \max_{1 \leq \ell \leq q}\left|\frac{1}{n} \tilde{\Z}_\ell^T \C_n \left( \W \gamma^\ast + \X\theta^\ast - \X\hat{\theta} + \bvarepsilon \right) - \frac{\Omega_\ell^T \gamma^\ast}{\sqrt{\Psi_{\ell\ell}}} \right| \nonumber\\
    =& \max_{1 \leq \ell \leq q}\left|\left( \frac{1}{n} \tilde{\Z}_\ell^T \C_n \W - \Psi_{\ell\ell}^{-1/2}\Omega_\ell^T \right) \gamma^\ast + \frac{1}{n} \tilde{\Z}_\ell^T \C_n \X(\theta^\ast - \hat{\theta}) + \frac{1}{n} \tilde{\Z}_\ell^T \C_n \bvarepsilon \right| \nonumber\\
    \leq & \max_{1 \leq \ell \leq q} \left|\frac{1}{n} \tilde{\Z}_\ell^T \C_n \W \gamma^\ast - \Psi_{\ell\ell}^{-1/2}\Omega_{\ell}^T \gamma^\ast \right| + \max_{1 \leq \ell \leq q} \frac{1}{n} \norm{\C_n \tilde{\Z}_\ell}_2 \norm{\X \theta^\ast - \X \hat{\theta}}_2 +   \max_{1 \leq \ell \leq q} \frac{1}{n} \left|\tilde{\Z}_\ell^T \C_n \bvarepsilon \right|  \nonumber \\
    = & \max_{1 \leq \ell \leq q} \left|\frac{1}{n} \tilde{\Z}_\ell^T \C_n \W \gamma^\ast - \Psi_{\ell\ell}^{-1/2}\Omega_{\ell}^T \gamma^\ast \right| + \frac{1}{\sqrt{n}} \norm{\X \theta^\ast - \X \hat{\theta}}_2 + \max_{1\leq \ell \leq q} \frac{1}{n} \left| \tilde{\Z}_\ell^T \C_n\bvarepsilon \right|.
    \label{eq:term_all}
  \end{align}

  Furthermore we denote $\check{\Z}_\ell = \Psi_{\ell\ell}^{-1 / 2} \Z_\ell$.
  For any $\ell$ and $m$,
  \begin{align}
    &\left| \frac{1}{n} \tilde{\Z}_\ell^T \C_n \W \gamma^\ast - \Psi_{\ell\ell}^{-1/2}\Omega_{\ell}^T \gamma^\ast \right| = \left| \frac{n^{-1} \Z_\ell^T \C_n \W \gamma^\ast}{n^{-1/2} \snorm{\C_n \Z_\ell}_2} - \frac{\Omega_{\ell }^T \gamma^\ast}{\sqrt{\Psi_{\ell\ell}}}  \right| \nonumber\\
    =&\left|\frac{\left(n^{-1}\Z_\ell^T \C_n \W \gamma^\ast - \Omega_{\ell}^T \gamma^\ast \right) \sqrt{\Psi_{\ell\ell}} + \Omega_{\ell}^T \gamma^\ast \left( \sqrt{\Psi_{\ell\ell}} - n^{-1/2} \snorm{\C_n \Z_\ell}_2\right)}{n^{-1/2} \sqrt{\Psi_{\ell\ell}}\snorm{\C_n \Z_\ell}_2}  \right| \nonumber\\
    \leq& \left|n^{-1} \check{\Z}_\ell^T \C_n \W \gamma^\ast - \frac{\Omega_{\ell }^T \gamma^\ast}{\sqrt{\Psi_{\ell\ell}}}\right|\sqrt{n}\snorm{\C_n \check{\Z}_\ell}_2^{-1} + \frac{\Omega_{\ell}^T \gamma^\ast}{\sqrt{\Psi_{\ell\ell}}} \left| \sqrt{n} \norm{\C_n \check{\Z}_\ell}_2^{-1} - 1 \right|. \label{eq:t1}
  \end{align}

  Denote $\check{Z} = \diag(\Psi)^{-1/2} Z$, we can check that $Y_\ell = (\check{Z}_\ell - \E[\check{Z}_\ell])(W^T\gamma^\ast - E[W^T \gamma^\ast])$ is a $\subw(1/2)$ with $E[Y_\ell] = \frac{\Cov(Z_\ell, W^T \gamma^\ast)}{\sqrt{\Psi_{\ell \ell}}} = \frac{\Omega_\ell^T \gamma^\ast}{\sqrt{\Psi_{\ell \ell}}} $.
  Furthermore, we have $\snorm{Y_\ell}_{\psi_{1/2}} \leq c_1 \snorm{\diag(\Psi)^{-1/2}Z}_{\psi_1} \snorm{W^T \gamma^\ast}_{\psi_1}$ for some constant $c_1 > 0$.
  Therefore, by Theorem \ref{thm:GBO}, with constant $C_1 > 0$, we have
  \begin{align}
    \Prob \left( \left| n^{-1} \check{\Z}_\ell^T \C_n \W \gamma^\ast - \frac{\Omega_{\ell }^T \gamma^\ast}{\sqrt{\Psi_{\ell\ell}}} \right| \geq C_1 \snorm{\diag(\Psi)^{-1/2}Z}_{\psi_1}\snorm{W^T \gamma^\ast}_{\psi_1} \frac{t^{5/4}}{n^{3/4}} \right)   \leq 2 e^{-t}.
    \label{eq:t11}
  \end{align}  

  Similarly, $(\check{Z}_\ell - \E[\check{Z}_\ell])^2$ is a $\subw(1/2)$ random variable with $\E (\check{Z}_\ell - \E[\check{Z}_\ell])^2 = \frac{\Psi_{\ell \ell}}{\Psi_{\ell \ell}} = 1$, and $\snorm{(\check{Z}_\ell - \E[\check{Z}_\ell])^2}_{\psi_{1/2}} \leq c_2\snorm{\check{Z}_\ell}_{\psi_1}^2 \leq c_2$ for some constant $c_2 > 0$.
  By Theorem \ref{thm:GBO}, with some constant $C_2 > 0$, we have
  \begin{align}
    \Prob \left( \left| n^{-1} \norm{\C_n\check{\Z}_\ell}_2^2 - 1 \right| \geq \frac{C_2}{2} \frac{t^{5/4}}{n^{3/4}} \right)   \leq 2 e^{-t}.
    \nonumber
  \end{align}  
  For any $\epsilon > 0$, 
  \begin{align}
    &\Prob \left( \left| n^{1/2} \snorm{\C_n \check{\Z}_\ell}_2^{-1} - 1 \right| \geq \epsilon \right) \nonumber\\
    =& \Prob \left( \left| n^{1/2} \snorm{\C_n \check{\Z}_\ell}_2^{-1} - 1 \right| \geq \epsilon, n^{-1} \snorm{\C_n \check{\Z}_\ell}_2^2 \leq 1 + \epsilon\right)
    + \Prob \left( \left| n^{1/2} \snorm{\C_n \check{\Z}_\ell}_2^{-1} - 1 \right| \geq \epsilon, n^{-1} \snorm{\C_n \check{\Z}_\ell}_2^2 > 1 + \epsilon\right) \nonumber\\
    \leq & \Prob \left( n^{-1} \snorm{\C_n \check{\Z}_\ell}_2^2 \leq 1 + \epsilon\right)
    + \Prob \left( \frac{\left| n^{-1} \snorm{\C_n \check{\Z}_\ell}_2^{2} - 1 \right|}{n^{-1/2} \snorm{\C_n \check{\Z}_\ell}_2 \left( n^{-1/2} \snorm{\C_n \check{\Z}_\ell}_2 + 1 \right)} \geq \epsilon, n^{-1} \snorm{\C_n \check{\Z}_\ell}_2^2 > 1 + \epsilon\right) \nonumber\\
    \leq & \Prob \left( 1 - \left| n^{-1} \snorm{\C_n \check{\Z}_\ell}_2^2 - 1 \right| \leq 1 + \epsilon\right) + \Prob \left( \left| n^{-1} \snorm{\C_n \check{\Z}_\ell}_2^{2} - 1 \right| \geq \epsilon \right) \nonumber\\
    = & 2 \Prob \left( \left| n^{-1} \snorm{\C_n \check{\Z}_\ell}_2^2 - 1 \right| \geq \epsilon\right). \nonumber
  \end{align}
  Take $\epsilon = 2^{-1} C_2 t^{5/4}n^{-3/4}$, we have
  \begin{align}
    \Prob \left( \left| n^{1/2} \snorm{\C_n \check{\Z}_\ell}_2^{-1} - 1 \right| \geq \frac{C_2}{2} \frac{t^{5/4}}{n^{3/4}}  \right) \leq 4 e^{-t}.
    \label{eq:t12}
  \end{align}
  First set $t = 2 n^{3/5}$ in \eqref{eq:t12}, we have that
  \begin{align}
    \Prob \left( n^{1/2} \snorm{\C_n \check{\Z}_\ell}_2^{-1} \geq 1 + C_2  \right) \leq 4 e^{-2n^{3/5}}.
    \label{eq:t13}
  \end{align}

  Combining \eqref{eq:t11}, \eqref{eq:t12}, and \eqref{eq:t13}, from \eqref{eq:t1} and union bounds we have that
  \begin{align}
    &\Prob \left\{ \frac{1}{n}\max_{1 \leq \ell \leq q} \left|  \tilde{\Z}_\ell^T \C_n \W \gamma^\ast - \Psi_{\ell\ell}^{-1/2}\Omega_{\ell}^T \gamma^\ast \right| 
    \geq \left[C_1 (1 + C_2)\snorm{\diag(\Psi)^{-1/2}Z}_{\psi_1} \snorm{W^T \gamma^\ast}_{\psi_1} + C_2 \max_{\ell} |\omega_\ell^\ast| \right] \frac{t^{5/4}}{n^{3/4}}   \right\} \nonumber\\
    \leq& 4 \exp \left(  2 \log p - t \right) + 4 \exp \left( 2 \log p - 2 n^{3/5} \right).
    \nonumber
  \end{align}
  Take $t = 2(\log p)^{3/5} n^{1/5}$, we have
  \begin{align}
    &\Prob \left\{ \frac{1}{n}\max_{1 \leq \ell \leq q} \left|  \tilde{\Z}_\ell^T \C_n \W \gamma^\ast - \Psi_{\ell\ell}^{-1/2}\Omega_{\ell}^T \gamma^\ast \right| 
    \geq K_1 \left[\snorm{\diag(\Psi)^{-1/2}Z}_{\psi_1} \snorm{W^T \gamma^\ast}_{\psi_1} + \max_{\ell} |\omega_\ell^\ast| \right] \frac{(\log p)^{3/4}}{n^{1/2}}   \right\} \nonumber\\
    \leq& 4 \exp \left(  2 \log p - 2 (\log p)^{3/5} n^{1/5} \right) + 4 \exp \left( 2 \log p - 2 n^{3/5} \right).
    \label{eq:term_first}
  \end{align}
  where $K_1 = 2\max \left\{ C_1 (1 + C_2), C_2 \right\}$.

  For each $1 \leq \ell \leq q$, $\bvarepsilon$ and $\C_n \tilde{\Z}_\ell$ are independent, and $\E(\bvarepsilon^T \C_n \tilde{\Z}_\ell) = \E(\bvarepsilon)^T \E(\C_n\tilde{\Z}_\ell) = 0$.
  Recall that $\snorm{\C_n \tilde{\Z}_\ell}_2 = n^{1/2}$, conditional on which $\bvarepsilon^T \C_n \tilde{\Z}_\ell$ follows a sub-Gaussian distribution ($\subw(2)$) with mean zero and variance $\sigma^2 \snorm{\C_n\tilde{\Z}_\ell}_2^2 = n \sigma^2$. 
  From a Hoeffding-type inequality \citep[see, e.g.,][]{vershynin2010introduction} we have 
  \begin{align}
      &\Prob\left( \frac{1}{n} \max_{1 \leq \ell \leq q}\left| \bvarepsilon^T \C_n \tilde{\Z}_\ell \right| > K_2 \sigma \sqrt{\frac{\log p}{n}} \right) \nonumber\\
    \leq&
    \Prob\left( \frac{1}{n} \max_{1 \leq \ell \leq q}\left| \bvarepsilon^T \C_n \tilde{\Z}_\ell \right| > K_2 \sigma \sqrt{\frac{\log p}{n}} \Big | \snorm{\C_n \tilde{\Z}_\ell}_2 =  \sqrt{n}\right) +  \Prob \left(\snorm{\C_n \tilde{\Z}_\ell}_2 \neq  \sqrt{n} \right)  \nonumber\\
    = & \Prob\left( \frac{1}{n} \max_{1 \leq \ell \leq q}\left| \bvarepsilon^T \C_n \tilde{\Z}_\ell \right| > K_2 \sigma \sqrt{\frac{\log p}{n}} \Big | \snorm{\C_n \tilde{\Z}_\ell}_2 =  \sqrt{n}\right) \nonumber\\
    \leq& 2 \exp \left( 2 \log p - 3 \log p  \right) = 2 \exp \left( - \log p \right)
    \label{eq:term_second}
  \end{align}
  for some constant $K_2 > 0$.

  Now combining \eqref{eq:term_first} and \eqref{eq:term_second}, we have from \eqref{eq:term_all} that
  \begin{align}
      &\max_{1 \leq \ell \leq q} |\omega_\ell - \omega^\ast_\ell| \leq  K_1 \left[\snorm{\diag(\Psi)^{-1/2}Z}_{\psi_1} \snorm{W^T \gamma^\ast}_{\psi_1} + \max_{\ell} |\omega^\ast_\ell| \right] \frac{(\log p)^{3/4}}{n^{1/2}} + R  + K_2 \sigma \sqrt{\frac{\log p}{n}} \nonumber\\
    \leq  &  \frac{K}{2}\left[ \left(\snorm{\diag(\Psi)^{-1/2}Z}_{\psi_1}\snorm{W^T \gamma^\ast}_{\psi_1} + \max_{\ell} |\omega^\ast_\ell| \right) \frac{(\log p)^{3/4}}{n^{1/2}} + R  + \sigma \sqrt{\frac{\log p}{n}} \right] := \bar{\eta}^\ast
    \nonumber
  \end{align}
  holds with probability greater than $1 - 4 \exp \left(  2 \log p - 2 (\log p)^{3/5} n^{1/5} \right) - 4 \exp \left( 2 \log p - 2 n^{3/5} \right) - 2 p^{-1} - \Prob(\Event_R^C)$, where $K = \max \left\{ K_1, 1, K_2 \right\}$.
  Finally the results follows from Assumption \ref{a:dim} that $\kappa \log p \leq n^{1/2} < n$.
\end{proof}

Now for any $\eta \in [\bar{\eta}^\ast, \eta(\alpha)]$, consider the following event 
\begin{align}
  \mathcal{E} = \left\{ \max_{\ell \in \I(\alpha)} |\omega_\ell - \omega^\ast_\ell| \leq \eta \right\}.
  \nonumber
\end{align}
Recall from \eqref{eq:eta_alpha} that $\min_{\ell \in \I(\alpha)} \omega_\ell^\ast =  2  \eta(\alpha) $. For any $\ell \in \I(\alpha)$, by assumption \eqref{eq:eta_star} and on event $\mathcal{E}$, we have
\begin{align}
  |\omega_\ell| \geq |\omega^\ast_\ell| - |\omega^\ast_\ell - \omega_\ell| > \min_{\ell \in \I(\alpha)} |\omega_\ell^\ast| - \eta = 2 \eta(\alpha) - \eta \geq \eta,
  \nonumber
\end{align}
which implies that $\ell \in \hat{\I}_{\eta}$. Thus
\begin{align}
  \Prob\left( \I(\alpha) \subseteq \hat{\I}_{\eta} \right) \geq \Prob\left( \mathcal{E} \right) \geq 1 - \Prob\left( \max_{\ell \in \I(\alpha)} |\omega_\ell - \omega^\ast_\ell| > \bar{\eta}^\ast \right).
  \nonumber
\end{align}

To show the second half of Theorem \ref{thm:screening}, we first give an upper bound on $\sum_{\ell = 1}^q {\omega^\ast_\ell}^2$. First note that
\begin{align}
  \sum_{\ell = 1}^q {\omega^\ast_\ell}^2 = \sum_{\ell = 1}^q \Psi_{\ell\ell}^{-1}(\Omega_\ell^T \gamma^\ast)^2 &=  \norm{\diag(\Psi)^{-1/2}\Omega \gamma^\ast}_2^2 =  \norm{\diag(\Psi)^{-1/2}\Omega^{1/2} \Omega^{1/2}\gamma^\ast}_2^2 \nonumber\\
                                                                                                               &\leq \lambda_{\max}\left( \diag(\Psi)^{-1/2}\Omega\diag(\Psi)^{-1/2} \right) {\gamma^\ast}^T \Omega \gamma^\ast,
                                                                                                               \nonumber
\end{align}
and that 
\begin{align}
  \Var \left( W^T \gamma^\ast \right) = {\gamma^\ast}^T \Omega \gamma^\ast,
  \nonumber
\end{align}
which together imply that $\sum_{\ell = 1}^q {\omega^\ast_\ell}^2 \leq \lambda_{\max}\left( \diag(\Psi)^{-1/2}\Omega\diag(\Psi)^{-1/2} \right) \Var(W^T \gamma^\ast)$.
Consider the set $\tilde{\I}_\eta = \{ \ell: |\omega^\ast_\ell| > 2^{-1} \eta \}$.
Conditional on $\mathcal{E}$, for any $\ell \in \hat{\I}_{\eta}$, we have that
\begin{align}
  |\omega^\ast_\ell| \geq |\omega_\ell| - |\omega_\ell - \omega^\ast_\ell| > \eta - 2^{-1} \eta = 2^{-1} \eta,
  \nonumber
\end{align}
which implies that $\ell \in \tilde{\I}_\eta$ and thus $\hat{\I}_\eta \subseteq \tilde{\I}_\eta$. Finally
\begin{align}
  |\hat{\I}_{\eta}| \leq |\tilde{\I}_\eta| \leq \frac{4\lambda_{\max}\left( \diag(\Psi)^{-1/2}\Omega\diag(\Psi)^{-1/2} \right) \Var(W^T \gamma^\ast)}{{\eta}^2}.
  \nonumber
\end{align}

\section{Proof of Theorem \ref{thm:prediction}} \label{app:proof_prediction}
\begin{proof}
  Recall that
  \begin{align}
    \r = \y - \X \hat{\theta} = \X (\theta^\ast - \hat{\theta}) + \W \gamma^\ast + \bvarepsilon.
    \nonumber
  \end{align}
  The basic inequality of \eqref{eq:step3} then implies that
  \begin{align}
  &\frac{1}{2n} \norm{\r - \X \hat{\xi} - \Z_{\hat{\I}_\eta} \hat{\varphi}}_2^2 + \lambda\left( \snorm{\hat{\xi}}_1 + \snorm{\hat{\varphi}}_1 \right) \nonumber\\
    = &\frac{1}{2n} \norm{\X \theta^\ast - \X \hat{\theta}  + \W \gamma^\ast - \X \hat{\xi} - \Z_{\hat{\I}_\eta} \hat{\varphi} + \bvarepsilon}_2^2 + \lambda\left( \snorm{\hat{\xi}}_1 + \snorm{\hat{\varphi}}_1 \right)
    \nonumber\\
    \leq &\frac{1}{2n} \norm{\X \theta^\ast - \X \hat{\theta}  + \W \gamma^\ast - \X \bar{\xi} - \Z_{\hat{\I}_\eta} \bar{\varphi} + \bvarepsilon}_2^2 + \lambda\left( \snorm{\bar{\xi}}_1 + \snorm{\bar{\varphi}}_1 \right)
    \nonumber
  \end{align}
  for any pair of $\bar{\xi} \in \real^p$ and $\bar{\varphi} \in \real^{|\hat{\I}_\eta|}$.
  We have that
  \begin{align}
  &\frac{1}{2n} \norm{\X \theta^\ast - \X \hat{\theta} + \W \gamma^\ast - \X \hat{\xi} - \Z_{\hat{\I}_\eta} \hat{\varphi}}_2^2 + \lambda\left( \snorm{\hat{\xi}}_1 + \snorm{\hat{\varphi}}_1 \right) 
  \nonumber\\
    \leq & \frac{1}{n} \bvarepsilon^T \X \left( \hat{\xi} - \bar{\xi} \right) + \frac{1}{n} \bvarepsilon^T \Z_{\hat{\I}_\eta} \left( \hat{\varphi} - \bar{\varphi} \right) + \frac{1}{2n} \norm{\X \theta^\ast - \X \bar{\theta} + \W \gamma^\ast - \X \bar{\xi} -  \Z_{\hat{\I}_\eta} \bar{\varphi}}_2^2 + \lambda\left( \norm{\bar{\xi}}_1 + \norm{\bar{\varphi}}_1 \right) \nonumber\\
    \leq &\frac{1}{n} \max_j |\bvarepsilon^T \X_j| \left( \snorm{\hat{\xi}}_1 + \snorm{\bar{\xi}}_1 \right) + \frac{1}{n} \max_{\ell \in \hat{\I}_\eta} |\bvarepsilon^T  \Z_{\ell}| \left(\snorm{\hat{\varphi}}_1 + \snorm{\bar{\varphi}}_1 \right) \nonumber\\
         &+ \frac{1}{2n} \norm{\X \theta^\ast - \X \hat{\theta} + \W \gamma^\ast - \X \bar{\xi} - \Z_{\hat{\I}_\eta} \bar{\varphi}}_2^2 + \lambda \left( \snorm{\bar{\xi}}_1 + \snorm{\bar{\varphi}}_1 \right).
         \nonumber
  \end{align}
  Consider the events
  \begin{align}
    \T_1 = \left\{ \frac{1}{n} \max_{1 \leq j \leq p} |\bvarepsilon^T \X_j| \leq \lambda \right\} \quad \quad
    \T_2 = \left\{ \frac{1}{n} \max_{\ell \in \hat{\I}_\eta} |\bvarepsilon^T \Z_\ell| \leq \lambda \right\}.
    \label{eq:events_What_rinter}
  \end{align}
  On $\T_1 \cap \T_2$, we have
  \begin{align}
    \frac{1}{2n} \norm{\X \theta^\ast - \X \hat{\theta} + \W \gamma^\ast -  \X \hat{\xi} - \Z_{\hat{\I}_\eta}\hat{\varphi}}_2^2 
    \leq \frac{1}{2n} \norm{\X \theta^\ast - \X \hat{\theta} + \W \gamma^\ast - \X \bar{\xi} - \Z_{\hat{\I}_\eta} \bar{\varphi}}_2^2 + 2\lambda \left( \snorm{\bar{\xi}}_1 + \snorm{\bar{\varphi}}_1 \right).
    \nonumber
  \end{align}
  For any $\alpha \geq 0$ and suppose that $\I(\alpha) \subseteq \hat{\I}_\eta$, we define $\bar{\xi} = - \Sigma^{-1} \Phi_{\I(\alpha)} \gamma^\ast_{\I(\alpha)}$ and \begin{align}
    \bar{\varphi}_\ell = \begin{cases}
      &\gamma^\ast_\ell \quad \ell \in \I(\alpha) \\
      &0 \quad \ell \in \hat{\I}_\eta \setminus \I(\alpha).
    \end{cases}
    \nonumber
  \end{align}
  Recall from Section \ref{sec:principle} that $\Z = \W + \X \Sigma^{-1} \Phi$ and $\theta^\ast = \beta^\ast + \Sigma^{-1} \Phi \gamma^\ast$, we have
  \begin{align}
    \X \bar{\xi} = - \X \Sigma^{-1} \Phi_{\I(\alpha)} \gamma^\ast_{\I(\alpha)},
    \nonumber
  \end{align}
  and
  \begin{align}
    \Z_{\hat{\I}_\eta} \bar{\varphi} = \Z_{\I(\alpha)} \gamma^\ast_{\I(\alpha)} = \W_{\I(\alpha)} \gamma^\ast_{\I(\alpha)} + \X \Sigma^{-1} \Phi_{\I(\alpha)} \gamma^\ast_{\I(\alpha)} = \W_{\I(\alpha)} \gamma^\ast_{\I(\alpha)} - \X \bar{\xi}.
    \nonumber
  \end{align}
  Then
  \begin{align}
    \frac{1}{2n} \norm{\X \theta^\ast - \X \hat{\theta} + \W \gamma^\ast - \X \bar{\xi} - \Z_{\hat{\I}_\eta} \bar{\varphi}}_2^2 = \frac{1}{2n} \norm{\X \theta^\ast - \X \hat{\theta} + \W_{\I(\alpha)^C} \gamma^\ast_{\I(\alpha)^C}}_2^2.
    \nonumber
  \end{align}
  Therefore, on $\T_1 \cap \T_2$,
  \begin{align}
    &\frac{1}{2n} \norm{\X \theta^\ast - \X \hat{\theta} + \W \gamma^\ast - \X \hat{\xi} - \Z_{\hat{\I}_\eta} \hat{\varphi}}_2^2 \nonumber\\
    \leq& \frac{1}{2n} \norm{\X \theta^\ast - \X \hat{\theta} + \W \gamma^\ast - \X \bar{\xi} - \Z_{\hat{\I}_\eta} \bar{\varphi}}_2^2 + 2\lambda \left( \snorm{\bar{\xi}}_1 + \snorm{\bar{\varphi}}_1 \right) \nonumber\\
    =& \frac{1}{2n} \norm{\X \theta^\ast - \X \hat{\theta} + \W_{\I(\alpha)^C} \gamma^\ast_{\I(\alpha)^C}}_2^2 + 2\lambda \left( \snorm{\Sigma^{-1} \Phi_{\I(\alpha)} \gamma^\ast_{\I(\alpha)}}_1 + \snorm{\gamma^\ast_{\I(\alpha)}}_1 \right) \nonumber\\
    \leq& \frac{1}{n} \norm{\X \theta^\ast - \X \hat{\theta}}_2^2 + \frac{1}{n} \norm{\W_{\I(\alpha)^C} \gamma^\ast_{\I(\alpha)^C}}_2^2 + 2\lambda \left( \snorm{\Sigma^{-1} \Phi_{\I(\alpha)} \gamma^\ast_{\I(\alpha)}}_1 + \snorm{\gamma^\ast_{\I(\alpha)}}_1 \right) \nonumber.
  \end{align}

\end{proof}

\section{Proof of Theorem \ref{thm:lasso_step1}} \label{app:proof_step1}
We start from the basic inequality that
\begin{align}
  \frac{1}{2n} \snorm{\y - \X \check{\theta}}_2^2 + \lambda \snorm{\check{\theta}}_1 \leq \frac{1}{2n} \snorm{\y - \X \theta^\ast}_2^2 + \lambda \snorm{\theta^\ast}_1,
  \nonumber
\end{align}
which implies that
\begin{align}
  \frac{1}{2n} \snorm{\X \check{\theta} - \X \theta^\ast}_2^2 + \lambda \snorm{\check{\theta}}_1 \leq \frac{1}{n} \left( \check{\theta} - \theta^\ast \right)^T \X^T \left( \W \gamma^\ast + \varepsilon \right) + \lambda \snorm{\theta^\ast}_1.
  \nonumber
\end{align}
The ``empirical process'' part can be bounded by
\begin{align}
  \frac{1}{n} \left| \left( \check{\theta} - \theta^\ast \right)^T \X^T \left( \W \gamma^\ast + \bvarepsilon \right) \right|
  \leq \frac{1}{n} \max_{1 \leq j \leq p} \left| \X_j^T \left( \W \gamma^\ast + \bvarepsilon \right) \right| \snorm{\check{\theta} - \theta}_1 .
  \nonumber
\end{align}
Denote the event 
\begin{align}
  \T = \left\{\frac{1}{n} \max_{1 \leq j \leq p} \left| \X_j^T \left( \W \gamma^\ast + \bvarepsilon \right) \right| \leq \lambda_0 \text{ for some } \lambda_0 > 0 \right\}.
  \nonumber
\end{align}
Then on $\T$, for any $\lambda \geq \lambda_0$,
\begin{align}
  \frac{1}{2n} \snorm{\X \check{\theta} - \X \theta^\ast}_2^2 + \lambda \snorm{\check{\theta}}_1 \leq \lambda \snorm{\check{\theta} - \theta^\ast}_1 + \lambda \snorm{\theta^\ast}_1,
  \nonumber
\end{align}
which further implies the slow rate bound in prediction error, i.e., $\frac{1}{2n} \snorm{\X \check{\theta} - \X \theta^\ast}_2^2 \leq 2\lambda \snorm{\theta^\ast}_1$. 
We now characterize the scale of $\lambda_0$ and the probability that $\T$ holds:

For any $1 \leq j \leq p$,
\begin{align}
  \frac{1}{n} \left| \X_j^T \left( \W \gamma^\ast + \bvarepsilon \right) \right| \leq \frac{1}{n} \left| \X_j^T \W \gamma^\ast \right| + \frac{1}{n} \left| \X_j^T \bvarepsilon \right|.
  \nonumber
\end{align}
We start with $n^{-1} \X_j^T \bvarepsilon$. 
For any $j \in [p]$, $\varepsilon X_j \sim \subw(1)$, with $\E[\varepsilon X_j] = 0$ and $\snorm{\varepsilon X_j}_{\psi_1} \leq \sigma \snorm{X}_{\psi_2}$. So by Theorem \ref{thm:GBO} and a union bound, for any $t > 0$,
\begin{align}
  \Prob \left[ \frac{1}{n} \max_{1 \leq j \leq p}\left| \bvarepsilon^T \X_j \right| \geq C \left( 1  \right) \sigma \norm{X}_{\psi_2} \frac{t^{3/4}}{n^{3/4}} \right] \leq 2pe^{-t}.
  \nonumber
\end{align}
Take $t = (\log p)^{2/3} n^{1/3}$, we have
\begin{align}
  \Prob \left[ \frac{1}{n} \max_{1 \leq j \leq p}\left| \bvarepsilon^T \X_j \right| \geq C \left( 1  \right) \sigma \norm{X}_{\psi_2} \sqrt{\frac{\log p}{n}} \right] \leq 2\exp\left\{\log p - (\log p)^{2/3} n^{1/3} \right\}.
  \nonumber
\end{align}

Similarly, as shown in Appendix \ref{app:example_subweibull}, for each $j \in [p]$, $X_j W^T \gamma^\ast \sim \subw(2/3)$, with $\E[X_j W^T\gamma^\ast] = \Cov(X_j, W^T\gamma^\ast) + \E[X_j] \E[W^T\gamma^\ast] = 0$, and $\snorm{X_j W^T\gamma^\ast}_{\psi_{2/3}} \leq \snorm{X}_{\psi_2} \snorm{W^T \gamma^\ast}_{\psi_1}$.
So by Theorem \ref{thm:GBO} and a union bound, for any $t > 0$, 
\begin{align}
  \Prob \left( \left|\frac{1}{n} \X_j^T \W\gamma^\ast \right|  \geq C(2/3) \snorm{X}_{\psi_{2}} \snorm{W^T\gamma^\ast}_{\psi_1} \frac{t}{n^{3/4}} \right) \leq 2e^{-t}.
  \nonumber
\end{align}
Take $t = (\log p)^{1/2}n^{1/4}$, from a union bound we have
\begin{align}
  \Prob \left( \frac{1}{n} \max_{1 \leq j \leq p}\left|\X_j^T \W\gamma^\ast \right|  \geq C(2/3) \snorm{X}_{\psi_{2}} \snorm{W^T\gamma^\ast}_{\psi_1} \sqrt{\frac{\log p}{n}} \right) \leq 2\exp \left\{\log p - (\log p)^{1/2} n^{1/4}\right\}.
  \nonumber
\end{align}
Summarizing the result, we take
\begin{align}
  \lambda_0 = C \left( \sigma +\snorm{W^T\gamma^\ast}_{\psi_1} \right) \snorm{X}_{\psi_2} \sqrt{\frac{\log p}{n}},
  \nonumber
\end{align}
where $C = \max \left\{ C(1), C(2/3) \right\}$,
and a union bound implies that $\T$ holds with probability greater than $1 -2\exp\left\{\log p - (\log p)^{2/3} n^{1/3} \right\} -  2\exp \left\{\log p - (\log p)^{1/2} n^{1/4}\right\}$.

\section{Proof of Theorem \ref{thm:main}} \label{app:proof_thm_main}
First we note that $\I(\bar{\alpha}) \subseteq \I(\alpha)$ for any $\bar{\alpha} \geq \alpha$.
So if $\I(\alpha) \subseteq \hat{\I}_\eta$ holds for some $\bar{\alpha}$, then from Section \ref{app:proof_prediction},
\begin{align}
&\frac{1}{2n} \norm{\X \theta^\ast - \X \hat{\theta} + \W \gamma^\ast - \X \hat{\xi} - \Z_{\hat{\I}_\eta} \hat{\varphi}}_2^2 \nonumber\\
  \leq & R^2 + \inf_{\bar{\alpha} \geq \alpha} \left\{ \frac{1}{n} \norm{\W_{\I(\bar{\alpha})^C} \gamma^\ast_{\I(\bar{\alpha})^C}}_2^2 + 2\lambda \left( \snorm{\Sigma^{-1} \Phi_{\I(\bar{\alpha})} \gamma^\ast_{\I(\bar{\alpha})}}_1 + \snorm{\gamma^\ast_{\I(\bar{\alpha})}}_1 \right) \right\}.
  \nonumber
\end{align}

Define $\Event_2$ to be the event that \eqref{eq:efficiency} holds, and $\Event_1$ to be the event that \eqref{eq:bound_pred} holds.
We first find the value of $\lambda$ and the corresponding probability such that $\Event_2$ holds. 
Note that for each $\ell \in \hat{\I}_\eta$, it is easy to verify that $\varepsilon Z_\ell \sim \subw(2/3)$, with $\E[\epsilon Z_\ell] = 0$, and $\snorm{\varepsilon Z_\ell}_{\psi_{2/3}} \leq \sigma \snorm{Z}_{\psi_1}$.
So by Theorem \ref{thm:GBO} and a union bound, for any $t > 0$, 
\begin{align}
  \Prob \left[ \frac{1}{n} \max_{\ell \in \hat{\I}_\eta}\left| \bvarepsilon^T \Z_\ell \right| \geq C \left( 2/3  \right) \sigma \norm{Z}_{\psi_1} \frac{t}{n^{3/4}} \right] \leq 2 |\hat{\I}_\eta| e^{-t}.
  \nonumber
\end{align}
Take $t = 2(\log p)^{1/2} n^{1/4}$, we have
\begin{align}
  \Prob \left[ \frac{1}{n} \max_{\ell \in \hat{\I}_\eta}\left| \bvarepsilon^T \Z_\ell \right| \geq 2 C \left( 2/3  \right) \sigma \norm{Z}_{\psi_1} \sqrt{\frac{\log p}{n}} \right] &\leq 2 \exp \left\{\log|\hat{\I}_\eta| - 2 (\log p)^{1/2} n^{1/4} \right\} \nonumber\\
                                                                                                                                                                                    &\leq 2 \exp \left\{ - 2 (\sqrt{\kappa} - 1) \log p \right\},
                                                                                                                                                                                    \nonumber
\end{align}
where the last inequality holds because $\kappa \log p \leq \sqrt{n}$ from Assumption \ref{a:dim}.

Similarly, for any $j \in [p]$, $\varepsilon X_j \sim \subw(1)$, with $\E[\varepsilon X_j] = 0$ and $\snorm{\varepsilon X_j}_{\psi_1} \leq \sigma \snorm{X}_{\psi_2}$. So by Theorem \ref{thm:GBO} and a union bound, for any $t > 0$,
\begin{align}
  \Prob \left[ \frac{1}{n} \max_{1 \leq j \leq p}\left| \bvarepsilon^T \X_j \right| \geq C \left( 1  \right) \sigma \norm{X}_{\psi_2} \frac{t^{3/4}}{n^{3/4}} \right] \leq 2pe^{-t}.
  \nonumber
\end{align}
Take $t = 2(\log p)^{2/3} n^{1/3}$, we have
\begin{align}
  \Prob \left[ \frac{1}{n} \max_{1 \leq j \leq p}\left| \bvarepsilon^T \X_j \right| \geq 2C \left( 1  \right) \sigma \norm{X}_{\psi_2} \sqrt{\frac{\log p}{n}} \right] 
    &\leq 2\exp\left\{\log p - (\log p)^{2/3} n^{1/3} \right\} \nonumber\\
    &\leq 2\exp \left\{ - 2(\kappa^{1/3} - 1) \log p \right\}.
    \nonumber
\end{align}

Finally note that for any $\bar{\alpha} \geq \alpha$, $(W_{\I(\bar{\alpha})^C}^T \gamma^\ast_{\I(\bar{\alpha})^C})^2$ is a $\subw(1/2)$ random variable. By Triangle inequality and Lemma \ref{lem:exp},
\begin{align}
  \frac{1}{n}\norm{\W_{\I(\bar{\alpha})^C} \gamma^\ast_{\I(\bar{\alpha})^C}}_2^2 &\leq \left| \frac{1}{n}\norm{\W_{\I(\bar{\alpha})^C} \gamma^\ast_{\I(\bar{\alpha})^C}}_2^2 - \E \left[\left(W_{\I(\bar{\alpha})^C}^T \gamma^\ast_{\I(\bar{\alpha})^C} \right)^2  \right] \right| + \E \left[\left(W_{\I(\bar{\alpha})^C}^T \gamma^\ast_{\I(\bar{\alpha})^C}\right)^2 \right] \nonumber\\
                                                                                 &\leq \left| \frac{1}{n}\norm{\W_{\I(\bar{\alpha})^C} \gamma^\ast_{\I(\bar{\alpha})^C}}_2^2 - \E \left[\left(W_{\I(\bar{\alpha})^C}^T \gamma^\ast_{\I(\bar{\alpha})^C} \right)^2  \right] \right| + 4 \norm{ \left(W_{\I(\bar{\alpha})^C}^T \gamma^\ast_{\I(\bar{\alpha})^C}\right)^2}_{\psi_{1/2}} \nonumber\\
                                                                                 &\leq  \left| \frac{1}{n}\norm{\W_{\I(\bar{\alpha})^C} \gamma^\ast_{\I(\bar{\alpha})^C}}_2^2 - \E \left[\left(W_{\I(\bar{\alpha})^C}^T \gamma^\ast_{\I(\bar{\alpha})^C}\right)^2  \right] \right| + 4\bar{\alpha} \nonumber.
\end{align}
By Theorem \ref{thm:GBO}, we have that
\begin{align}
  \Prob \left[\left| \frac{1}{n}\norm{\W_{\I(\bar{\alpha})^C} \gamma^\ast_{\I(\bar{\alpha})^C}}_2^2 - \E \left[\left(W_{\I(\bar{\alpha})^C}^T \gamma^\ast_{\I(\bar{\alpha})^C}\right)^2  \right] \right| \geq C \left(1/2\right) \norm{\left(W_{\I(\bar{\alpha})^C}^T \gamma^\ast_{\I(\bar{\alpha})^C}\right)^2}_{\psi_{1/2}} \frac{t^{5/4}}{n^{3/4}} \right] \leq 2e^{-t}.
  \nonumber
\end{align}
Take $t = n^{3/5}$ we have
\begin{align}
  \Prob \left[\left| \frac{1}{n}\norm{\W_{\I(\bar{\alpha})^C} \gamma^\ast_{\I(\bar{\alpha})^C}}_2^2 - \E \left[\left(W_{\I(\bar{\alpha})^C}^T \gamma^\ast_{\I(\bar{\alpha})^C}\right)^2  \right] \right| \geq C \left(1/2\right) \bar{\alpha} \right] \leq 2e^{-n^{3/5}}.
  \nonumber
\end{align}

In summary, by a union bound and $\kappa^{1/2} > \kappa^{1/3}$, we have that
\begin{align}
  \Prob (\Event_1) \geq 1 - 4 p^{ -2 \left( \kappa^{1/3} - 1 \right)} - 2 \exp \left( -n^{3/5} \right),
  \nonumber
\end{align}
with $C_2 = 4(C(1/2) + 1)$.
By Theorem \ref{thm:screening}, Lemma \ref{lem:screening}, we have that
\begin{align}
  \Prob \left( \Event_2 \right) \geq 1 - 8 p^{-2(\kappa^{3/5} - 1)} - 2 p^{-1} - \Prob(\Event_R^C) \geq 1 - 8 p^{-2(\kappa^{1/3} - 1)} - 2 p^{-1} - \Prob(\Event_R^C). \nonumber
\end{align}
Finally, from Theorem \ref{thm:lasso_step1}, we plug in $R =\left(\sigma + \snorm{W^T\gamma^\ast}_{\psi_{1}} \right)^{1/2} \snorm{X}_{\psi_2}^{1/2} n^{-1/4}(\log p)^{1/4} \snorm{\theta^\ast}_2^{1/2}$, with $\Prob(\Event_R^C) \leq 4p^{-2(\kappa^{1/3} - 1)}$, and rearrange terms.
The probability result then follows a union bound on $\Prob (\Event_1 \cap \Event_2) = 1 - \Prob \left( \Event_1^C \cup \Event_2^C \right)$.

\section{Screening property of $\I^\top_k$} \label{app:screening_m}
In Section \ref{sec:method}, we introduced the more computationally viable ``top-m'' strategy \eqref{eq:Ihat_m} in Step 2. In this section, we provide theoretical guarantees of this strategy under certain conditions.
\begin{theorem} \label{cor:screening_m}
  Let 
  \begin{align}
    \I^\top_k = \left\{\ell \in [q]: \E(W_\ell^2) {\gamma_\ell^\ast}^2 \text{ is among the k largest} \right\}.
    \label{eq:I_k}
  \end{align}
  Under Assumption \ref{a:distn} and \ref{a:dim}, if $m \geq k$ and
  \begin{align}
    \min_{\ell \in \I^\top_k} \Psi_{\ell \ell}^{-1/2} |\Cov(Z_\ell, W^T \gamma^\ast)| \geq 
    \max_{\ell \notin \I^\top_k}\Psi_{\ell \ell}^{-1/2} |\Cov(Z_\ell, W^T \gamma^\ast)| + \eta^\ast
    \label{eq:signal_m}
  \end{align}
  where $\eta^\ast$ is in \eqref{eq:eta_star}, then
  \begin{align}
    \I^\top_k \subseteq \hat{\I}^\top_{m} 
    \label{eq:size_Ihat_m}
  \end{align}
  holds with probability greater than $1 - 8 p^{-2(\kappa^{3/5} - 1)} - 2 p^{-1} - \Prob(\Event_R^C)$. 
\end{theorem}
\begin{proof}
  Suppose that $|\omega_1| > |\omega_2| > ... > |\omega_q|$, where $|\omega_\ell| =\overline{\mathrm{sd}}(\r) |\overline{\mathrm{cor}}\left( \Z_\ell, \r \right)|$.
  Then $\hat{\I}_m = [m]$.
  For any $\ell \in \I^\top(k)$, by triangle inequality,
  \begin{align}
    |\omega_\ell| \geq |\omega_\ell^\ast| - |\omega_\ell - \omega_\ell^\ast| \nonumber
  \end{align}
  Now, let $h$ be the largest index such that $h \leq m$ and $h \notin \I^\top_k$. If such $h$ does not exist, then it must hold that $m = k$ and $\hat{\I}^\top_m = \I^\top_k$, and thus the result holds. If such $h$ exists, then
  \begin{align}
    |\omega_{m}| \leq |\omega_{h}| \leq |\omega_{h}^\ast| + |\omega_{h} - \omega_{h}^\ast|
    \leq \max_{h \notin \I^\top_k}|\omega_{h}^\ast| + |\omega_{h} - \omega_{h}^\ast|, \nonumber
  \end{align}
  which implies that
  \begin{align}
    |\omega_\ell| - |\omega_{m}| \geq & |\omega_\ell^\ast| - \max_{h \notin \I^\top_k}|\omega_{h}^\ast| - |\omega_\ell - \omega_\ell^\ast| - |\omega_{h} - \omega_{h}^\ast| \geq \min_{\ell \in \I^\top_k} |\omega_\ell^\ast| - \max_{\ell \notin \I^\top_k} |\omega_\ell^\ast| - 2 \max_\ell |\omega_\ell - \omega^\ast_\ell |.
    \nonumber
  \end{align}
  Then by assumption \eqref{eq:signal_m} and Lemma \ref{lem:screening}, we have that $|\omega_\ell| \geq |\omega_{m}|$ with certain probability, which implies that $\ell \in \hat{\I}^\top_m$.
\end{proof}

\section{Details of Section \ref{subsec:example}} \label{app:example}
\subsection{Gaussian case, with a single interaction} \label{app:gaussian}
We assume that $X \sim N(0, \Sigma)$ and there is only one true interaction, e.g., $\supp(\gamma^\ast) = \left\{ \tau(1, 2) \right\}$. We discuss the validity of the condition that $\eta(\bar{\alpha}) \geq \eta^\ast$ in Theorem \ref{thm:screening}.

Recall that in the Gaussian case, we have $W = Z$ and $\theta^\ast = \beta^\ast$. Without loss of generality, we assume that $\Sigma_{jj} = 1$ for all $j = 1, \dots, p$, so that for any pair of variable $X_j$ and $X_k$, their covariance $\sigma_{jk}$ equals their correlation coefficient $\rho_{jk}$. Furthermore, we have
\begin{align}
      &\E(Z_{\tau(j, k)}^2) = \E(X_j^2 X_k^2) = \sigma_{jj}^2 \sigma_{kk}^2 + 2 \sigma_{jk}^2 = 1 + 2 \rho_{jk}^2 \nonumber \\
      & \Var(Z_{\tau(j, k)}) = \E(Z_{\tau(j, k)}^2) - \E(Z_{\tau(j, k)})^2 = 1 + 2 \rho_{jk}^2 - \E(X_j \ast X_k)^2 = 1 + \rho_{jk}^2 \nonumber.
\end{align}
Also note that for any $(t, s) \in [p] \times [p]$,
\begin{align}
  \Cov(Z_{\tau(j, k)}, Z_{\tau(t, s)}) &= \E\left[ Z_{\tau(j, k)} Z_{\tau(t, s)} \right] - \E(Z_{\tau(j, k)}) \E (Z_{\tau(t, s)}) \nonumber\\
                                       &= \E\left( X_j X_k X_{t} X_s \right) - \E\left( X_j X_k \right) \E\left( X_{t} X_s \right) \nonumber\\
                                       &= \sigma_{jk} \sigma_{t s} + \sigma_{j t} \sigma_{ks} + \sigma_{js} \sigma_{k t} - \sigma_{jk} \sigma_{t s} \nonumber\\
                                       &= \rho_{jt}\rho_{ks} + \rho_{js} \rho_{k t}. \nonumber
\end{align}
With $\supp(\gamma^\ast) = \{(1, 2)\}$, for any $\A \subseteq [q]$, $W_{\A^C}^T \gamma^\ast_{\A^C} = X_1 X_2 \gamma^\ast_{\tau(1, 2)}$ if $\tau(1, 2) \notin \A$, and $W_{\A^C}^T \gamma^\ast_{\A^C} = 0$ if $\tau(1, 2) \in \A$. Recall that $\Sigma_{11} = \Sigma_{22} = 1$, we have
\begin{align}
  \E \left[ \exp \left( \frac{3|X_1 X_2 \gamma^\ast_{\tau(1, 2)}|}{8{\gamma^\ast_{\tau(1,2)}}}  \right) \right] 
  \leq &\E \left[ \exp \left( \frac{3 X_1^2 + 3 X_2^2}{16}\right) \right] \nonumber\\
  \leq & \frac{1}{2} \E \left[ \exp \left(\frac{3X_1^2}{8}\right) \right] + \frac{1}{2} \E \left[ \exp \left(\frac{3 X_2^2}{8}\right) \right] \leq 2. \nonumber
\end{align}
As a result, $\snorm{(W_{\A^C}^T \gamma^\ast_{\A^C})^2}_{\psi_{1/2}} \leq \snorm{W_{\A^C}^T \gamma^\ast_{\A^C}}_{\psi_{1}}^2 \leq \snorm{X_1X_2 \gamma^\ast_{\tau(1, 2)}}^2_{\psi_{1}} = \E[(W_{\A^C}^T \gamma^\ast_{\A^C})^2] = \frac{64}{9} {\gamma^\ast_{\tau(1,2)}}^2$.
On the other hand, by Lemma \ref{lem:exp} we have $(1 + 2 \rho_{jk}^2) {\gamma^\ast_{\tau(1,2)}}^2 = \E[(X_1X_2\gamma^\ast_{\tau(1, 2)})^2] \leq 4\snorm{(W_{\A^C}^T \gamma^\ast_{\A^C})^2}_{\psi_{1/2}}$.
Consequently, $ \frac{1}{4} {\gamma^\ast_{\tau(1,2)}}^2 \leq \snorm{(W_{\A^C}^T \gamma^\ast_{\A^C})^2}_{\psi_{1/2}} \leq \frac{64}{9} {\gamma^\ast_{\tau(1,2)}}^2$. And thus $\snorm{(W_{\A^C}^T \gamma^\ast_{\A^C})^2}_{\psi_{1/2}} = C {\gamma^\ast_{\tau(1, 2)}}^2$ for some constant $C \in [ \frac{1}{4}, \frac{64}{9} ]$.

If $C {\gamma^\ast_{\tau(1,2)}}^2 \leq \alpha$, then by definition \eqref{eq:I_alpha} we have $\I(\alpha) = \emptyset$. 
If $C {\gamma^\ast_{\tau(1,2)}}^2 > \alpha$, then any $\alpha$-important set of interactions should include $\tau(1, 2)$. By definition in \eqref{eq:I_alpha}, $\I(\alpha)$ should be the smallest set that contains $\tau(1, 2)$, which is $\{\tau(1, 2)\}$. In summary, 
\begin{align}
  \I(\alpha) &= \left\{ \tau(j, k) \in [q]: C {\gamma^\ast_{\tau(1,2)}}^2 > \alpha \right\} \nonumber\\
             &= \begin{cases}
               \left\{ \tau(1, 2) \right\} = \supp(\gamma^\ast) & \quad \text{if } \quad C {\gamma^\ast_{\tau(1, 2)}}^2 > \alpha \\
               \emptyset & \qquad \text{otherwise},
             \end{cases}
             \nonumber
\end{align}
and
\begin{align}
  \eta(\alpha) &= \frac{2}{3} \min_{\tau(j, k) \in \I(\alpha)} \frac{1}{\sqrt{1 + \rho_{jk}^2}} \left|\sum_{\tau(t, s) \in \supp(\gamma^\ast)} \left(  \rho_{j t} \rho_{ks} + \rho_{js} \rho_{k t} \right) \gamma^\ast_{\tau(t, s)} \right| \nonumber\\
               &= \begin{cases}
                 \frac{2}{3} \sqrt{1 + \rho_{12}^2} |{\gamma_{\tau(1, 2)}^\ast}|  & \qquad \text{if } \qquad C {\gamma^\ast_{\tau(1, 2)}}^2 > \alpha \\
                 \infty & \qquad \text{otherwise}.
               \end{cases}
               \nonumber
\end{align}

Next we give an upper bound on $\eta^\ast$. First note that, $\snorm{\diag(\Psi)^{-1/2} Z}_{\psi_1} = 1$, and from earlier discussion
\begin{align}
  \snorm{W^T \gamma^\ast}_{\psi_1} = \snorm{W_{\tau(1, 2)} \gamma^\ast_{\tau(1, 2)}}_{\psi_1} \leq 3 |\gamma^\ast_{\tau(1, 2)}|,
  \nonumber
\end{align}
and
\begin{align}
  \max_{\ell} \frac{|\Omega_{\ell}^T \gamma^\ast|}{\sqrt{\Psi_{\ell \ell}}} 
  = \max_{j,k}
  \frac{|\Omega_{\tau(j, k) \tau(1, 2)} \gamma^\ast_{\tau(1,2)}|}{\sqrt{\Psi_{\tau(j, k) \tau(j, k)}}} 
  = \max_{j, k}  \frac{|(\rho_{j1} \rho_{k2} + \rho_{j2} \rho_{k1}) \gamma^\ast_{\tau(1,2)}|}{\sqrt{1 + \rho_{jk}^2}} 
  \leq 2 |\gamma^\ast_{\tau(1, 2)}|.
  \nonumber
\end{align}
Furthermore, by the assumption that $\Sigma_{jj} = 1$, we have $\snorm{X}_{\psi_2} \leq 1$. From \eqref{eq:eta_star}, we have
\begin{align}
  \eta^\ast &\leq K \left[ \left(\snorm{W^T \gamma^\ast}_{\psi_1} + \max_{\ell} \frac{|\Omega_\ell^T \gamma^\ast|}{\sqrt{\Psi_{\ell \ell}}} \right)  \frac{(\log p)^{3/4}}{n^{1/2}} + \norm{\beta^\ast}_1^{1/2} \left( \sigma + 2|\gamma^\ast_{\tau(1, 2)}| \right)^{1/2} \left(\frac{\log p}{n} \right)^{1/4} +  \sigma \frac{(\log p)^{1/2}}{n^{1/2}} \right] \nonumber\\
            &\leq K \left[ 4 \frac{(\log p)^{3/4}}{n^{1/2}}  |\gamma^\ast_{\tau(1, 2)}| + \norm{\beta^\ast}_1^{1/2} \left( \sigma^{1/2} + \sqrt{2}|\gamma^\ast_{\tau(1, 2)}|^{1/2} \right) \left(\frac{\log p}{n} \right)^{1/4} + \sigma \frac{(\log p)^{1/2}}{n^{1/2}} \right].
            \nonumber
\end{align}

By Assumption \ref{a:dim} that $\kappa \log p \leq n^{1/2}$ for some constant $\kappa > 1$, we have that
\begin{align}
  \frac{2}{3} \sqrt{1 + \rho_{12}^2} - 4K \frac{(\log p)^{3/4}}{n^{1/2}} \geq \frac{2}{3}  - \frac{1}{3} \frac{12 K}{n^{1/8}} \frac{(\log p)^{3/4}}{n^{3/8}} 
  \geq \frac{2}{3}  - \frac{1}{3} \frac{12 K }{n^{1/8}\kappa^{3/4}}  \geq \frac{1}{3} \nonumber,
\end{align}
for $n \geq (12K)^8 \kappa^{-6}$.
For the condition that $\eta(\bar{\alpha}) \geq \eta^\ast$ to hold for some $\bar{\alpha}$, it is sufficient to require that
\begin{align}
    &\frac{1}{3} |\gamma^\ast_{\tau(1,2)}| \geq \left(\frac{1}{3} \sqrt{1 + \rho_{12}^2} - 4K \frac{(\log p)^{3/4}}{n^{1/2}}  \right) |{\gamma_{\tau(1, 2)}^\ast}| \nonumber\\
  \geq & K \left[ \sqrt{2C} \snorm{\beta^\ast}_1^{1/2} \left( \frac{\log p}{n}  \right)^{1/4} |\gamma^\ast_{\tau(1, 2)}|^{1/2} + \snorm{\beta^\ast}_1^{1/2} \sqrt{\sigma} \left( \frac{\log p}{n}  \right)^{1/4} + \sigma \left(\frac{\log p}{n}\right)^{1/2} \right].
  \label{eq:quadratic}
\end{align}
A sufficient condition for \eqref{eq:quadratic}, and thus \eqref{eq:eta_star}, to hold is that the signal strength is large enough, i.e., $|\gamma^\ast_{\tau(1, 2)}| \geq r(n, q)^2$, where
\begin{align}
 & r(n, q) := \left[ 18 C K^2 \snorm{\beta^\ast}_{1} \left( \frac{\log p}{n}  \right)^{1/2} + 12K  \snorm{\beta^\ast}_1^{1/2} \sigma^{1/2} \left( \frac{\log p}{n}  \right)^{1/4} + 12K \sigma \left( \frac{\log p}{n}  \right)^{1/2}   \right]^{1/2} \nonumber\\
 &\geq \frac{3K \sqrt{2} }{2} \snorm{\beta^\ast}_1^{1/2} \left( \frac{\log p}{n}  \right)^{1/4} + \left[ \frac{9 K^2}{2} \snorm{\beta^\ast}_1 \left( \frac{\log p}{n}  \right)^{1/2} + 3K \left( \snorm{\beta^\ast}_1^{1/2} \sigma^{1/2} \left( \frac{\log p}{n}  \right)^{1/4} + \sigma \left(\frac{\log p}{n} \right)^{1/2}  \right)  \right]^{1/2} , \nonumber
\end{align}
and the right hand side of the inequality above is the smallest value of $|\gamma^\ast_{\tau(1, 2)}|^{1/2}$ that satisfies \eqref{eq:quadratic}.

\subsection{Independent Bernoulli case} \label{app:bernoulli}
We now consider the case where $\Prob(X_j = 1) = p_j$ and $\Prob(X_j = 0) = 1 - p_j$ for each $j$. And $X_j$'s are independent. Then
\begin{align}
      &\E(Z_{\tau(j, k)}) = \E(X_j X_k) = \begin{cases}
        p_j \qquad &j = k \\
        p_j p_k \qquad &j \neq k
      \end{cases} \nonumber\\
      &\E(Z_{\tau(j, k)}^2) = \E(X_j^2 X_k^2) = \begin{cases}
        p_j \qquad &j = k \\
        p_jp_k \qquad &j \neq k
      \end{cases}\nonumber\\
      & \Sigma_{jk} = \begin{cases}
        p_j (1 - p_j) \qquad & j = k\\
        0 \qquad & j \neq k.
      \end{cases}
      \nonumber
\end{align}

Without loss of generality, assume $t \leq s$. Note that
\begin{align}
  \Phi_{j, \tau(t, s)} =  \Cov(X_j, Z_{\tau(t, s)}) 
    &= \E\left[ X_j Z_{\tau(t, s)} \right] - \E(X_j) \E (Z_{\tau(t, s)}) \nonumber\\
    &= \E\left( X_j X_{t} X_s \right) - \E\left( X_j \right) \E\left( X_{t} X_s \right) \nonumber\\
    &= \begin{cases}
      0 \qquad &j < t < s \\
      p_tp_s(1 - p_t) \qquad &t = j < s \\
      0 \qquad &t < j < s \\
      p_tp_s(1 - p_s) \qquad &t < j = s \\
      0 \qquad &t < s < j\\
      0 \qquad &t = s < j\\
      0 \qquad &t = s > j\\
      p_j (1 - p_j) \qquad &t = s = j. \\
    \end{cases}
\end{align}
Similarly that for any $(t, s) \in [p] \times [p]$,
\begin{align}
  \Psi_{\tau(j,k), \tau(t, s)} = \Cov(Z_{\tau(j, k)}, Z_{\tau(t, s)}) 
    &= \E\left[ Z_{\tau(j, k)} Z_{\tau(t, s)} \right] - \E(Z_{\tau(j, k)}) \E (Z_{\tau(t, s)}) \nonumber\\
    &= \E\left( X_j X_k X_{t} X_s \right) - \E\left( X_j X_k \right) \E\left( X_{t} X_s \right). \nonumber
\end{align}

Now for any $(t, s) \in [p] \times [p]$,
\begin{align}
  W_{\tau(t, s)} &= Z_{\tau(t, s)} - \sum_{j = 1}^p \Phi_{j, \tau(t,s)} \frac{X_j}{\Sigma_{jj}}
  = Z_{\tau(t, s)} - \sum_{j = 1}^p \Phi_{j, \tau(t, s)} \frac{X_j}{p_j (1 - p_j)} \nonumber\\
                 &= 
                 \begin{cases}
                   Z_{\tau(t, t)} - X_t \qquad & t = s \\
                   Z_{\tau(t, s)} - (p_s X_t + p_t X_s)  \qquad & t \neq s. 
                 \end{cases}
\end{align}
For simplicity assume that there is only one interaction, e.g., $\supp(\gamma^\ast) = \{\tau(1, 2)\}$.
For any $\A \subseteq [q]$, $W_{\A^C}^T \gamma^\ast_{\A^C} = (X_1X_2 - p_2 X_1 - p_1 X_2) \gamma^\ast_{\tau(1, 2)}$ if $\tau(1, 2) \notin \A$, and $W_{\A^C}^T \gamma^\ast_{\A^C} = 0$ if $\tau(1, 2) \in \A$. For any value $B > 0$, 
\begin{align}
    &\E \left[ \exp \left( \frac{(X_1X_2 - p_2 X_1 - p_1 X_2)^2}{B}\right) \right] \nonumber\\
  = & p_1 p_2 \exp \left( \frac{(1 - p_1 - p_2)^2}{B} \right) + p_1 (1 - p_2) \exp \left( \frac{p_2^2}{B}  \right) + p_2 (1 - p_1) \exp \left( \frac{p_1^2}{B}  \right) + (1 - p_1)(1 - p_2). \nonumber
\end{align}
Then by definition of sub-Weibull(1/2) norm, $\snorm{(W_{\A^C}^T \gamma^\ast_{\A^C})^2}_{\psi_{1/2}} = C {\gamma^\ast_{\tau(1,2)}}^2$ for some constant $C$ that is the smallest value such that the above equation is bounded by 2.
With similar discussion as in \ref{app:gaussian}, we have
\begin{align}
  \I = \begin{cases}
    \supp(\gamma^\ast) = \{\tau(1, 2)\} &\quad \text{if} \quad C{\gamma^\ast_{\tau(1, 2)}}^2 > \alpha \nonumber\\
    \emptyset &\quad \text{otherwise}.
  \end{cases}    
\end{align}
Note from \eqref{eq:eta_alpha} that
\begin{align}
  \eta (\alpha) &= \frac{2}{3} \Psi_{\tau(1, 2) \tau(1, 2)}^{-1/2} |\Cov(Z_{\tau(1, 2)}, W^T \gamma^\ast)| \nonumber\\
                &= \frac{2}{3} \Psi_{\tau(1, 2) \tau(1, 2)}^{-1/2} |\Cov(Z_{\tau(1, 2)}, W_{\tau(1, 2)} )||\gamma^\ast_{\tau(1, 2)}|.
                \nonumber
\end{align}
We have
\begin{align}
  \Psi_{\tau(1, 2)\tau(1, 2)} 
  = \E \left( X_1 X_2 X_1 X_2 \right) - \E \left( X_1 X_2 \right) \E \left( X_1 X_2 \right) 
  = p_1 p_2 (1 - p_1 p_2),
  \nonumber
\end{align}
and
\begin{align}
  \Cov \left( Z_{\tau(1, 2)}, W_{\tau(1, 2)} \right) 
    &= \Cov \left( Z_{\tau(1, 2)}, Z_{\tau(1, 2)} \right) - p_1 \Cov \left( Z_{\tau(1, 2)}, X_2 \right) - p_2 \Cov \left( Z_{\tau(1, 2)}, X_1 \right) \nonumber\\
    &=p_1 p_2 (1 - p_1 p_2) - p_1^2 p_2 (1 - p_2) -  p_1 p_2^2 (1 - p_1) \nonumber\\
    &=p_1 p_2 \left( 1 + p_1 p_2  - p_1 - p_2 \right).
    \nonumber
\end{align}
Therefore, we have
\begin{align}
  \eta(\alpha) = \begin{cases}
    \frac{2 |1 + p_1 p_2 - p_1 - p_2|}{3(1 - p_1 p_2)} |\gamma^\ast_{\tau(1, 2)}| & \qquad \text{if} \quad C{\gamma^\ast_{\tau(1, 2)}}^2 > \alpha \\
    \infty & \qquad \text{otherwise}.
  \end{cases}
  \nonumber
\end{align}
Then we could follow the same discussion as in Appendix \ref{app:gaussian}.

\end{document}